\newcommand{\HT}{\mathsf{HT}}
\newcommand{\htr}[2]{\HT(#2)(#1)}
\renewcommand{\emptyset}{\varnothing}
\newcommand{\storageone}{}
\renewcommand{\vec}[1]{\mathbf{#1}}
\renewcommand{\epsilon}{\varepsilon}
\newcommand{\blue}[1]{{\color{blue} #1}}
\newcommand{\red}[1]{{\color{red} #1}}
\newcommand{\orange}[1]{{\color{orange} #1}}
\newcommand{\violet}[1]{{\color{violet} #1}}
\newcommand{\magenta}[1]{{\color{magenta} #1}}
\newcommand{\Nat}{\mathbb N}
\newcommand{\defsym}{:=}
\newcommand{\defl}{::=}
\newcommand{\boxP}[1]{[#1]}
\newcommand{\diaP}[1]{\langle#1\rangle}
\newcommand{\comp}{;}
\newcommand{\union}{\cup}
\newcommand{\ST}{\mathsf{ST}}
\newcommand{\st}[2]{\ST(#2)({#1})}
\newcommand{\test}[1]{#1 ?}
\newcommand{\fmi}[1]{\mathit{fm}(#1)}
\newcommand{\plus}[1]{#1^+}
\newcommand{\nf}[1]{\overline{#1}}
\newcommand{\pred}{\mathsf{Pr}}
\newcommand{\rel}{\mathsf{Rel}}
\newcommand{\interp}[2]{#2^{#1}}
\newcommand{\domain}[1]{|#1|}
\newcommand{\TC}{\mathit{TC}}
\newcommand{\RTC}{\mathit{RTC}}
\newcommand{\tc}[3]{\TC ({#1})(#2,#3)}
\newcommand{\rtc}[3]{\RTC ({#1}) (#2,#3)}
\newcommand{\coTC}{\overline\TC}
\newcommand{\coRTC}{\overline\RTC}
\newcommand{\cotc}[3]{\coTC ({#1})(#2,#3)}
\newcommand{\PDL}{\mathrm{PDL}}
\newcommand{\PDLtr}{\PDL^+}
\newcommand{\TCL}{\mathrm{TCL}}
\newcommand{\reset}{\mathsf{id}}
\newcommand{\casezero}[1]{#1_0}
\newcommand{\caseone}[1]{#1_1}
\newcommand{\seqsys}{\mathsf{L}}
\newcommand{\klassische}{\mathsf K}
\newcommand{\lk}{\seqsys \klassische}
\newcommand{\lkeq}{\lk_=}
\newcommand{\pdl}{\mathsf{PD}}
\newcommand{\pdltr}{\pdl^+}
\newcommand{\lpdl}{\seqsys\pdl}
\newcommand{\lpdltr}{\seqsys\pdltr}
\newcommand{\tcl}{\mathsf{TC}}
\newcommand{\rtcl}{\tcl_=}
\newcommand{\ltcl}{ \tcl_G}
\newcommand{\lrtcl}{ \mathsf R\ltcl}
\newcommand{\seqar}{\Rightarrow}
\newcommand{\init}{\mathsf{init}}
\newcommand{\id}{\mathsf{id}}
\newcommand{\wk}{\mathsf{wk}}
\newcommand{\krule}[1]{\mathsf k_{#1}}
\newcommand{\instrule}{\mathsf{inst}}
\newcommand{\str}[2]{\{ #1\}^{#2}}
\newcommand{\bv}{\vec x}
\newcommand{\bx}{\vec x}
\newcommand{\by}{\vec y}
\newcommand{\sq}{\vec{S}}
\newcommand{\infrule}{\mathsf r}
\newcommand{\branch}{\mathcal B}
\newcommand{\htrace}{\mathcal H}
\newcommand{\ced}{S}
\newcommand{\trace}{\tau}
\newcommand{\derivation}[1]{\mathcal{#1}}
\newcommand{\derd}{\derivation D}
\newcommand{\hyper}{\mathsf H}
\newcommand{\htcl}{ \hyper\tcl}
\newcommand{\hrtcl}{ \hyper\rtcl}
\newcommand{\denot}[2]{#2^{#1}}
\newcommand{\M}{\mathcal M}
\newcommand{\A}{\mathcal A}
\newcommand{\An}[1]{\A_{#1}}
\newcommand{\cyclic}{\mathit{cyc}}
\newcommand{\circular}{\mathit{cyc}}
\newcommand{\dercirc}{\vdash_\circular}
\newcommand{\dercyc}{\vdash_\cyclic}
\newcommand{\nwf}{\mathit{nwf}}
\newcommand{\dernwf}{\vdash_\nwf}
\newcommand{\fv}[1]{\mathsf{ann}(#1)}
\newcommand{\qq}{\mathbf{Q}}
\newcommand{\freev}[1]{\mathsf{fv}(#1)}
\newcommand{\card}[1]{|\,#1 \,|}
\newcommand{\hyp}{\mathcal{H}}
\newcommand{\fmd}{\mathcal{M}^\times}
\newcommand{\fbr}{\mathcal{B}^\times}
\newcommand{\fhy}{\mathcal{H}}
\newcommand{\ftr}{\tau^\times}
\newcommand{\fint}{\rho^\times}
\newcommand{\cons}[1]{{#1}}
\newcommand{\deltah}{\delta_\hyp}
\newcommand{\ntuple}{\vec{d}}
\newcommand{\ct}[2]{\mathsf{HT}(#1)(#2)}
\newcommand{\strg}{\ct{\Gamma}{\cons{c}}}
\newcommand{\strd}{\ct{\Delta}{\cons{c}}}
\newcommand{\doubleline}{Pr.~\ref{prop:ht_st_tcl}}
\newcommand{\der}{\mathcal{D}}
\newcommand{\tder}{\ct{\mathcal{D}}{\cons{c}} }
\newcommand{\tcrule}{\TC}
\newcommand{\cotcrule}{\coTC}
\newcommand{\eqrule}{=}
\newcommand{\coeqrule}{\neq}
\newcommand{\diat}[2]{\mathsf{CT}(#1)(#2)}
\newcommand{\bvvar}[1]{\bv_{#1}}
\theoremstyle{plain}
\newtheorem{theorem}{Theorem}[section]
\newtheorem{proposition}[theorem]{Proposition}
\newtheorem{lemma}[theorem]{Lemma}
\newtheorem{example}[theorem]{Example}
\newtheorem{corollary}[theorem]{Corollary}
\theoremstyle{definition}
\newtheorem{definition}[theorem]{Definition}
\newtheorem{remark}[theorem]{Remark}
	\newcommand{\aredf}[2]{\textcolor{black}{\alpha(#1,#2)}}
	\newcommand{\abluef}[2]{\textcolor{black}{a(#1,#2)}}
	\newcommand{\bbluef}[2]{\textcolor{black}{\beta(#1,#2)}}
	\newcommand{\coaredf}[2]{\nf{\alpha}(#1,#2)}
	\newcommand{\bblue}{\textcolor{black}{\beta}}
	\newcommand{\coared}{\nf{\alpha}}
\newcommand{\anupam}[1]{}
\newcommand{\marianna}[1]{}
\newcommand{\todo}[1]{}
\begin{document}
\title{Cyclic Proofs, Hypersequents, and \\ Transitive Closure Logic 
}
%
%
\author{Anupam Das}
\author{Marianna Girlando}
\affil{University of Birmingham}

%
%
%

\maketitle              

\begin{abstract}
We propose a cut-free cyclic system for Transitive Closure Logic (TCL) based on a form of \emph{hypersequents}, suitable for automated reasoning via proof search.
We show that previously proposed sequent systems are cut-free incomplete for basic validities from Kleene Algebra (KA) and Propositional Dynamic Logic ($\PDL$), over standard translations. 
On the other hand, our system faithfully simulates known cyclic systems for KA and $\PDL$, thereby inheriting their completeness results. 
A peculiarity of our system is its richer correctness criterion, exhibiting `alternating traces' and necessitating a 
more intricate 
soundness argument than for traditional cyclic proofs. 
\end{abstract}


\section{Introduction}
\label{sec:intro}

\emph{Transitive Closure Logic} ($\TCL$) is the extension of first-order logic by an operator computing the transitive closure of definable binary relations.
It has been studied by numerous authors, e.g.\ \cite{Immerman87:languages-capture-ccs,Gurevich1988:logic-challenges-cs,Gradel91:on-tcl}, and in particular has been proposed as a foundation for the mechanisation and automation of mathematics \cite{Avron2003:tcl-mechanization}.

Recently, Cohen and Rowe have proposed \emph{non-wellfounded} and \emph{cyclic} systems for $\TCL$ \cite{CohenRowe18:cyclic-tcl,cohen2020non}.
These systems differ from usual ones by allowing proofs to be infinite (finitely branching) trees, rather than finite ones, under some appropriate global correctness condition (the `progressing criterion').
One particular feature of the cyclic approach to proof theory is the facilitation of automation, since complexity of inductive invariants is effectively traded off for a richer proof structure.
In fact this trade off has recently been made formal, cf.~\cite{BerardiTatsuta17:lkid-neq-clkid,Das20:cyclic-arithmetic},
and has led to successful applications to automated reasoning, e.g.\ \cite{BrotherstonGorogiannisPetersen12:cyclist,BrotherstonDistefanoPetersen11:aut-cyc-ent-sep-log,BrotherstonRowe17:aut-cyc-term-sep-log,BrotherstonTellez18:aut-ver-temp-props,BrotherstonTellez20:aut-ver-temp-props}.

In this work we investigate the capacity of cyclic systems to automate reasoning in $\TCL$.
Our starting point is the demonstration of a key shortfall of Cohen and Rowe's system: its cut-free fragment, here called $\ltcl$, is unable to cyclically prove even standard theorems of relational algebra, e.g.\ $(a\cup b)^* = a^*(ba^*)^*$ and  $(aa \cup aba)^+ \leq a^+((ba^+)^+ \cup a)$) (Thm.~\ref{thm:incompleteness}).
An immediate consequence of this is that cyclic proofs of $\ltcl$ do not enjoy cut-admissibility (Cor.~\ref{lrtcl-cut-not-admissible}).
On the other hand, these (in)equations are theorems of Kleene Algebra (KA) \cite{Kozen91:completeness-ka,Krob91:completeness-ka}, a decidable theory which admits automation-via-proof-search thanks to the recent cyclic system of Das and Pous \cite{das2017cut}.

What is more, $\TCL$ is well-known to interpret Propositional Dynamic Logic ($\PDL$), a modal logic whose modalities are just terms of KA, by a natural extension of the `standard translation' from (multi)modal logic to first-order logic (see, e.g., \cite{BlackburnvanBenthem:modal-logic-semantic-perspective:in-handbook,blackburn2002modal}).
Incompleteness of cyclic-$\ltcl$ for $\PDL$ over this translation is inherited from its incompleteness for KA.
This is in stark contrast to the situation for modal logics without fixed points: the standard translation from $K$ (and, indeed, all logics in the `modal cube') to first-order logic actually \emph{lifts} to cut-free proofs for a wide range of modal logic systems, cf.~\cite{miller2015focused,MarinMillerVolpe16:emulating-modal-proof-systems}.

A closer inspection of the systems for KA and $\PDL$ reveals the stumbling block to any simulation: these systems implicitly conduct a form of `deep inference', by essentially reasoning underneath $\exists $ and $\vlan$.
Inspired by this observation, we propose a form of \emph{hypersequents} for predicate logic, with extra structure admitting the deep reasoning required.
We present the cut-free system $\htcl$ and a novel notion of cyclic proof for these hypersequents.
In particular, the incorporation of some deep inference at the level of the rules necessitates an `alternating' trace condition corresponding to \emph{alternation} in automata theory. 

Our first main result is the Soundness Theorem (Thm.~\ref{thm:soundness_tcl}): non-wellfounded proofs of $\htcl$ are sound for \emph{standard semantics}.
The proof is rather more involved than usual soundness arguments in cyclic proof theory, due to the richer structure of hypersequents and the corresponding progress criterion.
Our second main result is the Simulation Theorem (Thm.~\ref{thm:completeness-htcl-pdltr}): $\htcl$ is complete for $\PDL$ over the standard translation, by simulating a cut-free cyclic system for the latter.
This result can be seen as a formal interpretation of cyclic modal proof theory within cyclic predicate proof theory, in the spirit of \cite{miller2015focused,MarinMillerVolpe16:emulating-modal-proof-systems}.

To simplify the exposition, we shall mostly focus on equality-free $\TCL$ and `identity-free' $\PDL$ in this extended abstract, though all our results hold also for the `reflexive' extensions of both logics.
We discuss these extensions in Sec.~\ref{sec:pdl}, and present further insights and conclusions in Sec.~\ref{sec:conclusions}.

\section{Preliminaries}
\label{sec:prelim}

We shall work with a fixed first-order vocabulary consisting of a countable set $\pred$ of unary \emph{predicate} symbols, written $p,q,$ etc., and of 
a countable set $\rel$ of binary \emph{relation} symbols, written $a,b,$ etc.
We build formulas from this language differently in the modal and predicate settings, but
all our formulas may be formally evaluated within \emph{structures}:
\begin{definition}
	[Structures]
	\label{def:semantics_basic}
A \emph{structure} $\mathcal M$ consists of a set $D$, called the \emph{domain} of $\mathcal M$, which we sometimes denote by $\domain \M$; a subset $\interp{\mathcal M} p \subseteq D$ for each $p \in \pred$; and 
a subset $\interp{\mathcal M} a \subseteq D\times D$ for each $a \in \rel$.
\end{definition}
	

As above, we shall generally distinguish the words `predicate' (unary) and `relation' (binary). 
We could include further relational symbols too, of higher arity, but choose not to in order to calibrate the semantics of both our modal and predicate settings.


\subsection{Transitive Closure Logic}
\label{ssec:tcl}
In addition to the language introduced at the beginning of this section, in the predicate setting we further make use of a countable set of \emph{function} symbols, 
written $f^i,g^j,$ etc., where the superscripts $i,j \in \Nat$ indicate the \emph{arity} of the function symbol and may be omitted when it is not ambiguous. 
Nullary function symbols (aka \emph{constant} symbols), are written $\cons{c}, \cons{d} $ etc. 
We shall also make use of
\emph{variables}, written $x,y,$ etc., typically bound by quantifiers.
\emph{Terms}, written $s,t, $ etc., are generated as usual from variables and function symbols by function application.
A term is \emph{closed} if it has no variables.

We consider the usual syntax for first-order logic formulas over our language, with an additional operator for transitive closure (and its dual).
Formally, {$\TCL$ formulas}, written $A,B,$ etc., are generated as follows:
\[
\begin{array}{rl}
     A,B \ \defl\  &  p(t) \mid \bar p(t) 
     \mid  a(s,t) \mid \bar a (s,t) \mid (A\vlan B) \mid (A \vlor B) \mid \forall x A \mid \exists x A \mid \\
     \noalign{\smallskip}
     & \tc{\lambda x,y.A} s t \mid \cotc{\lambda x,y. A} s t 
\end{array}
\]
When variables $x,y$ are clear from context, we may write $\tc{A(x,y)}st$ or $\tc A s t $ instead of $\tc{\lambda x,y.A}st$, as an abuse of notation, and similarly for $\coTC$. 
We write $A[t/x]$ for the formula obtained from $A$ by replacing every free occurrence of the variable $x$ by the term $t$.

\anupam{this remark is too early, it should go in next section}
 \begin{remark}
 	[Formula metavariables]
 	\label{same-formula-metavariables}
 	We are using the same metavariables $A,B,$ etc.\ to vary over both $\PDLtr$ and $\TCL$ formulas.
 	This should never cause confusion due to the context in which they appear.
 	Moreover, this coincidence is suggestive, since many notions we consider, such as duality and satisfaction, are defined in a way that is compatible with both notions of formula.
 \end{remark}

\begin{definition}
	[Duality]
	\label{dfn:rtcl-duality}
For a formula $A$ we define its \emph{complement}, $\bar A$, by:
	\[
	\begin{array}{r@{\ \defsym \ }l}
		\overline{p(t)} & \bar p(t)\\
			\overline {\bar p(t)} & p(t) \\
		\overline{a(s,t)} & \bar a (s,t) \\
			\overline{\bar a (s,t)} & a(s,t)
	\end{array}
	\hspace{0.2cm}
	\begin{array}{r@{\ \defsym \ }l}
	     \overline{\forall x A}& \exists x \bar A \\
	\overline{\exists x A}& \forall x \bar A 
\\
	\overline{A \vlan B} & \bar A \vlor \bar B \\
	\overline{A\vlor B} & \bar A \vlan \bar B \\
\end{array}
\hspace{0.2cm}
\begin{array}{r@{\ \defsym \ }l}
	\overline{\tc A s t} & \cotc{\bar A} s t \\
	\overline{\cotc A s t} & \tc{\bar A} s t
\end{array}
	\]
\end{definition}
%
%
%
We shall employ standard logical abbreviations, e.g.\ $A\vljm B$ for $\bar A \vlor B$.
We may evaluate formulas with respect to a structure, but we need additional data for interpreting function symbols:

\begin{definition}
	[Interpreting function symbols]
	Let $\mathcal M$ be a structure with domain $D$.
	An \emph{interpretation} is a map $\rho$ that assigns to each function symbol $f^n$ a function $D^n \to D$.
	We may extend any interpretation $\rho$ to an action on (closed) terms by setting recursively $\rho (f(t_1, \dots, t_n)) \defsym \rho(f)(\rho(t_1), \dots, \rho(t_n))$.
%
\end{definition}


We only consider \emph{standard} semantics in this work: $\TC$ (and $\coTC$) is always interpreted as the \emph{real} transitive closure (and its dual) in a structure, rather than being axiomatised by some induction (and coinduction) principle.

In order to facilitate the formal definition of satisfaction, namely for the quantifier and reflexive transitive closure cases, we shall adopt a standard convention of assuming among our constant symbols arbitrary parameters from the model $\mathcal M$.
Formally this means that we construe each $v\in D$ as a constant symbol for which we shall always set $\rho (v) = v$.
\begin{definition}
	[Semantics]
	\label{def:semantics_rtcl}
Given a structure $\M$ with domain $D$ and an interpretation $\rho$, the judgement $\M,\rho \models A$ is defined as follows:
\begin{itemize}
 	\item $\M, \rho \models p(t)$ if $\rho (t) \in \denot \M p$.
 	\item $\M, \rho \models \bar p(t)$ if $\rho(t) \notin \denot \M p$.
 	\item $\M, \rho \models a(s,t)$ if $(\rho(s), \rho(t)) \in \denot \M a$.
 	\item $\M, \rho \models \bar a(s,t)$ if $(\rho(s),\rho(t)) \notin \denot \M a$.
 	\item $\M,\rho \models A\vlan B$ if $\M,\rho \models A$ and $\M,\rho \models B$.
 	\item $\M,\rho \models A\vlor B$ if $\M,\rho \models A$ or $\M,\rho \models B$.
 	\item $\M,\rho \models \forall x A$ if, for every $v \in D$, we have $\M,\rho \models A[v/x]$.
 	\item $\M,\rho \models \exists x A$ if, for some $v \in D$, we have $\M,\rho \models A[d/x]$. 
	\item $\M,\rho \models \tc A s t $ if  there are $v_0, \dots, v_{n+1} \in D$  with $\rho(s) = v_0$, $\rho(t)=v_{n+1} $, such that for every $ i\leq n$ we have $\M,\rho \models A(v_i,v_{i+1})$. 
	\item $\M, \rho \models \cotc A s t$ if for all $v_0, \dots, v_{n+1} \in D$  with $\rho(s)=v_0$ and $ \rho(t)=v_{n+1}$, there is  some $ i\leq n$ such that $\M,\rho \models A(v_i,v_{i+1})$. 
\end{itemize}
If $\M,\rho \models A$ for all $\M$ and $\rho$, we simply write $\models A$.
\end{definition}
\begin{remark}
	[$\TC$ and $\coTC$ as least and greatest fixed points]
	\label{rtc-is-mu-cortc-is-nu}
	As expected, we have $\M, \rho \not\models \tc A s t $ just if $\M,\rho \models \cotc {\bar A} s t$, and so the two operators are semantically dual. 
	Thus, $\TC$ and $\coTC$ duly correspond to least and greatest fixed points, respectively, satisfying in any model:
	\begin{eqnarray}
	\label{eq:tc_fixpoint}
	\tc Ast & \iff & A(s,t) \vlor \exists x (A(s,x) \vlan \tc Axt ) \label{eq:tc-fixed-point-formula} \\
	\label{eq:cotc_fixpoint}
	\cotc Ast & \iff & A(s,t) \vlan \forall x (A(s,x) \vlor \cotc Axt ) \label{eq:cotc-fixed-point-formula}
	\end{eqnarray}
We have included both operators as primitive 
so that we can reduce negation to atomic formulas, allowing a one-sided formulation of proofs. 

Let us point out that our $\coTC$ operator is not the same as Cohen and Rowe's transitive `co-closure' operator $\TC^{\mathit{op}}$ in \cite{CohenRowe:co-closure}.
As they already note there, $\TC^{\mathit{op}}$ cannot be defined in terms of $\TC$ (using negations), whereas $\coTC$ is the formal De Morgan dual of $\TC$ and, in the presence of negation, are indeed interdefinable, cf.~\Cref{dfn:rtcl-duality}. 
\end{remark}


\subsection{Cohen-Rowe cyclic system for $\TCL$}
\label{ssec:sequents_tcl}
Cohen and Rowe proposed in \cite{CohenRowe18:cyclic-tcl,cohen2020non} a non-wellfounded sequent system for $\TCL$, which we call $\ltcl$, 
 extending a 
 standard
 sequent calculus $\lkeq$ for first-order logic with equality and substitution by rules for $\TC$ 
inspired by its characterisation as a least fixed point, 
cf.~\eqref{eq:tc-fixed-point-formula}.

\begin{definition}[System] 
A \emph{sequent}, written $\Gamma,\Delta$ etc., is a set of formulas.
The rules of $\ltcl$ are shown in \Cref{fig:seq-calc-fol}. \anupam{add colour-coding to figure too, to be consistent with remainder of paper}
\marianna{moved footnote on original proof system in terms of RTC in sec Differences }
\marianna{put the rules for TC together with first order rules}
%
$\ltcl$-\emph{preproofs} are 
	possibly infinite trees of sequents generated by the rules of $\ltcl$.
	A preproof is \emph{regular} if it has only finitely many distinct sub-preproofs.
	\marianna{moved footnote on subst at the end of next paragraph}
\end{definition}

In \Cref{fig:seq-calc-fol} $\sigma$ is a map (``substitution'') from constants to terms and other function symbols to function symbols of the same arity, extended to terms, formulas and sequents in the natural way.
The substitution rule is redundant for usual provability, but 
facilitates the definition of `regularity' in predicate cyclic proof theory.

The notions of non-wellfounded and cyclic proofs for $\ltcl$ are formulated similarly to those for first-order logic with (ordinary) inductive definitions \cite{brotherston2011sequent}:

\begin{definition}
	[Traces and proofs]
	Given a $\ltcl$ preproof $\derd$ and a branch $\branch = (\infrule_i)_{i \in \omega}$ of inference steps, a \emph{trace} is a sequence of formulas of the form $(\cotc A {s_i}{t_i})_{i\geq k}$ such that for all $i\geq k$ either:
	\begin{itemize}
		\item $\infrule_i$ is not a substitution step and $(s_{i+1}, t_{i+1}) = (s_i,t_i)$; or,
		\item $\infrule_i$ is a $\coTC$ step with principal formula $\cotc A {s_i}{t_i}$ and $(s_{i+1}, t_{i+1}) = (\cons{c},t_i)$, where $\cons{c}$ is the eigenvariable of $\infrule_i$; or, 
		\item $\infrule_i$ is a substitution step with respect to $\sigma$ and $(\sigma (s_{i+1}), \sigma(t_{i+1}) ) = (s_i,t_i)$.
	\end{itemize}
	We say that the trace is \emph{progressing} if the second case above happens infinitely often.
	A $\ltcl$-preproof $\derd$ is a \emph{proof} if each of its infinite branches has a progressing trace.
	If $\derd$ is regular we call it a \emph{cyclic proof}.
	As in the main text, we write $\ltcl \dercyc A$ if there is a cyclic proof in $\ltcl$ of $A$.
\end{definition}

\begin{figure}[t]
	\[
	\vlinf{\id}{}{\Gamma, p(t), \bar p(t)}{}
	\quad 
	\vlinf{\vlor_i}{\text{\footnotesize $i \in \{0,1\}$ }}{\Gamma, A_0\vlor A_1 }{\Gamma, A_i}
	\quad
	\vliinf{\vlan}{}{\Gamma, A\vlan B}{\Gamma, A}{\Gamma, B}
	\quad
	\vlinf{\exists}{}{\Gamma, \exists x A(x)}{\Gamma, A(t)}
	\quad
	\vlinf{\forall}{\text{\footnotesize $\cons{c}$ fresh}}{\Gamma, \forall x A}{\Gamma, A[\cons{c}/x]}
	\]
	\[
	\vlinf{\wk}{}{\Gamma,\Gamma'}{\Gamma}
	\quad
	\vlinf{=}{}{\Gamma, t=t}{}
	\quad
	\vlinf{\neq}{}{\Gamma, s\neq t, A(s)}{\Gamma, A(t)}
	\quad
	\vlinf{\neq}{}{\Gamma, s\neq t, A(t)}{\Gamma, A(s)}
	\quad
	\vlinf{\sigma}{}{\sigma(\Gamma)}{\Gamma}
	\]
	\[
	\vlinf{\TC_0}{}{\Gamma, \tc A s t }{\Gamma, A(s,t)} 
	\qquad
	\vliinf{\TC_1}{}{\Gamma, \tc A s t}{\Gamma, A(s,r)}{\Gamma, \tc A r t}
	\]
	\[
	\vliinf{\coTC}{\text{\footnotesize $\cons{c}$ fresh}}{\Gamma ,\cotc A s t}{\Gamma , A(s,t)}{\Gamma , A(s,\cons{c}), \cotc A {\cons{c}} t}
	\]
	\caption{Sequent calculus $\ltcl$. The first two lines of the Figure contain the rules of the Tait-style sequent system $\lkeq$ for first-order predicate logic with equality. 
	}
	\label{fig:seq-calc-fol}
\end{figure}

\begin{proposition}
	[Soundness, \cite{CohenRowe18:cyclic-tcl,cohen2020non}]
	\label{lrtcl-cyc-sound}
	If $\ltcl \dercyc A$ then $ \models A$.
\end{proposition}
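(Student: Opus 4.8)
The plan is to prove soundness of cyclic $\ltcl$ proofs by the standard contradiction argument from cyclic proof theory: assume that $\ltcl \dercyc A$ via some cyclic proof $\derd$ but that $\not\models A$, and derive a contradiction by exhibiting an infinite branch that has \emph{no} progressing trace, violating the proof condition. First I would set up the contrapositive machinery. Observe that every rule of $\ltcl$ is \emph{locally sound} in the sense that if the conclusion is falsified by some $\M,\rho$ then at least one premise is also falsified by a suitable $\M,\rho'$. For the propositional, quantifier, equality, and substitution rules this is routine, using Definition~\ref{def:semantics_rtcl}; the two $\TC$ rules are sound because of the least-fixed-point unfolding~\eqref{eq:tc-fixed-point-formula}; and the $\coTC$ rule is the one that carries the trace information, since falsifying $\cotc A s t$ means there is a counterexample path, and the fresh eigenvariable $\cons{c}$ can be interpreted as the relevant intermediate point along that path.

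Next I would build a falsifying branch together with an accompanying sequence of counter-interpretations. Starting from the root, which is falsified by hypothesis, I use local soundness at each step to pick a falsified premise, producing an infinite branch $\branch = (\infrule_i)_{i\in\omega}$ and structures/interpretations $(\M,\rho_i)_i$ each falsifying the corresponding sequent. Since $\derd$ is a genuine proof, $\branch$ must carry a progressing trace $(\cotc A {s_i}{t_i})_{i\geq k}$, in which the $\coTC$-unfolding case fires infinitely often. The key point is to track, alongside this trace, the semantic witnesses: because each $\cotc A {s_i}{t_i}$ is falsified, there is a finite counterexample path of points realising the universal statement's failure, and each progressing $\coTC$ step strictly consumes one edge of that path by moving from $(s_i,t_i)$ to $(\cons{c}, t_i)$ with $\cons{c}$ interpreted as the next vertex.

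The heart of the argument, and the step I expect to be the main obstacle, is turning ``progresses infinitely often'' into an actual contradiction. Since $\coTC$ is a \emph{greatest} fixed point / a universally quantified path statement, an infinite sequence of genuine unfoldings of a falsified $\coTC$ formula would build an infinite descending sequence of counterexample witnesses, i.e. an infinite path all of whose edges falsify $A$; but a falsified $\cotc A {s_k}{t_k}$ only asserts the existence of a \emph{finite} such path, so one cannot keep unfolding forever while maintaining falsity. Concretely, I would associate to each $\coTC$ formula in the trace an ordinal or natural-number measure (the length of the shortest falsifying path from the current source to the fixed target $t$), show that non-progressing steps do not increase this measure and that each progressing step strictly decreases it, and thereby obtain an infinite strictly decreasing sequence of naturals—impossible. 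Care is needed here with the substitution steps, which act on the terms $s_i,t_i$ via $\sigma$ but must be reconciled with the fixed semantic target; one checks that the interpretation $\rho_i$ can be adjusted so that the measure is preserved under substitution. This contradiction shows no falsifying branch can carry a progressing trace, so no such branch exists, whence the root cannot be falsified and $\models A$.
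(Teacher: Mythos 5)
Your proof is correct in substance, but it takes a genuinely different route from the paper. The paper does not prove \Cref{lrtcl-cyc-sound} directly at all: it imports the result from Cohen and Rowe, and observes that it is subsumed by the combination of \Cref{simulation-of-cohen-rowe} (every cyclic $\ltcl$ proof is locally simulated by a cyclic $\htcl$ proof) with the paper's main soundness theorem, \Cref{thm:soundness_tcl}. Your argument is instead the direct, self-contained one -- essentially Cohen and Rowe's original proof: build a countermodel branch by local soundness, extract the progressing trace guaranteed by the proof condition, attach to each falsified $\cotc{A}{s_i}{t_i}$ the length of a shortest falsifying $\bar A$-path, and derive a contradiction with well-foundedness of $\Nat$. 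What your approach buys is elementarity and independence: it works entirely in the sequent setting, where a single trace suffices and none of the hypersequent machinery (hypertraces, canonical assignments $\deltah$, alternation of traces within hypertraces) is needed. What the paper's route buys is economy: the hard analytic work is done once, in the strictly more general and more delicate $\htcl$ setting, and the $\ltcl$ statement falls out as a corollary of the simulation.

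One point you should make explicit when writing this out: it is not enough that the countermodel-branch construction interprets the eigenvariable $\cons{c}$ as \emph{some} vertex witnessing falsity of the premiss. If $\cons{c}$ is taken along an arbitrary falsifying path, the shortest-path measure can actually \emph{increase} at a progressing step (prepending the edge $(s_i,\cons{c})$ only bounds the new distance from below, not above). The construction must commit, at each $\coTC$ step, to interpreting $\cons{c}$ as the successor of $s_i$ on a \emph{shortest} falsifying path, so that the measure drops by exactly one; this is precisely condition 2 of \Cref{lem:countermodel-branch} in the paper's $\htcl$ argument. Your proposal defines the measure as shortest-path length and asserts strict decrease, which is only valid under that minimal-witness convention, so the convention needs to be baked into the branch construction rather than recovered afterwards.
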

In fact, this result is subsumed by our main soundness result for $\htcl$ (Thm.~\ref{thm:soundness_tcl}) and its simulation of $\ltcl$ (Thm.~\ref{simulation-of-cohen-rowe}). 
%
%
%
%
In the presence of cut, a form of converse of Prop.~\ref{lrtcl-cyc-sound} holds: cyclic $\ltcl$ proofs 
are `Henkin complete', i.e.\ complete for all models of a particular axiomatisation of $\TCL$ based on (co)induction principles \cite{CohenRowe18:cyclic-tcl,cohen2020non}.
However, the counterexample we present in the next section implies that cut is not eliminable (Cor.~\ref{lrtcl-cut-not-admissible}). 

\subsection{Differences to \cite{CohenRowe18:cyclic-tcl,cohen2020non}}

Our formulation of $\ltcl$ differs slightly from the original presentation in \cite{CohenRowe18:cyclic-tcl,cohen2020non}, but in no essential way.
Nonetheless, let us survey these differences now.

\subsubsection{One-sided vs.\ two-sided.}
Cohen and Rowe employ a two-sided calculus as opposed to our one-sided one, but the difference is purely cosmetic.
Sequents in their calculus are written $A_1, \dots, A_m \seqar B_1, \dots, B_n$, which may be duly interpreted in our calculus as $\bar A_1, \dots, \bar A_m, B_1, \dots, B_n$.
Indeed we may write sequents in this two-sided notation at times in order to facilitate the reading of a sequent and to distinguish left and right formulas.
For this reason, Cohen and Rowe do not include a $\coTC$ operator in their calculus, but are able to recover it thanks to a formal negation symbol, cf.~Dfn.~\ref{dfn:rtcl-duality}.

\subsubsection{$\TC$ vs.\ $\RTC$.}	
Cohen and Rowe's system is originally called $\lrtcl$, rather using a `reflexive' version $\RTC$ of the $\TC$ operator. 
As they mention, this makes no difference in the presence of equality.
Semantically we have $\rtc Ast \iff s=t \vlor \tc Ast$, but this encoding does not lift to proofs, i.e.\ the $\RTC$ rules of \cite{CohenRowe18:cyclic-tcl} are not locally derived in $\ltcl$ modulo this encoding.
However, the encoding $\rtc A st \defsym \tc{(x=y \vlor A)}st$ suffices for this purpose.
\anupam{commented translation of rules above.}
\subsubsection{Alternative rules and fixed point characterisations.}
Cohen and Rowe 
use a slightly different fixed point formula to induce rules for $\RTC$ and $\coRTC$ (i.e.\ $\RTC$ on the left) based on the fixed point characterisation,
\begin{equation}
\rtc Ast \ \iff \ s=t \vlor \exists x (\rtc Asx \vlan A(x,t))
\label{eq:rtc-fixed-point-cohen-rowe}
\end{equation}
%
%
\noindent
decomposing paths `from the right' rather than the left. 
\anupam{commented more details above}
These alternative rules induce analogous notions of trace and progress for preproofs such that progressing preproofs enjoy a similar soundness theorem, cf.~\Cref{lrtcl-cyc-sound}.
The reason we employ a slight variation of Cohen and Rowe's system is
to remain consistent with how the rules of $\lpdltr$ and $\htcl$ are devised later.
To the extent that we prove things \emph{about} $\ltcl$, namely its (cut-free) regular \emph{in}completeness in \Cref{thm:incompleteness}, the particular choice of rules turns out to be unimportant. 
The counterexample we present there is robust: 
it applies to systems with any (and indeed all) of the above rules.

\section{Interlude: motivation from PDL and Kleene Algebra}
\label{sec:interlude}
Given the $\TCL$ sequent system proposed by Cohen and Rowe, why do we propose a hypersequential system? 
Our main argument is that proof search in $\ltcl$ is rather weak, to the extent that cut-free cyclic proofs are unable to simulate a basic (cut-free) system for modal logic $\PDL$ (regardless of proof search strategy).
At least one motivation here is to `lift' the \emph{standard translation} from cut-free cyclic proofs for $\PDL$ to cut-free cyclic proofs in an adequate system for $\TCL$.

\subsection{Identity-free PDL}
\label{ssec:pdltr}
\emph{Identity-free propositional dynamic logic} ($\PDLtr$) is a version of the modal logic $\PDL$ without tests or identity, thereby admitting an `equality-free' standard translation into predicate logic.
Formally, \emph{$\PDLtr$ formulas}, written $ A, B, $ etc., and \emph{programs}, written $ \alpha, \beta, $ etc., are generated by the following grammars:
\begin{eqnarray*}
	A,B &\defl &  p \mid \nf{p} \mid A \vlan B \mid A \vlor B \mid  \boxP{\alpha} A \mid  \diaP{\alpha}A\\
	\alpha, \beta &\defl & a \mid \alpha \comp \beta \mid \alpha \union \beta 
	\mid \alpha^+ 
\end{eqnarray*}
We sometimes simply write $\alpha\beta$ instead of $\alpha;\beta$,
and
$(\alpha)A$ for a formula that is either $\diaP \alpha A$ or $\boxP \alpha A$.

\begin{definition}
	[Duality]
	\label{dfn:pdl-duality}
	For a formula $A$ we define its \emph{complement}, $\bar A$, by:
	\[
	\bar {\bar p} \ \defsym \ p
	\qquad
	\begin{array}{r@{\ \defsym \ }l}
		\overline{A \vlan B} & \bar A \vlor \bar B \\
		\overline{A \vlor B} & \bar A \vlan \bar B
	\end{array}
	\qquad
	\begin{array}{r@{\ \defsym \ }l}
	\overline{\boxP \alpha A} & \diaP\alpha \bar A \\
	\overline{\diaP \alpha A} & \boxP\alpha \bar A
			\end{array}
	\]
\end{definition}

We 
\emph{evaluate} $\PDLtr$ formulas using the traditional relational semantics of modal logic, by associating each program with a binary relation in a structure. 
Again, we only consider {standard} semantics:


\begin{definition}
	[Semantics]
	\label{def:semantics_pdl}
	For structures $\M$ with domain $D$, elements $v \in D$, programs $\alpha$ and formulas $A$,
	we define $\interp \M \alpha \subseteq D\times D$ as follows:
\begin{itemize}
	\item ($\interp {\mathcal M} a$ is already given in the specification of $\mathcal M$, cf.~Dfn.~\ref{def:semantics_basic}).
	\item $\interp{\mathcal M}{(\alpha\comp \beta)} := \{ (u,v) : \text{there is } w\in D \text{ s.t.\ }  (u,w) \in \interp {\mathcal M} \alpha \text{ and } (w,v) \in \interp{\mathcal M} \beta   \}$.
	\item $\interp{\mathcal M}{(\alpha \union \beta)} := \{ (u,v) : (u,v) \in \interp{\mathcal M} \alpha \text{ or } (u,v) \in \interp{\mathcal M} \beta \} $.
	\item $\interp{\mathcal M}{(\alpha^+)} := \{ (u,v) : \text{there are } w_0, \dots , w_{n+1} \in D
		\text{ s.t.\ } u = w_0, v = w_{n+1}
		\text{ and  } \text{for } $ $ \text{every } i\leq n  (w_i,w_{i+1}) \in \interp{\mathcal M} \alpha \}$.\anupam{hack here with line break and braces}
\end{itemize}
 and:
 \begin{itemize}
 	\item $\mathcal M,v \models p$ if $v \in \interp{\mathcal M} p$.
 	\item $\mathcal M,v \models \nf p$ if $v \notin \interp{\mathcal M} p$.
 	\item $\mathcal M , v \models A \vlan B$ if $\mathcal M , v \models A $ and $\mathcal M , v \models B $.
 	\item $\mathcal M , v \models A \vlor B$ if $\mathcal M , v \models A $ or $\mathcal M , v \models  B$.
 	\item $\mathcal M , v \models \boxP{\alpha} A$ if $\forall\, (v,w) \in \interp{\mathcal M} \alpha$ we have $\mathcal M, w \models A$.
 	\item $\mathcal M, v \models \diaP{\alpha } A$ if $\exists\, (v,w) \in \interp{\mathcal M} \alpha$ with $\mathcal M,w \models A$.
 \end{itemize}
If $\M,v \models A$ for all $\M$ and $v\in D$, then we write $\models A$.
\end{definition}

Notw that we are overloading the satisfaction symbol $\models$ here, for both $\PDLtr$ and $\TCL$. This should never cause confusion, in particular since the two notions of satisfaction are `compatible', given that we employ the same underlying language and structures.
In fact such overloading is convenient for relating the two logics, as we shall now see.

\subsection{The standard translation}
\label{ssec:standard_tr}

The so-called ``standard translation'' of modal logic into predicate logic is induced by reading the semantics of modal logic as first-order formulas.
We now give a natural extension of this that interprets $\PDLtr$ into $\TCL$.
At the logical level our translation coincides with the usual one for basic modal logic; our translation of programs, as expected, requires the $\TC$ operator to interpret the $+$ of $\PDLtr$.
\begin{definition}
    \label{dfn:stand-trans-pdl-rtcl}
    \marianna{added B and $\beta$}\anupam{I don't understand why? We are defining $\st x A$ and $\st{x,y}\alpha$, but not $\st x B$ or $\st{x,y}\beta$. I have commented for now}
    For a $\PDLtr$ formula $A$ and program $\alpha$, 
    we define the \emph{standard translations} $\st x A$ and $\st{x,y}\alpha$ as $\TCL$-formulas with free variables $x$ and $x,y$, resp., mutually inductively as follows,
    
    \medskip
    
    \begin{adjustbox}{max width = \textwidth}
    \begin{tabular}{r @{\ $\defsym$ \ } l @{\hspace{-0.3cm}} r @{\ $\defsym$ \ } l}
    $ \st x p  $& $p(x)$ & $ \st{x,y} a$ & $a(x,y)$ \\[0.1cm]
    $ \st x {\bar p}$ &$ \bar p (x)$ & $\st{x,y} {\alpha \union \beta }$  & $\st{x,y} \alpha \vlor \st{x,y} \beta $ \\[0.1cm]
    $ \st x {A \vlor B} $& $\st x A \vlor \st x B$ &  $\st{x,y} {\alpha\comp  \beta }$ & $\exists z (\st{x,z}\alpha \vlan \st{z,y}\beta) $\\[0.1cm]
    $ \st x {A \vlan B}$ & $\st x A \vlan \st x B$ & $\st{x,y}{\alpha^+}$ & $\tc{\ST(\alpha)} x y  $\\[0.1cm]
    $ \st x {\diaP{\alpha}A}$ & \multicolumn{3}{l}{$ \exists y (\st{x,y} \alpha \vlan \st y A)$  } \\[0.1cm]
    $\st x {\boxP \alpha A}$ & $\forall y (\overline{\st{x,y}\alpha} \vlor \st y A)$& \multicolumn{2}{l}{} \\
    \end{tabular}
    \end{adjustbox}
    
    \medskip
    
    \noindent
    where we have written simply $\TC(\ST(\alpha))$ instead of $\TC(\lambda x,y.\st{x,y}\alpha)$.
\end{definition}

It is routine to show that $\overline{\st x A} = \st x {\bar A}$, by structural induction on $A$, justifying our overloading of the notation $\bar A$, in both $\TCL$ and $\PDLtr$.
\anupam{could give details in appendix}
%
Yet another advantage of using the same underlying language for both the modal and predicate settings is that we can state the following (expected) result without the need for encodings, following by a routine structural induction (see, e.g., \cite{blackburn2002modal}):
\begin{theorem}
	\label{thm:standard_t_pdl}
	For $\PDLtr$ formulas $ A $, we have $\mathcal M,v \models A$ iff $\mathcal M \models \st v A$.\anupam{could give proof in appendix}
\end{theorem}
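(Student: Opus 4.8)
The plan is to prove $\mathcal M, v \models A$ iff $\mathcal M \models \st v A$ by structural induction on the $\PDLtr$ formula $A$, and since the modalities of $\PDLtr$ mention programs $\alpha$, the induction on formulas will rest on an auxiliary induction establishing the correctness of the program translation. Concretely, first I would prove the following mutual statement about programs: for all $u,w \in D$, we have $(u,w) \in \interp{\mathcal M}{\alpha}$ iff $\mathcal M \models \st{u,w}{\alpha}$ (reading $u,w$ as the parameter-constants we agreed to adjoin to the model in \Cref{def:semantics_rtcl}, so that $\rho(u)=u$ and $\rho(w)=w$). This program-level claim is the real engine of the proof, and the two inductions should be carried out simultaneously since $\diaP\alpha A$ and $\boxP\alpha A$ intertwine the formula translation with the program translation.

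For the program claim I would proceed by induction on the structure of $\alpha$. The base case $\alpha = a$ is immediate from $\st{u,w}{a} = a(u,w)$ and the definition of $\M,\rho \models a(u,w)$. For $\alpha = \beta \union \gamma$ the union clause of \Cref{def:semantics_pdl} matches the $\vlor$ clause of \Cref{def:semantics_rtcl} directly via the induction hypothesis. For $\alpha = \beta \comp \gamma$, the existence of an intermediate $w' \in D$ in the semantics of composition corresponds exactly to the semantics of $\exists z(\st{u,z}{\beta} \vlan \st{z,w}{\gamma})$, again closing under the IH. The key case is $\alpha = \beta^+$: here I would compare the definition of $\interp{\mathcal M}{(\beta^+)}$ as the set of pairs joined by a finite nonempty $\beta$-path with the semantics of $\tc{\ST(\beta)}{u}{w}$ from \Cref{def:semantics_rtcl}, namely the existence of $v_0,\dots,v_{n+1}$ with $v_0=u$, $v_{n+1}=w$ and $\M \models \ST(\beta)(v_i,v_{i+1})$ for each $i \le n$. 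By the induction hypothesis for $\beta$, each step $\M \models \st{v_i,v_{i+1}}{\beta}$ is equivalent to $(v_i,v_{i+1}) \in \interp{\mathcal M}{\beta}$, so the two path conditions coincide.

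With the program claim in hand, the formula induction is routine. The atomic cases $p$ and $\bar p$ unfold directly, the boolean cases $A\vlan B$ and $A \vlor B$ close under the IH clause-by-clause, and the modal cases use the program claim: $\M, v \models \diaP\alpha A$ asks for some $w$ with $(v,w) \in \interp{\M}{\alpha}$ and $\M, w \models A$, which by the program claim and the IH for $A$ is exactly $\M \models \exists y(\st{v,y}{\alpha} \vlan \st y A)$; the $\boxP\alpha A$ case is dual, matching $\forall y(\overline{\st{v,y}{\alpha}} \vlor \st y A)$ and using the already-noted fact $\overline{\st{v,y}{\alpha}} = \st{v,y}{\bar\alpha}$-style duality together with the program claim applied to the negated relational atom.

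I expect the only genuinely delicate point to be bookkeeping around the $\TC$/$+$ case: one must take care that the indexing $v_0,\dots,v_{n+1}$ in the $\TC$ semantics (paths of length $n+1 \ge 1$) matches the $w_0,\dots,w_{n+1}$ convention for $\beta^+$ (nonempty paths), so that neither side accidentally includes the identity relation — consistent with our working in the \emph{identity-free} fragment. A secondary subtlety is the handling of free variables versus the adjoined parameter-constants: I would fix the reading that $\st v A$ is $\st x A$ with $x$ substituted by the constant $v$, so that the interpretation $\rho$ plays no role beyond $\rho(v)=v$, and the equivalence can then be stated cleanly as $\M,\rho \models \st x A[v/x]$. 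Everything else is a direct translation of one definition into the other, which is why the result is flagged as a routine structural induction.
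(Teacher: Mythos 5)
Your proposal is correct and is essentially the paper's own argument: the paper gives no proof at all, dismissing the theorem as ``a routine structural induction'' (citing Blackburn et al.), and that routine argument is precisely the mutual formula/program induction you describe, with the program-level claim $(u,w)\in\interp{\M}{\alpha}$ iff $\M\models\st{u,w}{\alpha}$ driving the $\diaP{\alpha}A$, $\boxP{\alpha}A$ and $\alpha^+$/$\TC$ cases. The only point to tighten is the $\boxP{\alpha}A$ case: programs have no complement $\bar\alpha$, so the fact you need there is not a duality of the form $\overline{\st{v,y}{\alpha}}=\st{v,y}{\bar\alpha}$ but rather that $\TCL$ complementation is semantic negation, i.e.\ $\M,\rho\models\bar B$ iff $\M,\rho\not\models B$ for $\TCL$ formulas $B$, a routine induction whose $\TC$/$\coTC$ case is exactly the observation in \Cref{rtc-is-mu-cortc-is-nu}.
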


\subsection{Cohen-Rowe system is not complete for $\PDLtr$}
\label{ssec:counterexample}

$\PDLtr$ admits a standard cut-free cyclic proof system $\lpdltr$ (see Sec.~\ref{ssec:sequents_pdltr}) which is both sound and complete (cf.~Thm.~\ref{pdl-soundness-completeness}).
However, a shortfall of $\ltcl$ is that it is unable to cut-free simulate $\lpdltr$.
In fact, we can say something stronger:
\begin{theorem}
    [Incompleteness]
    \label{thm:incompleteness}
    There exist a $\PDLtr$ formula $A$ such that $\models A$ but $\ltcl \not \dercyc \st x A$ (in the absence of cut).
\end{theorem}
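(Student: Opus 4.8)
The plan is to take for $A$ the $\PDLtr$-encoding of a genuinely inductive theorem of Kleene Algebra, for concreteness the inclusion $(aa\cup aba)^+ \leq a^+((ba^+)^+\cup a)$ from the introduction. Writing $\alpha := (aa\cup aba)^+$ and $\beta := a^+((ba^+)^+\cup a)$, and taking a fresh predicate $p$, I set
\[
A \ := \ \boxP{\alpha}\nf p \vlor \diaP{\beta} p ,
\]
i.e.\ $\diaP\alpha p \vljm \diaP\beta p$, which asserts the program inclusion $\alpha \leq \beta$ relativised to $p$. Validity is the easy half: since $\alpha\leq\beta$ holds in Kleene Algebra it holds in every relational model \cite{Kozen91:completeness-ka,Krob91:completeness-ka}, so $\interp\M\alpha\subseteq\interp\M\beta$ in every structure $\M$; hence any $v$ with an $\alpha$-path to a $p$-vertex also has a $\beta$-path to it, giving $\M,v\models A$ and so $\models A$. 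By the standard translation (Thm.~\ref{thm:standard_t_pdl}) this is the same as the validity of $\st x A$.

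For the negative half I would argue by contradiction, assuming a cyclic (hence regular, cut-free) proof $\der$ of
\[
\st x A \ = \ \forall y\,(\overline{\st{x,y}\alpha}\vlor \nf p(y)) \ \vlor\ \exists z\,(\st{x,z}\beta \vlan p(z)),
\]
where $\overline{\st{x,y}\alpha}$ is a $\coTC$-formula while $\st{x,z}\beta$ is built from $\TC$-formulas and existentials. Using the subformula property of cut-free proofs I would restrict attention to sequents over subformulas of $\st x A$, and then observe the decisive structural fact: traces are defined only on $\coTC$-formulas, and $\st x A$ has exactly one $\coTC$-lineage, namely that of the left-hand formula $\overline{\st{x,y}\alpha}$ (the right disjunct is positive and contributes only $\TC$). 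Hence the only candidate progressing trace in $\der$ runs through $\overline{\st{x,y}\alpha}$, so every infinite branch must, via the $\coTC$-rule, repeatedly unfold a \emph{generic} $\alpha$-path, spawning fresh eigenconstants $c_0, c_1, c_2, \dots$ naming its successive vertices, while the base-premises of these $\coTC$-steps are the finite `exits' at which the generic path is declared complete.

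The contradiction should then be extracted at these exits. To close an exit leaf one must eventually discharge the right disjunct $\exists z(\st{x,z}\beta\vlan p(z))$ by exhibiting a $\beta$-path from $x$ to the $p$-vertex at the current end of the $\alpha$-path; since the outermost structure of $\beta$ is the composition $a^+\comp((ba^+)^+\cup a)$, the $\exists$-rule forces $\der$ to commit, \emph{as a concrete term}, to the vertex at which the $a^+$-prefix ends, and $\ltcl$ offers no way — neither cut nor inference beneath $\exists$ and $\vlor$ — to defer or revise this commitment. Now regularity enters: a cyclic proof loops back to identical sequents, so the sub-derivation handling the exits is a single finite template reused at \emph{every} unfolding depth, and must therefore witness the $\beta$-path correctly for $\alpha$-paths of \emph{every} length. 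A pumping argument on generic inputs (for instance paths $(aa)^n$, or $(aa)^n\,aba\cdots$ whose first $b$ occurs only after $\sim 2n$ vertices) shows that the position of the committed witness relative to the fixed start $x$ grows without bound, so no bounded-memory template can supply it for all $n$; two inputs indistinguishable to the template but demanding different commitments then yield an exit leaf that is neither an axiom nor further reducible, contradicting that $\der$ is a proof.

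I expect the genuine obstacle to be exactly this last step: turning the informal slogan `bounded memory cannot locate an unboundedly deferred witness' into a rigorous extraction of a finite-state witnessing strategy from a regular proof, together with a pumping lemma refuting it for the chosen $\alpha,\beta$ — all while accounting for admissible rule permutations, weakenings, and substitutions so that the analysed branch is genuinely forced. The robustness asserted after Thm.~\ref{thm:incompleteness} — that the same $A$ defeats the $\RTC/\coRTC$ variants and the right-decomposing rules of \cite{CohenRowe18:cyclic-tcl,cohen2020non} — should drop out because the argument depends only on the eager, irrevocable nature of existential witnessing and on the structural mismatch between $\alpha$ and $\beta$, not on the precise bottom-up shape of any individual $\TC$- or $\coTC$-rule.
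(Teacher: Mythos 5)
Your choice of counterexample, the validity argument, and the guiding intuition (the RHS existential must commit to a concrete witness whose position along the $\alpha$-path grows without bound, which a regular proof cannot accommodate) all match the paper. But the heart of the incompleteness proof is exactly the step you defer to a ``pumping argument on generic inputs,'' and you correctly flag it yourself as the genuine obstacle: as written, there is no rigorous content there. The difficulty is real. A cyclic proof is not a finite-state machine reading the $\alpha$-path: sequents are identified only up to substitution, the $\exists$ rule may instantiate with arbitrary terms (not just the eigenconstants already present), and weakening and rule permutations mean there is no canonical ``template handling the exits'' to pump against. Extracting a bounded-memory witnessing strategy from regularity and then refuting it is not a routine lemma; it is the whole problem.

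The paper closes this gap with a semantic device rather than a combinatorial one. First (Lem.~\ref{reduction-to-relational-tautology}, via the inversion and predicate-admissibility lemmas, Prop.~\ref{inversions-lrtcl} and Prop.~\ref{predicate-admissibility-lrtcl}) provability of $\st x A$ is reduced to provability of the purely relational sequent $\tc{aa \vlor aba}cd \seqar \exists z (a^+(c,z) \vlan ({(ba^+)^+ \cup a})(z,d))$; your appeal to the subformula property does not substitute for this, and keeping $p$ around would complicate the next step. Then, for each $n$, one builds an adversarial model $\An n$ (Dfn.~\ref{adversarial-model}): an $a$-chain of length $2n$ followed by a single $b$-edge. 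Along the branch that always takes the right premiss of the left-$\TC$ rule and always chooses $aa$, every LHS formula is \emph{true} in $\An n$ under the intended interpretation of the eigenconstants; since every node of a proof roots a proof and is therefore \emph{valid} (soundness, Prop.~\ref{lrtcl-cyc-sound}), the RHS existential cannot be instantiated before the $n$-th $\coTC$-unfolding --- any earlier instantiation would produce a sequent false in $\An n$. This forces $n$ consecutive steps whose conclusions are pairwise distinct up to substitution (Lem.~\ref{branch-no-repetitions-until-arbitrary}); since $n$ was arbitrary, K\"onig's Lemma yields an infinite repetition-free branch, contradicting regularity directly --- no extraction of a finite-state strategy, and no pumping lemma, is ever needed. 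Note also that this argument quantifies over \emph{all} branches of a putative proof by soundness-forcing a particular one, whereas your proposal analyses only branches arising from $\coTC$-progress and ``exit'' leaves, which by itself does not exclude proofs organised differently.
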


%
This means not only that $\ltcl$ is unable to locally cut-free simulate the rules of $\lpdltr$, but also that there are some validities for which there are no cut-free cyclic proofs at all in $\ltcl$.
One example of such a formula is:
\begin{equation}
    \label{eq:counterexample-formula-pdltr}
    \diaP{(aa\cup aba)^+} p \vljm \diaP{a^+((ba^+)^+ \cup a)}p
\end{equation}
This formula is derived from the well-known $\PDL$ validity $\diaP{(a\cup b)^*}p \vljm \diaP{a^*(ba^*)^*}p$
by identity-elimination.
This in turn is essentially a theorem of relational algebra, namely $(a\cup b)^* \leq  a^*(ba^*)^*$, which is often used to eliminate $\cup$ in (sums of) regular expressions.
The same equation was (one of those) used by Das and Pous in \cite{das2017cut} to show that the sequent system $\mathsf{LKA}$ for Kleene Algebra is cut-free cyclic incomplete.

\marianna{Suggestion: ``consequent'' instead of ``$RHS$'' and ``antecedent'' instead of ``LHS''} 
In the remainder of this Section, we shall give a proof of Thm.~\ref{thm:incompleteness}. The argument is 
 much more involved than the one from \cite{das2017cut}, 
due to the fact we are working in predicate logic, but the underlying basic idea is similar.
At a very high level, 
the RHS 
of \eqref{eq:counterexample-formula-pdltr} (viewed as a relational inequality) is translated to an existential formula  $\exists z(\st{x,z}{a^+} \vlan \st {z,y}{(ba^+)^+\cup a}$ that, along some branch (namely the one that always chooses $aa$ when decomposing the LHS of \eqref{eq:counterexample-formula-pdltr}) can never be instantiated while remaining valid.
This branch witnesses the non-regularity of any proof.


\subsubsection{Some closure properties for cyclic proofs.}
Demonstrating that certain formulas do \emph{not} have (cut-free) cyclic proofs is a delicate task, made more so by the lack of a suitable model-theoretic account (indeed, cf.~\Cref{lrtcl-cut-not-admissible}).
In order to do so formally, we first develop some closure properties of cut-free cyclic provability.

\begin{proposition}
	[Inversions]
	\label{inversions-lrtcl}
	We have the following:
	\begin{enumerate}
		\item\label{or-inversion-lrtcl} If $\ltcl \dercyc \Gamma, A\vlor B$ then $\ltcl \dercyc \Gamma, A,B$.
		\item\label{and-inversion-lrtcl} If $\ltcl \dercyc \Gamma, A\vlan B$ then $\ltcl \dercyc \Gamma, A$ and $\ltcl \dercyc \Gamma, B$.
		\item\label{forall-inversion-lrtcl} If $\ltcl \dercyc \Gamma, \forall x A(x)$ then $\ltcl \dercyc \Gamma, A(\cons{c})$, as long as $\cons{c}$ is fresh.
	\end{enumerate}
\end{proposition}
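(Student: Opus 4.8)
The plan is to prove all three statements by a uniform proof transformation that traces the principal formula upward through a given cyclic proof and ``opens it up,'' checking in each case that the result remains a \emph{progressing} and \emph{regular} preproof. I would treat the disjunction case in full and obtain the other two by analogous surgery.

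First I would fix a cyclic proof $\derd$ of $\Gamma, A \vlor B$ and distinguish the indicated occurrence of $A \vlor B$. Tracing this occurrence upward (following it unchanged through non-principal steps, and through substitution steps $\sigma$ as $\sigma(A) \vlor \sigma(B)$), I define a new preproof $\derd'$ of $\Gamma, A, B$ by replacing, at every sequent below the point where the tracked occurrence becomes principal, the disjunction by its two disjuncts and copying each rule into this modified context. Along any single branch the tracked occurrence is principal \emph{at most once} — it is only ever carried upward or decomposed, never recreated — so either it survives forever and is simply split into $A,B$ throughout, or it is principal in a unique $\vlor_i$ step with premise $\Gamma', A_i$. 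In the latter case I splice in the original subproof of $\Gamma', A_i$ unchanged and reach the target $\Gamma', A, B$ by a single application of $\wk$.

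The two things to verify are progress and regularity of $\derd'$. For progress: the construction never alters a $\coTC$ formula nor the parameters $s,t$ of any trace, and along each branch it replaces at most one $\vlor_i$ step by a $\wk$ step; the deleted $\vlor_i$ carries trace formulas unchanged, and the inserted $\wk$ only removes the extra disjunct (never a $\coTC$ formula on a trace). Hence every progressing trace of $\derd$ induces one of $\derd'$ on the corresponding branch, and no infinite branch is made finite. For regularity: the only datum the transformation tracks at each node is \emph{which} occurrence (if any) is the tracked descendant of $A \vlor B$, a finite-state annotation over the finitely many nodes of the cyclic representation of $\derd$, above each decomposition point of which we merely plug in an unchanged subproof of $\derd$; thus $\derd'$ has finitely many distinct subproofs.

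The conjunction case is similar but simpler: tracing $A \vlan B$ and targeting, say, $\Gamma, A$, I replace the tracked occurrence by $A$ throughout and, at the unique $\vlan$ step with premises $\Gamma', A$ and $\Gamma', B$, discard the right subproof and splice in the unchanged subproof of $\Gamma', A$ (discarding branches can only preserve the progress condition); the target $\Gamma, B$ is symmetric, and no weakening is needed. For the universal case I trace $\forall x A(x)$ and replace it by $A(\cons c)$; at the unique $\forall$ step, whose premise is $\Gamma', A[\cons d/x]$ for the fresh eigenvariable $\cons d$, I uniformly rename $\cons d$ to the globally fresh constant $\cons c$ in the subproof above, obtaining a proof of $\Gamma', A[\cons c/x]$. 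I expect the main obstacle to lie precisely here, together with the preservation arguments in general: one must confirm that renaming the eigenvariable respects the freshness side-conditions of every $\forall$ and $\coTC$ step above it, commutes with the intervening substitution steps, and preserves both trace progress and regularity. These are the points I would spell out carefully, the remaining rule cases being routine.
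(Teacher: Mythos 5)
Your proposal follows essentially the same route as the paper's own proof: replace the traced ancestors of the distinguished formula throughout, handle the unique step where it becomes principal by splicing in the unchanged subproof (with a weakening for the $\vlor$ case, selection of the left/right subproof for $\vlan$, and an eigenvariable renaming for $\forall$), and argue regularity via finite-state bookkeeping over the cyclic representation. The only detail you leave implicit --- which the paper treats explicitly --- is the case where the traced formula is weakened away rather than surviving forever or becoming principal, but this is harmless since the weakening step simply absorbs the replacement.
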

\begin{proof}[sketch]
	All three statements are proved similarly.
	
	For \Cref{or-inversion-lrtcl}, replace every direct ancestor of $A\vlor B$ with 
	$A,B$.
	The only critical steps are when $A\vlor B$ is principal, in which case we delete the step, or is weakened, in which case we apply two weakenings, one on $A$ and one on $B$.
	If the starting proof had only finitely many distinct subproofs (up to substitution), say $n$, then the one obtained by this procedure has at most $2n$ distinct subproofs (up to substitution), since we simulate a weakening on $A\vlor B$ by two weakenings.
	
	For \Cref{and-inversion-lrtcl}, replace every direct ancestor of $A\vlan B$ with $A$ or $B$, respectively.
	The only critical steps are when $A\vlan B$ is principal, in which case we delete the step and take the left or right subproof, respectively, or is weakened, in which case we simply apply a weakening on $A$ or $B$, respectively.
	The proof we obtain has at most the same number of distinct subproofs (up to substitution) as the original one.
	
	For \Cref{forall-inversion-lrtcl}, replace every direct ancestor of $\forall x A(x)$ with $A(\cons{c})$.
	The only critical steps are when $\forall x A(x)$ is principal, in which case we delete the step and rename the eigenvariable in the remaining subproof everywhere with $\cons{c}$, or is weakened, in which case we simply apply a weakening on $A(\cons{c})$. 
		The proof we obtain has at most the same number of distinct subproofs (up to substitution) as the original one.
\end{proof}

\begin{proposition}
	[Predicate admissibility]
	\label{predicate-admissibility-lrtcl}
	Suppose $\ltcl \dercyc \Gamma,  p(t)$ or $\ltcl \dercyc \Gamma, \bar p(t)$, where $\bar p$ or $p$ (respectively) does not occur in $\Gamma$.
	Then it holds that $\ltcl \dercyc \Gamma$.
\end{proposition}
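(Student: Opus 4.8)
The plan is to trace the distinguished occurrence of $p(t)$ upward through the cyclic proof and erase it (together with all of its descendants) from every sequent, exactly in the style of the inversion lemmas (\Cref{inversions-lrtcl}). The argument rests on two observations: first, that a predicate atom can be principal in no rule of $\ltcl$ other than the identity axiom $\id$; and second, that $\id$ on the pair $p,\bar p$ never occurs in the proof at all, because $\bar p$ does not occur in the endsequent. I treat the displayed case $\ltcl \dercyc \Gamma, p(t)$ with $\bar p \notin \Gamma$; the case of $\bar p(t)$ with $p\notin\Gamma$ is entirely symmetric.

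For the second observation I would first establish a global \emph{no-$\bar p$} claim: if $\bar p$ does not occur (as a subformula) anywhere in the endsequent $\Gamma, p(t)$, then $\bar p$ occurs in no sequent of the proof. This follows by induction on the distance from the root, since for every rule of $\ltcl$ the set of predicate symbols occurring in any premise is contained in the set occurring in the conclusion: the propositional rules replace a principal formula by immediate subformulas; $\exists$, $\forall$ and $\neq$ only instantiate or rename terms; the substitution rule $\sigma$ only acts on function symbols; weakening only deletes formulas upward; and the $\TC$/$\coTC$ rules expose only subformulas of their principal fixed-point formula. Since $\bar p$ is absent from the root (note that $p(t)$ itself contains no occurrence of $\bar p$), it is absent from every node. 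In particular, no instance of $\id$ in the proof is principal on $p$ and $\bar p$.

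With this in hand I would define the transformation: delete every descendant of the marked $p(t)$ from each sequent in which it appears, leaving the tree of inference nodes otherwise intact. Descendants of $p(t)$ are always atoms $p(t')$ for some term $t'$, since substitution and the $\neq$ rule alter only terms and never the predicate symbol; hence the only rule in which such a descendant could be principal is $\id$, which by the no-$\bar p$ claim is never principal on $p$. Whenever $p(t')$ appears in a leaf (an $\id$ axiom on some $q\neq p$, or an $=$ axiom) it is therefore a side formula, so its deletion leaves a valid axiom. In every non-leaf step the marked atom likewise sits in the context and is simply erased from conclusion and premises alike, which keeps the rule applicable to the smaller sequents. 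The one step that changes shape is the $\neq$ rule when its principal formula is the marked atom, i.e.\ $\Gamma, s\neq t, p(s)$ derived from $\Gamma, p(t)$; after deleting $p(s)$ and $p(t)$ this becomes a derivation of $\Gamma, s\neq t$ from $\Gamma$, which I replace by a weakening $\wk$.

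Finally I would verify that the result is a genuine cyclic proof of $\Gamma$. The endsequent becomes $\Gamma$ by construction, and regularity is preserved by the same bounded-blow-up counting as in \Cref{inversions-lrtcl}, the transformed subproof at a node being determined by the original subproof together with the (finitely many possible) sets of marked occurrences in its root sequent. The progress condition is preserved because traces run along $\coTC$ formulas, which are never among the atoms we delete: each infinite branch of the transformed proof is the image of an infinite branch of the original, and its progressing $\coTC$-trace carries over verbatim. The only relabelling we perform, replacing a $\neq$ step by a $\wk$ step, turns one non-substitution step into another and leaves all side $\coTC$ formulas and their terms untouched, so the trace clauses still hold. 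I expect the main subtlety to be precisely this bookkeeping for set-based sequents, namely keeping track of which occurrences are marked across substitution steps and across the cyclic back-edges, rather than any single hard step; it is the no-$\bar p$ claim that makes the deletion safe.
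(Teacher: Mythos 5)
Your proposal is correct and follows essentially the same route as the paper's proof: erase all upward copies of the marked atom, observing that by the assumption on $\Gamma$ and the subformula property no identity step on $p$/$\bar p$ can occur, so the deletion never breaks an axiom. Your treatment is more detailed than the paper's sketch — in particular you explicitly handle the $\neq$ rule with the marked atom principal (replaced by $\wk$) and check preservation of regularity and the progress condition — but these are elaborations of the same argument, not a different one.
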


\begin{proof}[sketch]
	Delete every ancestor of $p(t)$ or $\bar p(t)$, respectively.
	The only critical case is when one of the formulas is weakened, in which case we omit the step.
	Note that there cannot be any identity on $p$, due to the assumption on $\Gamma$, and by the subformula property.
\end{proof}

\subsubsection{Reducing to a relational tautology.}
Here, and for the remainder of this section, we shall simply construe $\PDLtr$ programs $\alpha$ and formulas $A$ as $\TCL$ formulas with two free variables and one free variable, respectively, by identifying them with their standard translations $\st {x,y} \alpha$ and $\st x A$, respectively.
This modest abuse of notation will help suppress much of the notation in what follows.

\begin{lemma}
	\label{reduction-to-relational-tautology}
	If $\ltcl \dercyc  ({\diaP{(aa \cup aba)^+}p \vljm \diaP{a^+((ba^+)^+\cup a)}p})(c)$ then also $\ltcl \dercyc {(aa \cup aba)^+}(c,d) \vljm ({a^+((ba^+)^+\cup a)})(c,d)$.
\end{lemma}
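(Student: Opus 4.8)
The plan is to strip the modal scaffolding using the inversion lemmas, reducing the modal implication to the bare relational inequality; the only genuine work is the elimination of the \emph{generic} predicate $p$. Unfolding \Cref{dfn:stand-trans-pdl-rtcl} together with the duality of \Cref{dfn:pdl-duality}, and writing $\psi(y) \defsym \st{c,y}{a^+((ba^+)^+\cup a)}$, the hypothesis reads $\ltcl \dercyc \bar L \vlor R$, where $\bar L = \forall y(\overline{\st{c,y}{(aa\cup aba)^+}} \vlor \bar p(y))$ is the (negated) antecedent box and $R = \exists y(\psi(y)\vlan p(y))$ is the consequent diamond; the goal, read two-sidedly as the sequent $\overline{\st{c,d}{(aa\cup aba)^+}},\psi(d)$, contains no predicate symbol at all. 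The key structural point is that $p$ is \emph{generic}: it occurs only as $\bar p(y)$ under the $\forall$ of $\bar L$ and as $p(y)$ under the $\exists$ of $R$, every relational subformula $\st{c,y}{\cdot}$ being $p$-free. This, together with the freshness of the eigenconstant $d$, is what will force the consequent's existential witness to be $d$.

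First I would apply the closure properties of \Cref{inversions-lrtcl}: one $\vlor$-inversion gives a proof of the sequent $\bar L, R$; a $\forall$-inversion with a fresh constant $d$ (which, by freshness, does not occur in $R$) replaces $\bar L$ by $\overline{\st{c,d}{(aa\cup aba)^+}} \vlor \bar p(d)$; and a second $\vlor$-inversion yields a cyclic proof of $S \defsym \overline{\st{c,d}{(aa\cup aba)^+}}, \bar p(d), \exists y(\psi(y)\vlan p(y))$. It then remains to transform a proof of $S$ into one of $\overline{\st{c,d}{(aa\cup aba)^+}},\psi(d)$, i.e.\ to trade the pair $\bar p(d)$ and $\exists y(\psi(y)\vlan p(y))$ for the single formula $\psi(d)$.

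This last step is where I expect the real difficulty, and it is the predicate-logic analogue of the `genericity of the propositional variable' move in \cite{das2017cut}. Since $d$ is fresh, a branch of the proof of $S$ can discharge $\bar p(d)$ only by an $\id$ step against some $p(d)$, and every such $p(d)$ descends from the $p(y)$ in $R$, hence arises from an $\exists$-step instantiating $R$ at witness $d$ followed by a $\vlan$-step whose sibling premise carries exactly $\psi(d)$. The plan is therefore to perform proof surgery replacing $\bar p(d)$ throughout by $\psi(d)$: ancestors that are merely carried or weakened are treated verbatim, whereas each critical leaf $\id$ on $\{p(d),\bar p(d)\}$ becomes $\Gamma', p(d), \psi(d)$ and must be re-derived from the $\psi(d)$ available on the sibling $\vlan$-branch. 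Once $\bar p(d)$ is gone, $p$ no longer meets its dual, so \Cref{predicate-admissibility-lrtcl} deletes the residual positive occurrences, leaving exactly the relational sequent, which is the two-sided reading of the claim. The main obstacle is precisely that the consequent's $\exists$ is \emph{not} invertible, so one cannot simply read off the witness $d$; the surgery effectively internalizes a cut against the identity axiom, and must be carried out while preserving regularity (only finitely many new subproofs) and, crucially, the progressing criterion, which is unaffected since the $\coTC$-traces witnessing progress run entirely inside the $p$-free relational subformulas. An equivalent way to organise this step is to substitute the formula $x=d$ for $p$, turning $\bar p(d)$ into $d\neq d$ and each $p(y)$ into $y=d$, and then to discharge the resulting (in)equalities using the $=$ and $\neq$ rules.
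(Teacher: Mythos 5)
Your proposal follows the paper's proof route exactly: the same three inversion steps (Prop.~\ref{inversions-lrtcl}), the same identification of genericity of $p$ plus freshness of $d$ as the crux forcing the existential witness to be $d$, and the same final appeal to Prop.~\ref{predicate-admissibility-lrtcl} to delete the remaining $p$-literals. The paper compresses the crux into a single sentence --- ``without loss of generality we may instantiate the $\exists y$ by $d$'', followed by $\vlan$-inversion --- so your surgery argument is precisely an attempted justification of the step the paper leaves implicit; the only cosmetic difference is order (you replace $\bar p(d)$ by $\psi(d)$ and then delete positive occurrences of $p$, whereas the paper keeps $\bar p(d)$, takes the $\psi(d)$ conjunct, and deletes $\bar p(d)$ last).

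One detail of your surgery as stated does not go through. At a critical $\id$ leaf $\Gamma', p(d), \bar p(d)$ you propose to conclude the patched sequent $\Gamma', p(d), \psi(d)$ ``from the $\psi(d)$ available on the sibling $\vlan$-branch''. But the sibling premiss proves $\Gamma'', \psi(d)$, where $\Gamma''$ is the context at the $\vlan$ step, and between that step and the leaf the context formulas have been further decomposed, so in general $\Gamma'' \not\subseteq \Gamma'$; since $\wk$ only enlarges conclusions, you cannot graft the sibling proof onto the leaf. The repair is to perform the replacement one step lower: at each $\vlan$ step whose principal formula is $\psi(d)\vlan p(d)$ (the only possible source of $p(d)$, since $E$ is instantiated at most once per branch and $\bar p(d)$ is the unique negative $p$-literal), discard the $p(d)$-premiss entirely and keep only the (surgered) $\psi(d)$-premiss, adding a weakening; this kills every problematic $\id$ leaf at once, preserves regularity and progress, and is in effect exactly what the paper's ``instantiate at $d$, then $\vlan$-invert'' formulation delivers.
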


\begin{proof}[sketch]
Suppose $\ltcl \dercyc ({\diaP{(aa \cup aba)^+}p \vljm \diaP{a^+((ba^+)^+\cup a)}p})(c)$ so, by unwinding the definition of $\ST$ and since duality commutes with the standard translation, cf. Sec.~\ref{ssec:standard_tr}, we have that $\ltcl \dercyc  ({\boxP{(aa \cup aba)^+}\bar p})(c) \vlor({\diaP{a^+((ba^+)^+\cup a)}p})(c)$. By $\vlor$-inversion (Prop.~\ref{inversions-lrtcl}.\ref{or-inversion-lrtcl}) we have:
\[
\ltcl \dercyc  ({\boxP{(aa \cup aba)^+}\bar p})(c) , ({\diaP{a^+((ba^+)^+\cup a)}p})(c)\]
Again unwinding the definition of $\ST $, and by the definition of duality, we thus have:
\[
\ltcl \dercyc \forall x (\overline{(aa \cup aba)^+}(c,x)) \vlor \bar p(x)) , \exists y ((a^+((ba^+)^+ \cup a) (c,y))
  \vlan p(y))
\]
Now, by $\forall$-inversion and $\vlor$-inversion, Prop.~\ref{inversions-lrtcl}.\ref{forall-inversion-lrtcl},\ref{or-inversion-lrtcl}, we have:
\[
\ltcl \dercyc \overline{(aa \cup aba)^+}(c,d) , \bar p(d) , \exists y (
(a^+((ba^+)^+ \cup a) (c,y))
\vlan p(y))
\]
Without loss of generality we may instantiate the $\exists y$ by $d$
and so by $\vlan$-inversion, Prop.~\ref{inversions-lrtcl}.\ref{and-inversion-lrtcl}, we have:
\[
\ltcl \dercyc \overline{(aa \cup aba)^+}(c,d) , \bar p(d) , 
a^+((ba^+)^+ \cup a) (c,d)
\]
Since there is no occurrence of $p$ above, by Prop.~\ref{predicate-admissibility-lrtcl} we conclude
\[
\ltcl \dercyc \overline{(aa \cup aba)^+}(c,d) , 
a^+((ba^+)^+ \cup a) (c,d)
\]
as required.
\end{proof}

\subsubsection{Irregularity via an adversarial model.}
In the previous subsubsection we reduced the incompleteness of cut-free cyclic sequent proofs for $\TCL$ over the image of the standard translation on $\PDLtr$ to the non-regular cut-free provability of a particular relational validity.  
Unwinding this a little, the sequent that we shall show has no (cut-free) cyclic proof in $\ltcl$ can be written in `two-sided notation' as follows:
\begin{equation}
	\label{eq:irregular-sequent-relational}
	\tc{aa \vlor aba}cd \seqar \exists z (a^+(c,z) \vlan ({(ba^+)^+ \cup a})(z,d))
\end{equation}
This two-sided presentation is simply a notational variant that allows us to more easily reason about the proof search space (e.g.\ referring to `LHS' and `RHS'). 
Formally:
\begin{remark}
[Two-sided notation]
We may write $\Gamma \seqar \Delta $ as shorthand for the sequent $\bar \Gamma ,\Delta$, where $\bar \Gamma = \{ \bar A: A \in \Gamma\}$.
References to the `left-hand side (LHS)' and `right-hand side (RHS)' have the obvious meaning, always with respect to the delimiter $\seqar$.
\end{remark}

To facilitate our argument, we shall only distinguish sequents `modulo substitution' rather than allowing explicit substitution steps when reasoning about (ir)regularity of a proof.

We shall design a family of `adversarial' models, and instantiate proof search to just these models.
In this way, we shall show that any non-wellfounded $\ltcl$ proof of the sequent \eqref{eq:irregular-sequent-relational} must have arbitrarily long branches without a repetition (up to substitution).
Since $\ltcl$ is finitely branching, by K\"onig's Lemma this means that any non-wellfounded $\ltcl$ proof of \eqref{eq:irregular-sequent-relational} has an infinite branch with no repetitions (up to substitution), as required.

\begin{definition}
	[An adversarial model]
	\label{adversarial-model}
	For $n\in \Nat$, define the structure $\An n$ as follows:
	\begin{itemize}
		\item The domain of $\An n$ is $\{u_0,u_0', \dots, u_{n-1}',u_n, v \}$.
		\item $\interp {\An n} a = \{(u_i, u_{i}'),(u_{i}',u_{i+1}) \}_{i<n}$.
		\item $\interp {\An n} b = \{(u_n,v) \}$.
	\end{itemize}
\end{definition}
\todo{give a picture of the structure}

Note that, since the sequent \eqref{eq:irregular-sequent-relational} that we are considering is purely relational, it does not matter what sets $\An n$ assigns to the predicate symbols.

\begin{lemma}
	\label{branch-no-repetitions-until-arbitrary}
	Let $n\in \Nat$. 
	Any proof $\derd$ of \eqref{eq:irregular-sequent-relational} has a branch with no repetitions (up to substitutions) among its first $n$ sequents.
\end{lemma}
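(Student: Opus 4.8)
The plan is to fix $n$ and descend through $\derd$ from its root, building a branch step by step while maintaining an interpretation $\rho$ into the adversarial model $\An n$ (Dfn.~\ref{adversarial-model}) with $\rho(c)=u_0$ and $\rho(d)=u_n$. The guiding invariant is that at every sequent along the branch \emph{all} formulas are false under $\rho$ \emph{except} the RHS existential $E := \exists z(a^+(c,z)\vlan ((ba^+)^+\union a)(z,d))$, which is the unique true disjunct. Since every sequent of a proof is valid, hence true in $\An n$ under $\rho$, this invariant pins down how to resolve the branching rules: at each $\vlan$- or $\coTC$-step we follow the premise keeping every non-$E$ formula false, and at each eigenvariable-introducing step ($\forall$, or the right premise of $\coTC$) we assign $\rho$ so as to realise the canonical $aa$-path $u_0\to u_0'\to u_1\to\cdots\to u_n$ of $\An n$. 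Concretely, unfolding the active left formula $\cotc{\overline{aa\vlor aba}}{s}{d}$ via the right premise of $\coTC$ advances its source from $s$ (with $\rho(s)=u_k$) to a fresh constant interpreted as $u_{k+1}$, and the De Morgan residue $\overline{aa}(s,\cdot)=\forall w(\bar a(s,w)\vlor \bar a(w,\cdot))$ is kept false by sending its eigenvariable to the intermediate node $u_k'$.

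First I would verify that this invariant is maintainable, i.e.\ that for each rule $\derd$ applies there is a premise (and, where needed, a choice of $\rho$-value for a fresh eigenvariable) under which every formula other than $E$ stays false. This is a finite case analysis over the rules, the only delicate cases being the universal and disjunctive residues produced when $\overline{aa\vlor aba}=\overline{aa}\vlan\overline{aba}$ is decomposed; here the explicit edge structure of $\An n$ (each $aa$-step factoring through a unique primed node $u_i'$, and the total absence of any $a$-edge out of $v$, so that $(ba^+)^+$ is empty) supplies exactly the witnesses needed to keep these residues false.

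The heart of the argument is the sub-claim that along this branch $E$ cannot be discharged within the first $n$ steps. Decomposing $E$ forces $\derd$ to commit to a witness term $t$ for $z$, after which the invariant demands that $a^+(c,t)\vlan((ba^+)^+\union a)(t,d)$ be the true disjunct; but in $\An n$ the unique good witness is $u_{n-1}'$, and a term denoting $u_{n-1}'$ can only be formed once a constant interpreted as $u_{n-1}'$ has entered the sequent, which does not happen until the active source has been advanced essentially $n$ times. Thus any early choice of $t$ has $\rho(t)\neq u_{n-1}'$, making the whole sequent false under $\rho$ and contradicting validity; so on our branch $E$ persists and the active $\coTC$ source keeps advancing through the distinct elements $u_0,u_1,\dots$. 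This is precisely where the \emph{family} $\{\An n\}_n$ does its work: for each fixed $n$ the model is long enough to postpone the discharge of $E$ past the first $n$ sequents.

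Finally, to turn `the source advances through distinct points' into `no repetition up to substitution', I would exhibit a strictly increasing, substitution-invariant progress measure, namely the number of $\coTC$-unfoldings of the active formula performed so far, as recorded by the chain of $\overline{aa\vlor aba}$-descendants linking $c$ to the current source: in any model realising the invariant this chain has $a$-length $2k$ after $k$ advances, and since the points $u_0,u_0',u_1,\dots$ of $\An n$ are pairwise distinct, a substitution $\sigma$ identifying an earlier sequent $S_i$ with a later $S_j$ would have to equate chains of different lengths, which is impossible as a substitution can only merge atoms, never lengthen a chain. I expect this last step to be the main obstacle: making the measure robust to the interleaving of rule applications that $\derd$ (not we) controls, since we steer only the premise selections and the eigenvariable interpretations, and the intermediate chain links may be partially decomposed or have disjuncts discarded before the next advance. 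The bookkeeping needed to show the measure nonetheless survives these decompositions, and is invariant under substitution, is where the real work lies; the remaining combinatorics, and the appeal to K\"onig's Lemma assembling these finite branches into a single infinite non-repeating one to complete Thm.~\ref{thm:incompleteness}, then follow the template of \cite{das2017cut}.
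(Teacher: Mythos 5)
Your proposal is correct in outline and takes essentially the same route as the paper: the paper likewise pins down a branch that always follows the right premise of the left-$\TC$ (i.e.\ $\coTC$) steps and the $aa$-disjunct --- exactly the choices your falsity invariant dictates --- and then combines truth of the accumulated left-hand formulas in $\An n$ with soundness (validity of every sequent occurring in a proof) to show the right-hand existential cannot be usefully instantiated before roughly $n$ unfoldings, concluding non-repetition from a monotone, substitution-invariant measure. The only divergences are bookkeeping choices: the paper interprets $d$ as the extra point $v$ rather than your $u_n$ (your assignment, with unique witness $u_{n-1}'$, is if anything the cleaner way to make the ``all left-hand formulas true'' invariant literally hold), and its non-repetition measure (shrinking formula sizes versus a growing stock of eigenvariables) sidesteps the chain-robustness analysis you flag as the main remaining work.
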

\begin{proof}
Set $c_0=c$.
	Consider some (possibly finite, but maximal) branch $\branch = (\infrule_i)_{i\leq \nu}$ (with $\nu\leq \omega$) of $\derd$ satisfying:
	\begin{itemize}
		\item\label{branch-convention-rtc-right} whenever $\TC$ on the LHS is principal, the right premiss is followed; and,
		\item\label{branch-convention-aorb-left-a} whenever $(aa)(s,t)\vlor (aba)(s,t)$ is principal (for any $s$ and $t$) the left premiss (corresponding to $(aa)(s,t)$) is followed.
	\end{itemize}
	Let $k \leq n$ be maximal such that, for each $i\leq k$, $\infrule_i$ has principal formula on the LHS.
	Now:
	\begin{enumerate}
		\item\label{lhs-branch-form} For $i\leq k$, each $\infrule_i$ has conclusion with LHS of the form:
		\begin{equation}
			\label{eq:counterexample-general-form-lhs}
			\Gamma_j(c_{j-1} , c_j), \tc {aa\vlor aba} {c_l} d
		\end{equation}
		where $\vec c=c_0, \dots, c_{l-1}$
		for some $l\leq i$ 
		and each $\Gamma_j(c_{j-1},c_j)$ has the form $a(c_{j-1}, c_{j-1}'),a(c_{j-1}',c_j)$ or
		$a(c_{j-1}, c_{j-1}')\vlan a(c_{j-1}',c_j)$ or
		$(aa)(c_{j-1},c_{j})$
		or $(aa)(c_{j-1}, c_j ) \vlor (aba)(c_{j-1}, c_{j}) $
		To see this, proceed by induction on $i\leq k$:
		\begin{itemize}
			\item The base case is immediate, by setting $l=0$.
			\item For the inductive step, note that the principal formula of $\infrule_i$ must be on the LHS, since $i\leq k$.
			Thus by the inductive hypothesis the principal formula of $\infrule_i$ must have the form: 
			\begin{itemize}
			\item $a(c_{j-1}, c_j')\vlan a(c_j',c_j)$ (on the LHS), in which case the premiss of $\infrule_i$ (which is a left-$\vlan$ step) replaces it by $a(c_{j-1}, c_j'),a(c_j',c_j)$;
			\item $(aa)(c_{j-1,c_j})$ (on the LHS), in which case the premiss of $\infrule_i$ (which is a left-$\exists$ step) replaces it by $a(c_{j-1}, c_j')\vlan a(c_j',c_j)$
				\item $(aa)(c_{j-1},c_j)\vlor (aba)(c_{j-1},c_j)$ (on the LHS), in which case, by definition of $\branch$, the $\branch$-premiss of $\infrule_i$ (which is a left-$\vlor$ step) replaces this formula by some $a(c_{j-1},c_j)$; or,
				\item $\tc{aa\vlor aba}{c_j} d$ (on the LHS), in which case, by definition of $\branch$, the $\branch$-premiss of $\infrule_i$ (which is a left-$\TC$ step) replaces it by the cedent $(aa)(c_j,c_{j+1})\vlor (aba)(c_j,c_{j+1}),\tc{aa\vlor aba}{c_{j+1}}d$. 
			\end{itemize}
		\end{itemize}
	\item\label{lhs-branch-does-not-repeat} Moreover, for $i<j\leq k$, the conclusion of $\infrule_i$ and $\infrule_j$ are not equal (up to substitution). 
	To see this, note that any rule principal on an LHS of form in \eqref{eq:counterexample-general-form-lhs} either decreases the size of some $\Gamma_j(c_{j-1},c_j)$ (when it is a left $\vlan,\exists$ or $\vlor$ step) or increases the number of eigenvariables in the sequent (when it is a left $\TC$ step), in particular the $l$ such that $\tc{aa \vlor aba}{c_l,d}$ appears.
	\item\label{lhs-branch-has-true-lhs} 
		Since proofs must be sound for all models (by soundness), we shall work in $\An n $
		with respect to an interpretation $\rho_n$ satisfying $ c_i \mapsto u_i$ for $i\leq n$ and $c_i'\mapsto u_i'$ for $i<n$ and $ d \mapsto v$.
	It follows by inspection of \eqref{eq:counterexample-general-form-lhs} that, for $i\leq k$, each formula on the LHS of the conclusion of $\infrule_i$ is true in $(\An n, \rho_n)$.
	\item\label{branch-principal-on-lhs-for-long} Along $\branch$, the RHS cannot be principal until $l=n$ in \eqref{eq:counterexample-general-form-lhs}, so in particular $k\geq n$. 
	To see this:
	\begin{itemize}
		\item Recall that the interpretation $\rho_n$ assigns to $c_0,c_0', \dots, c_{n-1}',c_n$ the worlds $u_0, u_0', \dots, u_{n-1}',u_n$ respectively.
		\item If the existential formula on the RHS is instantiated by some $c_i$ with $i\neq n$ or $c_i'$ with $i<n$ then the resulting sequent is false in $(\An n,\rho_n)$ (recall that, by \Cref{lhs-branch-has-true-lhs}, every formula on the LHS is true, so we require the RHS to be true too).
		To see this, note that the RHS in particular would imply $(ba^+)^+(c_i,d)$ or $a(c_i,d)$ or $(ba^+)^+(c_i',d)$ or $a(c_i',d)$.
		However when $i\leq n$, or $i<n$ respectively, this is not true with respect to $(\An n,\rho_n)$.
	\end{itemize}
	\end{enumerate}
By \Cref{branch-principal-on-lhs-for-long}, we have that $k\geq n$ and so $k=n$.
Thus, by \Cref{lhs-branch-form} and \Cref{lhs-branch-does-not-repeat}, there are no repeated sequents (up to substitution) in $(\infrule_i)_{i\leq n}$, as required.
\end{proof}

\anupam{I commented the remark about adversarial models for alternative rules above}

\subsubsection{Putting it all together.}
We are now ready to give the proof of the main result of this section.

\begin{proof}
	[Proof of Thm.~\ref{thm:incompleteness}, sketch]
	Since the choice of $n$ in \Cref{branch-no-repetitions-until-arbitrary} was arbitrary, any $\ltcl$ proof $\derd$ of \eqref{eq:irregular-sequent-relational} must have branches with arbitrarily long initial segments without any repetition (up to substitution).
	Since the system is finitely branching, by K\"onig's Lemma we have that there is an infinite branch through $\derd$ without any repetition (up to substitution), and thus $\derd$ is not regular.
	Thus $\ltcl \not\dercyc \eqref{eq:irregular-sequent-relational}$.
	Finally, by contraposition of \Cref{reduction-to-relational-tautology}, we have, as required: 
	$$\ltcl \not\dercyc  ({\diaP{(aa \cup aba)^+}p \vljm \diaP{a^+((ba^+)^+\cup a)}p})(c) \qedhere $$
\end{proof}


An immediate consequence of Thm.~\ref{thm:incompleteness} is:
\begin{corollary}
\label{lrtcl-cut-not-admissible}
The class of cyclic proofs of $\ltcl$ does not enjoy cut-admissibility.
\end{corollary}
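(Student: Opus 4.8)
The plan is to derive the corollary from the incompleteness theorem (Thm.~\ref{thm:incompleteness}) by contraposition, using the fact that cyclic $\ltcl$ \emph{with} cut is complete for the relevant validities. The key observation is that cut-admissibility, if it held, would collapse the gap between cut-free provability and provability-with-cut, contradicting the separation already established.

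\begin{proof}
Recall from the discussion following Prop.~\ref{lrtcl-cyc-sound} that, in the presence of cut, cyclic $\ltcl$ proofs are Henkin complete, and in particular prove all standard validities of $\TCL$ \cite{CohenRowe18:cyclic-tcl,cohen2020non}. Now let $A$ be the $\PDLtr$ formula witnessing Thm.~\ref{thm:incompleteness}, so that $\models A$ but $\ltcl \not\dercyc \st x A$ in the absence of cut. By Thm.~\ref{thm:standard_t_pdl} and the soundness of the standard translation, $\models A$ gives $\models \st x A$, whence $\st x A$ is a standard validity of $\TCL$. Therefore, by Henkin completeness, there \emph{is} a cyclic $\ltcl$ proof of $\st x A$ using cut. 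If the class of cyclic $\ltcl$ proofs enjoyed cut-admissibility, this cut proof could be transformed into a cut-free cyclic proof of $\st x A$, contradicting $\ltcl \not\dercyc \st x A$ from Thm.~\ref{thm:incompleteness}. Hence cut-admissibility must fail.
\end{proof}

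I do not expect any genuine obstacle here, since the corollary is essentially a one-line consequence of two results already in hand: the cut-free incompleteness of Thm.~\ref{thm:incompleteness} and the cut-full (Henkin) completeness cited earlier. The only point requiring mild care is ensuring that the \emph{same} formula $\st x A$ is simultaneously (i) a genuine standard validity, so that cut-full completeness applies, and (ii) cut-free cyclically unprovable, which is exactly what Thm.~\ref{thm:incompleteness} delivers; these two facts are what make the contradiction bite. One should also note that Henkin completeness is stated for models of a $(\mathrm{co})$induction axiomatisation, so strictly the argument uses that every standard model is a Henkin model, making standard validity imply Henkin validity and hence cut-full cyclic provability.
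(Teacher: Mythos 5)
Your overall strategy is the one the paper intends: combine the cut-free incompleteness of Thm.~\ref{thm:incompleteness} with cut-full provability of the same formula to refute cut-admissibility. However, the step where you establish cut-full provability contains a genuine error. You assert that cyclic $\ltcl$ with cut proves ``all standard validities of $\TCL$'', justified by the remark that since every standard model is a Henkin model, standard validity implies Henkin validity. That implication is backwards: if the standard models form a subclass of the Henkin models, then truth in \emph{all Henkin models} implies truth in all standard models, not conversely. Henkin completeness therefore yields cut-full cyclic provability only for \emph{Henkin} validities, and what you have established for $\st x A$ (via $\models A$ and Thm.~\ref{thm:standard_t_pdl}) is merely standard validity, which is strictly weaker. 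Indeed, your claim cannot be repaired as stated: standard $\TCL$ validity is not recursively enumerable, whereas the theorems of any cyclic system (with or without cut) form an r.e.\ set, so no such system is complete for standard validities. The key step of your proof --- that there \emph{is} a cyclic $\ltcl$ proof with cut of $\st x A$ --- is thus left unjustified.

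The gap is fixable, and the fix is what the paper implicitly relies on: the counterexample \eqref{eq:counterexample-formula-pdltr} is not merely standard-valid but \emph{Henkin}-valid, because it is (the standard translation of) a theorem of Kleene Algebra, derived from $(a\cup b)^* \leq a^*(ba^*)^*$; Kozen's axiomatic proof uses only induction with definable (regular-expression) invariants, and these instantiate the (co)induction principles that hold in every model of the Henkin axiomatisation. Hence Henkin completeness does apply to this particular formula, giving the required cyclic proof with cut. (Alternatively, one can exhibit such a cut-full cyclic proof directly.) With this substitution, your contraposition argument goes through and coincides with the paper's intended one-line derivation of the corollary.
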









\section{Hypersequent calculus for $\TCL$}
\label{sec:hypersequents_tcl}


Let us take a moment to examine why any `local' simulation of $\lpdltr$ by $\ltcl$ fails, in order to motivate the main system that we shall present. 
The program rules, in particular the $\diaP{\, }$-rules, require a form of \emph{deep inference} to be correctly simulated, over the standard translation.
For instance, let us consider the action of the standard translation on two rules we shall see later in $\lpdltr$ (cf.~Sec.~\ref{ssec:sequents_pdltr}):
\[
\scriptsize
\begin{array}{c@{\qquad \leadsto \qquad}c}
     \vlinf{\casezero{\diaP\union}}{}{\Gamma, \diaP{a_0 \union a_1}p}{\Gamma, \diaP{a_0}p}
&
\vlinf{}{}{\st c \Gamma , \exists x (\blue{(a_0(c,x) \vlor a_1 (c,x)}) \vlan p(x)) }{\st c \Gamma, \exists x (\blue{a_0(c,x)} \vlan p(x)) }
\\
\noalign{\medskip}
\vlinf{\diaP;}{}{\Gamma, \diaP{a;b}p}{\Gamma, \diaP a \diaP b p}
&
\vlinf{}{}{\st c \Gamma , \red{\exists x ( \exists y ( a(c,y) \vlan b(y,x) ) \vlan p(x) )} }{ \st c \Gamma, {\exists y( a(c,y) \vlan \exists x ( b(y,x) \vlan p(x) ) )} }
\end{array}
\]
The first case above suggests that any system to which the standard translation lifts must be able to reason \emph{underneath $\exists$ and $\vlan$}, so that the inference indicated in blue is `accessible' to the prover.
The second case above suggests that the existential-conjunctive meta-structure necessitated by the first case should admit basic equivalences, in particular certain \emph{prenexing}.
This section is devoted to the incorporation of these ideas (and necessities) into a bona fide proof system.

%
%

\subsection{Annotated hypersequents}
\label{sec:htcl_hypersequents}

An \emph{annotated cedent}, or simply \emph{cedent}, written $\ced,\ced'$ etc., is an expression $\str \Gamma \bv$, where $\Gamma$ is a set of formulas and the \emph{annotation} $\bv$ is a set of variables.
We sometimes construe annotations as lists rather than sets when it is convenient, e.g.\ when taking them as inputs to a function.

Each cedent may be intuitively read as a $ \TCL$ formula, under the following interpretation:
\(
\fmi{\str \Gamma {x_1, \dots, x_n}} \  \defsym \ \exists x_1 \dots \exists x_n  \bigwedge \Gamma
\).
When $\bv= \emptyset$ then there are no existential quantifiers above, and when $\Gamma = \emptyset$ we simply identify $\bigwedge \Gamma$ with $\top$.
We also sometimes write simply $A$ for the annotated cedent $\str A \emptyset$.

A \emph{hypersequent}, written $\sq,\sq'$ etc., is a set of annotated cedents.
%
Each hypersequent may be intuitively read as the disjunction of its cedents.
Namely we set:
\(
\fmi{\str{\Gamma_1}{\bv_1}, \dots, \str{\Gamma_n}{\bv_n} } \, \defsym \, \fmi{\str{\Gamma_1}{\bv_1}} \vlor \dots \vlor \fmi{\str{\Gamma_n}{\bv_n}}.
\)

\begin{figure}[t!]
	\begin{center}
		\footnotesize
		\begin{tabular}{|@{\hspace{2cm}} c c @{\hspace{2cm}}|}
			\hline 
			& \\
			\multicolumn{2}{|c|}{
			$ \vlinf{\init}{}{\str{\,}{\varnothing}}{} 
			\quad
			\vlinf{\wk}{}{\violet\sq,\sq'}{\violet\sq}
			\quad
			\vlinf{ \sigma}{}{\violet{\sigma(\sq)}}{\violet \sq}
			\quad \vliinf{\cup}{\textcolor{black}{\scriptsize{\begin{matrix}
							\freev{\Delta} \cap \bx = \emptyset \\ 
							\freev{\Gamma} \cap \by = \emptyset 
				\end{matrix}}}
			}{\violet \sq  ,\str{\magenta \Gamma, \orange \Delta}{\bx,\by} }{\violet \sq , \str{\magenta \Gamma}{\bx}}{\violet \sq, \str{ \orange \Delta}{\by}} $
			}
			\\[0.5cm]
			\multicolumn{2}{|c|}{
			$\quad  \vlinf{\reset}{\scriptsize{\textcolor{black}{
						\begin{matrix}
				        \text{$A$ closed}		
						\\
						\end{matrix}
			}}}{\violet \sq , \str{ \magenta \Gamma, A}{\bv},\str{\nf A }{\varnothing}}{\violet \sq , \str{\magenta \Gamma}{\bv}} \quad $
			$ \vlinf{\vlan}{}{\violet \sq, \str{\magenta \Gamma, \blue{A \vlan B}}{\bv}}{\violet \sq  ,\str{\magenta \Gamma, \blue A , \blue B}{\bv}} \quad $
			$ \vlinf{\vlor}{\text{\scriptsize $i \in \{0,1\}$ }}{\violet \sq , \str{\magenta \Gamma, \blue{A_0 \vlor A_1}}{\bv}}{\violet \sq , \str{\magenta\Gamma, \blue{A_i} }{\bv}} \quad $
			}
			\\[0.5cm]
			\multicolumn{2}{|c|}{
			$ \quad \vlinf{\mathsf{inst}}{}{ \violet \sq , \str{\magenta{\Gamma(y)}}{\bv ,y}}{\violet \sq , \str{\magenta{\Gamma(t)}}{\bv}} \quad $
			$ \vlinf{\exists}{\footnotesize{\textcolor{black}{
						\begin{matrix}
						\text{\scriptsize{$y$ fresh}} \\
						\end{matrix}
			}}}{ \violet \sq , \str{\magenta \Gamma, \blue{\exists x (A(x))}}{\bv} }{ \violet \sq , \str{\magenta \Gamma, \blue{A(y)}}{\bv, y} } \quad $
			$ \vlinf{\forall}{
			\footnotesize{
			\begin{matrix}
			\text{\scriptsize{$f$ fresh}}
			\end{matrix}
			}
			}{ \violet \sq, \str{\magenta \Gamma, \blue{\forall x (A(x))} }{\bv } }{\violet \sq , \str{\magenta \Gamma, \blue{A(f(\bv))}}{\bv}} \quad  $
			}
			\\[0.5cm]
			\multicolumn{2}{|c|}{
				$ \vlinf{\TC}{
				\footnotesize{\begin{matrix}
					\text{\scriptsize{$z$ fresh}}
					\end{matrix}}
				}{\violet \sq, \str{\magenta \Gamma, \blue{\tc{A}{s}{t}}}{\bv}}{\violet \sq, \str{\magenta \Gamma, \blue{A(s,t)}}{\bv}, \str{\magenta \Gamma,\blue{A(s,z)}, \blue{\tc{A}{z}{t}}}{\bv,z} } $ }\\[0.5cm]
			\multicolumn{2}{|c|}{
				$ \vlinf{\coTC}{
					\footnotesize{
						\begin{matrix}
						\text{\scriptsize{$f$ fresh}}
						\end{matrix}
					}
				}{\violet \sq, \str{\magenta\Gamma, \blue{\cotc{A}{s}{t}} }{\bv}}{ \violet \sq, \str{\magenta\Gamma,  \blue{A(s,t)}, \blue{A(s,f(\bv))}}{\bv}, \str{\magenta\Gamma, \blue{A(s,t)},\blue{\cotc{A}{f(\bv)}{t}}}{\bv} } $}\\
			& \\
			\hline 
		\end{tabular}
	\end{center}
	\caption{Hypersequent calculus $\htcl$. 
	 $\sigma$ is a `substitution' map from constants to terms and a renaming of other function symbols and variables.
	}
	\label{fig:rules:htcl}
\end{figure}

\subsection{Non-wellfounded hypersequent proofs}
\label{sec:htcl_proofs}

We now present our hypersequential system for $\TCL$ and its corresponding notion of `non-wellfounded proof'. 

\begin{definition}
    [System]
    	\label{def:pre-proof_tcl}
    The rules of $ \htcl $ are given in Fig.~\ref{fig:rules:htcl}.
	A $ \htcl $ \emph{preproof} is a (possibly infinite) derivation tree generated by the rules of $ \htcl $. 
	A preproof is \emph{regular} if it has only finitely many distinct subproofs.
\end{definition}

\todo{explain the rules, as much as possible}


\begin{remark}
	[Herbrand constants]
	Our rules for $\TC$ and $\coTC$ are induced by the characterisation of $\TC$ as a least fixed point in \eqref{eq:tc-fixed-point-formula}.
	Note that the rules $\coTC$ and $\forall$ introduce, bottom-up, the fresh function symbol $f$, which plays the role of the \emph{Herbrand} function of the corresponding $\forall$ quantifier:
		just as $\forall \bv \exists x A( x)$ is equisatisfiable with $\forall \bv A( f(\bv))$, when $f$ is fresh, by Skolemisation, by duality $\exists \bv \forall x A(x)$ is equivalid with $\exists \bv A(f(\bv))$, when $f$ is fresh, by Herbrandisation.
		Note that the usual $\forall$ rule of the sequent calculus is just a special case of this, when $\vec x = \emptyset$, and so $f$ is a constant symbol.
\end{remark}
\anupam{explain more the intricacies of progress in hypersequents?}

Our notion of ancestry, as compared to traditional sequent systems, must account for the richer structure of hypersequents.
Informally, referring to \Cref{fig:rules:htcl}, 
	a formula $C$ in the premiss is an \emph{immediate ancestor} of a formula $C'$ in the conclusion if they have the same colour;
	if $C,C' \in \magenta \Gamma$ then we further require $C=C'$, and if $C,C'$ occur in $\violet \sq$ then $C=C'$ occur in the same cedent.
A cedent $S$ in the premiss is an \emph{immediate ancestor} of a cedent $S'$ in the conclusion if some formula in $S$ is an {immediate ancestor} of some formula in $S'$.
The following definitions formally explain these notions independently of of the colouring in \Cref{fig:rules:htcl}.
\anupam{moved colour based definitions here.}

\begin{definition}
    [Ancestry for cedents]
    \label{def:ancestry_htcl}
    Fix an inference step $\infrule$, as typeset in Fig.~\ref{fig:rules:htcl}. 
    We say that a cedent $\ced$ in a premiss of $\infrule$ is an \emph{immediate ancestor} of a cedent $\ced'$ in the conclusion of $\infrule$ if either:
    \begin{itemize}
    	\item $\ced = \ced' \in \sq $, i.e.\ $\ced$ and $\ced'$ are identical `side' cedents of $\infrule$; or,  
    	\marianna{first bullet point: identical modulo substitution?}
    	\item $\infrule \neq \reset$, and $\ced$ is the (unique) cedent indicated in the conclusion of $\infrule$, and $\ced'$ is a cedent indicated in a premiss of $\infrule$; or,
    	\item $\infrule = \reset$ and $\ced$ is the (unique) cedent indicated in the premiss of $\reset$ and $\ced'$ is the cedent $\str{\Gamma,A}\bv$ indicated in the conclusion of $\reset$.
    \end{itemize}
%
\end{definition}

Note in particular that in $\reset$, as typeset in \Cref{fig:rules:htcl}, $\str \Gamma \bv$ is {not} an immediate ancestor of $\str {\bar A} \emptyset$.


\begin{definition}[Ancestry for formulas]
	\marianna{modified this}
	Fix an inference step $\infrule$, as typeset in Fig.~\ref{fig:rules:htcl}. 
	We say that a formula $\ced$ in a premiss of $\infrule$ is an \emph{immediate ancestor} of a formula $\ced'$ in the conclusion of $\infrule$ if either:
	\begin{itemize}
		\item $ F = F' \subseteq \sq$, i.e., $ F $ and $ F' $ are formulas occurring in some cedent $ S \in \sq $, and are identical modulo substitution;
		\item $ F = F' \in \Gamma $, i.e., $ F $ and $ F' $ are formulas occurring in $ \Gamma $, and are identical modulo substitution; or,
		\item $ F $ is one of the formulas explicitly indicated in the premiss of $ \infrule $ and $ F' $ is the formula explicitly indicated in the conclusion of $ \infrule $.
	\end{itemize}
\end{definition}

Immediate ancestry on both formulas and cedents is a binary relation, inducing a directed graph whose paths form the basis of our correctness condition:

\begin{definition}[(Hyper)traces]
\label{def:hypertraces_htcl}
    A \emph{hypertrace} is a maximal path in the graph of immediate ancestry on cedents.
    A \emph{trace} is a maximal path in the graph of immediate ancestry on formulas.
\end{definition}

\marianna{added this}
Thus, in the $\reset$ rule, as typeset in \Cref{fig:rules:htcl}, no (infinite) trace can include the indicated $A$ or $\bar A$. 
From the above definitions it follows that whenever a cedent $S$ in the premiss of a rule $ \infrule $ is an immediate ancestor of a cedent $S'$ in the conclusion, then some formula in $S$ is an immediate ancestor of some formula in $S'$, and vice-versa.  Thus, for a hypertrace $ (\ced_i)_{i < \omega} $, there is at least one trace $  (F_i)_{i< \omega} $ which is `within' or `along' the hypertrace, i.e., such that $ F_i \in \ced_i $ for all $ i $.

\begin{definition}
[Progress and proofs]
	\label{def:progress_tcl}
	Fix a preproof $\derd $. 
	A (infinite) trace $ (F_i)_{i \in \omega} $
	is \emph{progressing} if there is $k$ such that, for all $i>k$, $F_i$ has the form $ \cotc A{s_i}{t_i}$ and is infinitely often principal.\footnote{
		In fact, by a
		a simple well-foundedness argument, it is equivalent to say that $(F_i)_{i <\omega}$ is progressing if it is infinitely often principal for a $\coTC$-formula.}
	A (infinite) hypertrace $\htrace$ is \emph{progressing} if every infinite trace along it is progressing. 
	A (infinite) branch is progressing if it has a progressing hypertrace.
	$\derd$ is a 
\emph{proof}
	if every infinite branch is progressing. 
	If, furthermore, $\derd$ is regular, we call it a \emph{cyclic proof}.
	
	We write $\htcl \dernwf \sq$ (or $\htcl \dercyc \sq$) if there is a proof (or cyclic proof, respectively) of $\htcl$ of the hypersequent $\sq$.
\end{definition}

\anupam{put metalogical results after the examples}

\subsection{Some examples}
Let us consider some examples of cyclic proofs in $\htcl$ and compare the system to $\ltcl$.

\begin{example}
	[Fixed point identity]
	\label{fixed-point-identity}
    Here follows a cyclic proof in $\ltcl$ of sequent $\str{\cotc a c d }\emptyset, \str{\tc {\bar a} c d }\emptyset$:
	\[
	\vlderivation{
	\vliin{\coTC}{\bullet}{\cotc a c d \tc{\bar a}cd }{
		\vlin{\TC}{}{a(c,d) ,\tc{\bar a }cd}{
		\vlin{\id}{}{a(c,d),\bar a(c,d)}{\vlhy{}}
		}
	}{
		\vliin{\TC}{}{a(c,e),\cotc a e d, \tc{\bar a}cd}{
			\vlin{\id}{}{a(c,e), \bar a(c,e)}{\vlhy{}}
		}{
			\vlin{\coTC}{\bullet}{\cotc a e d , \tc{\bar a}e d}{\vlhy{\vdots}}
		}
	}
	}
	\]
	There is not much choice in the construction of this cyclic proof, bottom-up: we must apply $\coTC$ first and branch before applying $\TC$ differently on each branch.
	This cyclic proof is naturally simulated by the following $\htcl$ one, where the progressing hypertrace is marked in blue:\anupam{colour progressing hypertrace and explain}
	
	\begin{adjustbox}{max width = \textwidth}
	$
	\vlderivation{
	\vlin{\coTC}{\bullet}{\blue{\str{\cotc a c d}\emptyset}, \str{\tc{\bar a}c d}\emptyset}{
	\vliin{2\cup}{}{\str{a(c,d), a(c,e)}\emptyset, \blue{\str{a(c,d), \cotc a e d}\emptyset}, \str{\tc{\bar a}cd}\emptyset}{
		\vlin{\TC}{}{\str{a(c,d)}\emptyset, \str{\tc{\bar a}c d }\emptyset}{
		\vlin{\reset}{}{\str{a(c,d)}\emptyset, \str{\bar a (c,d)}\emptyset}{
		\vlin{\init}{}{\str{\, }\emptyset}{\vlhy{}}
		}
		}
	}{
		\vlin{\TC}{}{\str{a(c,e)}\emptyset ,    \blue{\str{\cotc a e t}\emptyset} , \str{\tc{\bar a}c d }\emptyset}{
		\vlin{\instrule}{}{\str{a(c,e)}\emptyset , \blue{\str{\cotc a e t}\emptyset} , \str{\bar a (c,x),\tc{\bar a}x d }x}{
		\vliin{\cup}{}{\str{a(c,e)}\emptyset , \blue{\str{\cotc a e t}\emptyset} , \str{\bar a (c,e),\tc{\bar a}e d }\emptyset}{
			\vlin{\reset}{}{\str{a(c,e)}\emptyset, \str{\bar a(c,e)}\emptyset}{
			\vlin{\init}{}{\str{\, }\emptyset}{\vlhy{}}
			}
		}{
			\vlin{\coTC}{\bullet}{\blue{\str{\cotc a e d}\emptyset}, \str{\tc{\bar a}ed}\emptyset}{\vlhy{\vdots}}
		}
		}
		}
	}
	}
	}
	$
	\end{adjustbox}
	
	\ 
	
	The sequent $\str{\cotc a c d }\emptyset,\str{\tc {\bar a} c d }\emptyset$ is finitely derivable using rule $\reset$ on $\tc a c d$ and the $\init$ rule.
	However we can also cyclically reduce it to a simpler instance of $\reset $.
	Due to the granularity of the inference rules of $\htcl$, we actually have some liberty in how we implement such a derivation.
	E.g., the $\htcl$-proof below applies $\TC$ rules below $\coTC$ ones, and delays branching until the `end' of proof search, which is impossible in $\ltcl$.
	The only infinite branch, looping on $\bullet$, is progressing by the blue hypertrace.
	
	\begin{adjustbox}{max width = \textwidth}
	$
	\scriptsize
	\vlderivation{
	\vlin{\TC}{\bullet}{\blue{\str{\cotc a c d}\emptyset}, \str{\tc{\bar a}c d }\emptyset}{
	\vlin{\coTC}{}{\blue{\str{\cotc a c d}\emptyset}, \str{\bar a(c,d)}\emptyset, \str{\bar a(c,x), \tc{\bar a}xd}x}{
	\vlin{\instrule}{}{\str{a(c,d),a(c,e)}\emptyset, \blue{\str{a(c,d),\cotc aed}\emptyset} , \str{\bar a(c,d)}\emptyset, \str{\bar a(c,x), \tc{\bar a}xd}x}{
	\vliin{2\cup}{}{\str{a(c,d),a(c,e)}\emptyset, \blue{\str{a(c,d),\cotc aed}\emptyset} , \str{\bar a(c,d)}\emptyset, \str{\bar a(c,e), \tc{\bar a}ed}\emptyset }{
		\vlin{\reset}{}{\str{a(c,d)}\emptyset,\str{\bar a(c,d)}\emptyset}{
		\vlin{\init}{}{\str{\, }\emptyset}{\vlhy{}}
		}
	}{
		\vliin{\cup}{}{\str{a(c,e)}\emptyset, \blue{\str{\cotc a e d}\emptyset}, \str{\bar a(c,e),\tc{\bar a}ed}\emptyset}{
			\vlin{\reset}{}{\str{a(c,e)}\emptyset, \str{\bar a(c,e)}\emptyset}{
			\vlin{\init}{}{\str{\, }\emptyset}{\vlhy{}}
			}
		}{
			\vlin{\TC}{\bullet}{\blue{\str{\cotc aed}\emptyset},\str{\tc{\bar a}ed}\emptyset}{\vlhy{\vdots}}
		}
	}
	}
	}
	}
	}
	$
	\end{adjustbox}

    \ 

	\noindent This is an example of the more general `rule permutations' available in $\htcl$, hinting at a more flexible proof theory (we discuss this further in Sec.~\ref{sec:conclusions}).
\end{example}

\begin{example}
	[Transitivity]
	\label{ex:transitivity}
	$\TC$ can be proved transitive by way of a cyclic proof in $\ltcl$ of the sequent $\cotc acd ,\cotc ade,\tc{\bar a}ce$.
	As in the previous example we may mimic that proof line by line, but we give a slightly different one that cannot directly be interpreted as a $\ltcl$ proof:
	\renewcommand{\storageone}{Ex.~\ref{fixed-point-identity}}

\begin{adjustbox}{max width = \textwidth}
$
	\vlderivation{
	\vlin{\TC}{\circ}{ \red{\cotc acd}  , {\cotc ade}  , {\tc{\bar a}ce}  }{
	\vlin{\coTC}{}{ \red{\cotc acd}  , {\cotc ade}  ,  {\bar a(c,e)}  ,\str{\bar a(c,x),\tc{\bar a}xe}x}{
		\vliin{2\cup}{}{ { a(c,d),a(c,c')}  , \red{a(c,d),\cotc a{c'}d}  , {\cotc ade}  ,  {\bar a(c,e)}  ,\str{\bar a(c,x),\tc{\bar a}xe}x}{
			\vlin{\instrule}{}{ { a(c,d)}  ,  {\cotc ade}  ,\str{\bar a(c,x),\tc{\bar a}xe}x}{
				\vliin{\cup}{}{ { a(c,d)}  ,  {\cotc ade}  , {\bar a(c,d),\tc{\bar a}de}  }{
					\vlin{\reset}{}{ {a(c,d)}  , {\bar a(c,d)}  }{
						\vlin{\init}{}{ \str{\, }\emptyset  }{\vlhy{}}
					}
				}{
					\vliq{}{}{ {\tc{\bar a}de}  , {\cotc ade}  }{\vlhy{\text{\storageone}}}
				}
			}
		}{
			\vlin{\instrule}{}{ {a(c,c')}  , \red{\cotc a{c'}d}  ,  {\cotc ade}  ,  {\bar a(c,e)}  ,\str{\bar a(c,x),\tc{\bar a}xe}x}{
			\vliin{\cup}{}{ {a(c,c')}  , \red{\cotc a{c'}d}  ,  {\cotc ade}  ,  {\bar a(c,e)}  , {\bar a(c,c'),\tc{\bar a}{c'}e}  }{
				\vlin{\reset}{}{ {a(c,c')}  ,  {\bar a(c,c')}  }{
				\vlin{\init}{}{ \str{\, }\emptyset  }{\vlhy{}}
				}
			}{
				\vlin{\TC}{\circ}{ \red{\cotc a{c'}d}  , {\cotc ade}  , {\tc{\bar a}{c'}e}   }{\vlhy{\vdots}}
			}
			}
		}
	}
	}
	}
$
\end{adjustbox}

\

\noindent 	The only infinite branch (except for that from Ex.~\ref{fixed-point-identity}), looping on $\circ$, is progressing by the red 
trace. 
\end{example}

    \begin{example}
    \label{ex_htcl_appendix_1}
    We show a cyclic proof of the following hypersequent:
    \begin{equation*}
    \str{\cotc{\nf{\alpha}}{c}{d}}{\varnothing}, \str{\tc{a}{c}{d}}{\varnothing}, 
    \str{\tc{\beta}{c}{d}}{\varnothing},
    \str{\tc{a}{c}{y}, \tc{\beta}{y}{d}}{y}
    \end{equation*}
    where $\aredf{c}{d}   = \st{c, d}{a  a \union a  b  a}$ and $\gamma(c,d)  =  \st{c, d}{b  \plus{a}}$. The progressing hypertraces of the two infinite branches of the proof are highlighted in red. 
    
    \begin{adjustbox}{max width = \textwidth}
    $ 
    \vlderivation{
    \vlin{\cotcrule}{\circ}{ 
    \str{\cotc{\nf{\alpha}}{c}{d}}{\varnothing}, \red{\str{\tc{a}{c}{d}}{\varnothing}}, 
    \str{\tc{\gamma}{c}{d}}{\varnothing},
    \str{\tc{a}{c}{y}, \tc{\gamma}{y}{d}}{y} }{
    \vliin{\cup}{}{
    \str{\nf{\alpha}(c,d),\nf{\alpha}(c,e)}{\varnothing},
    \red{\str{\nf{\alpha}(c,d),\cotc{\nf{\alpha}}{e}{d}}{\varnothing}}, \str{\tc{a}{c}{d}}{\varnothing}, 
    \str{\tc{\gamma}{c}{d}}{\varnothing},
    \str{\tc{a}{c}{y}, \tc{\gamma}{y}{d}}{y}
    }{
    \vlhy{\qq_1}
    }{
   \vliin{\cup}{}{
   \str{\nf{\alpha}(c,e)}{\varnothing},
   \red{\str{\nf{\alpha}(c,d),\cotc{\nf{\alpha}}{e}{d}}{\varnothing}},  \str{\tc{a}{c}{d}}{\varnothing}, 
   \str{\tc{\gamma}{c}{d}}{\varnothing},
   \str{\tc{a}{c}{y}, \tc{\gamma}{y}{d}}{y}}{
    \vlhy{\qq_2}}{
    \vlid{\vlan}{}{
    \str{\nf{\alpha}(c,e)}{\varnothing},
    \red{\str{\cotc{\nf{\alpha}}{e}{d}}{\varnothing}},  \str{\tc{a}{c}{d}}{\varnothing}, 
    \str{\tc{\gamma}{c}{d}}{\varnothing},
    \str{\tc{a}{c}{y}, \tc{\gamma}{y}{d}}{y}
    }{
    \vliin{\cup}{}{
    \str{\nf{aa}(c,e),\nf{aba}(c,e)}{\varnothing},
\red{\str{\cotc{\nf{\alpha}}{e}{d}}{\varnothing}},   \str{\tc{a}{c}{d}}{\varnothing}, 
    \str{\tc{\gamma}{c}{d}}{\varnothing},
    \str{\tc{a}{c}{y}, \tc{\gamma}{y}{d}}{y}
    }{
    \vlhy{\qq_3}
    }{
    \vlin{\wk}{}{
    \str{\nf{aa}(c,e)}{\varnothing},
    \red{\str{\cotc{\nf{\alpha}}{e}{d}}{\varnothing}},   \str{\tc{a}{c}{d}}{\varnothing}, 
    \str{\tc{\gamma}{c}{d}}{\varnothing},
    \str{\tc{a}{c}{y}, \tc{\gamma}{y}{d}}{y}
    }{
    \vlid{\forall, \vlor}{}{
    \str{\nf{aa}(c,e)}{\varnothing},
   \red{\str{\cotc{\nf{\alpha}}{e}{d}}{\varnothing}},  \str{\tc{a}{c}{d}}{\varnothing}, 
    \str{\tc{a}{c}{y}, \tc{\gamma}{y}{d}}{y}
    }{
    \vlin{\tcrule,\wk}{}{
    \str{\nf{a}(c,e_1)}{\varnothing},
    \str{\nf{a}(e_1,e)}{\varnothing},
    \red{\str{\cotc{\nf{\alpha}}{e}{d}}{\varnothing}},  \str{\tc{a}{c}{d}}{\varnothing}, 
    \str{\tc{a}{c}{y}, \tc{\gamma}{y}{d}}{y}
    }{
    \vlin{\tcrule , \wk}{}{
    \str{\nf{a}(c,e_1)}{\varnothing},
    \str{\nf{a}(e_1,e)}{\varnothing},
    \red{\str{\cotc{\nf{\alpha}}{e}{d}}{\varnothing}},  \str{a(c,z),\tc{a}{z}{d}}{z}, 
    \str{\tc{a}{c}{y}, \tc{\gamma}{y}{d}}{y}
    }{
    \vlin{\instrule , \id}{}{
    \str{\nf{a}(c,e_1)}{\varnothing},
    \str{\nf{a}(e_1,e)}{\varnothing},
    \red{\str{\cotc{\nf{\alpha}}{e}{d}}{\varnothing}},  \str{a(c,z),\tc{a}{z}{d}}{z}, 
    \str{a(c,z_1),\tc{a}{z_1}{y}, \tc{\gamma}{y}{d}}{y,z_1}
    }{
    \vlin{\tcrule , \wk}{}{
    \str{\nf{a}(e_1,e)}{\varnothing},
    \red{\str{\cotc{\nf{\alpha}}{e}{d}}{\varnothing}},  \str{\tc{a}{e_1}{d}}{\varnothing}, 
    \str{\tc{a}{e_1}{y}, \tc{\gamma}{y}{d}}{y}
    }{
    \vlin{\tcrule}{}{
    \str{\nf{a}(e_1,e)}{\varnothing},
    \red{\str{\cotc{\nf{\alpha}}{e}{d}}{\varnothing}},  \str{a(e_1,z_2),\tc{a}{z_2}{d}}{z_2}, 
    \str{\tc{a}{e_1}{y}, \tc{\gamma}{y}{d}}{y}
    }{
    \vlin{\instrule, \id}{}{
    \str{\nf{a}(e_1,e)}{\varnothing},
    \red{\str{\cotc{\nf{\alpha}}{e}{d}}{\varnothing}},  \str{a(e_1,z_2),\tc{a}{z_2}{d}}{z_2}, 
    \str{a(e_1,y), \tc{\gamma}{y}{d}}{y},
    \str{a(e_1,z_3),\tc{a}{z_3}{y}, \tc{\gamma}{y}{d}}{y,z_3}
    }{
    \vlin{}{\circ}{
    \red{\str{\cotc{\nf{\alpha}}{e}{d}}{\varnothing}},  \str{\tc{a}{e}{d}}{\varnothing}, 
    \str{ \tc{\gamma}{e}{d}}{\varnothing},
    \str{\tc{a}{e}{y}, \tc{\gamma}{y}{d}}{y}
    }{\vlhy{\vdots}}
    }
    }
    }
    }
   }
    }
    }
    }
    }
    }
   }
   }
   }
   }
    $
\end{adjustbox}
    
    \
    
    We do not show the finite derivations of hypersequents $\qq_1$ and $\qq_2$; here follows the infinite derivation branch starting at $\qq_3$.
    
    \
     
    \begin{adjustbox}{max width = \textwidth}
    \begin{tabular}{c c l}
    $\qq_1$ & = &  $ \str{\nf{\alpha}(c,d)}{\varnothing},
    \str{\nf{\alpha}(c,d),\cotc{\nf{\alpha}}{e}{d}}{\varnothing}, \str{\tc{a}{c}{d}}{\varnothing}, 
    \str{\tc{\gamma}{c}{d}}{\varnothing},
    \str{\tc{a}{c}{y}, \tc{\gamma}{y}{d}}{y}$ \\[0.3cm]
    $\qq_2$ & = & $\str{\nf{\alpha}(c,e)}{\varnothing},
    \str{\nf{\alpha}(c,e)}{\varnothing},  \str{\tc{a}{c}{d}}{\varnothing}, 
    \str{\tc{\gamma}{c}{d}}{\varnothing},
    \str{\tc{a}{c}{y}, \tc{\gamma}{y}{d}}{y}$\\[0.3cm]
    $\qq_3$ & = & $\str{\nf{aba}(c,e)}{\varnothing},
    \str{\cotc{\nf{\alpha}}{e}{d}}{\varnothing},  \str{\tc{a}{c}{d}}{\varnothing}, 
    \str{\tc{\gamma}{c}{d}}{\varnothing},
    \str{\tc{a}{c}{y}, \tc{\gamma}{y}{d}}{y}$\\
    \end{tabular}
    \end{adjustbox}

    \ 
    
    \begin{adjustbox}{max width = \textwidth}
    $
    \vlderivation{
    \vlin{}{}{
    \str{\nf{aba}(c,e)}{\varnothing},
    \str{\cotc{\nf{\alpha}}{e}{d}}{\varnothing},  \red{\str{\cotc{\nf{\alpha}}{e}{d}}{\varnothing}}, 
    \str{\tc{\gamma}{c}{d}}{\varnothing},
    \str{\tc{a}{c}{y}, \tc{\gamma}{y}{d}}{y}
    }{
    \vlin{\wk}{}{
    \str{\nf{aba}(c,e)}{\varnothing},
   \red{\str{\cotc{\nf{\alpha}}{e}{d}}{\varnothing}},  
    \str{\tc{a}{c}{y}, \tc{\gamma}{y}{d}}{y}
    }{
    \vlid{\forall, \vlor}{}{
    \str{\nf{aba}(c,d)}{\varnothing},
   \red{\str{\cotc{\nf{\alpha}}{e}{d}}{\varnothing}},  
    \str{\tc{a}{c}{y}, \tc{\gamma}{y}{d}}{y}
    }{
    \vlid{\forall, \vlor}{}{
     \str{\nf{a}(c,e_1)}{\varnothing},
    \str{\nf{ba}(e_1,e)}{\varnothing},
    \red{\str{\cotc{\nf{\alpha}}{e}{d}}{\varnothing}},  
    \str{\tc{a}{c}{y}, \tc{\gamma}{y}{d}}{y}
    }{
    \vlin{}{}{
      \str{\nf{a}(c,e_1)}{\varnothing},
    \str{\nf{b}(e_1,e_2)}{\varnothing},
    \str{\nf{ba}(e_2,e)}{\varnothing},
    \red{\str{\cotc{\nf{\alpha}}{e}{d}}{\varnothing}},  
    \str{\tc{a}{c}{y}, \tc{\gamma}{y}{d}}{y}
    }{
    \vlin{\tcrule,\wk}{}{
    \str{\nf{a}(c,e_1)}{\varnothing},
    \str{\nf{b}(e_1,e_2)}{\varnothing},
    \str{\nf{a}(e_2,e)}{\varnothing},
    \red{\str{\cotc{\nf{\alpha}}{e}{d}}{\varnothing}},  
    \str{\tc{a}{c}{y}, \tc{\gamma}{y}{d}}{y}
    }{
    \vlin{\instrule,\id}{}{
    \str{\nf{a}(c,e_1)}{\varnothing},
    \str{\nf{b}(e_1,e_2)}{\varnothing},
    \str{\nf{a}(e_2,e)}{\varnothing},
    \red{\str{\cotc{\nf{\alpha}}{e}{d}}{\varnothing}},  
    \str{a(c,y), \tc{\gamma}{y}{d}}{y}
    }{
    \vlin{\tcrule}{}{
    \str{\nf{b}(e_1,e_2)}{\varnothing},
    \str{\nf{a}(e_2,e)}{\varnothing},
    \red{\str{\cotc{\nf{\alpha}}{e}{d}}{\varnothing}},  
    \str{ \tc{\gamma}{y}{d}}{y}
    }{
    \vlid{\exists, \vlan}{}{
    \str{\nf{b}(e_1,e_2)}{\varnothing},
    \str{\nf{a}(e_2,e)}{\varnothing},
    \red{\str{\cotc{\nf{\alpha}}{e}{d}}{\varnothing}},  
    \str{\st{y, d}{ba^+}}{y}
    \str{\st{y, z}{ba^+} ,\tc{\gamma}{z}{d}}{y,z}
    }{
    \vlin{\instrule , \id}{}{
    \str{\nf{b}(e_1,e_2)}{\varnothing},
    \str{\nf{a}(e_2,e)}{\varnothing},
   \red{\str{\cotc{\nf{\alpha}}{e}{d}}{\varnothing}},  
    \str{b (y, y_1), \tc{a}{y_1}{d} }{y,y_1}
    \str{ b (y, z_1), \tc{a}{z_1}{z},\tc{\gamma}{z}{d}}{y,z,z_1}
    }{
    \vlin{\tcrule ,\wk}{}{
    \str{\nf{a}(e_2,e)}{\varnothing},
    \red{\str{\cotc{\nf{\alpha}}{e}{d}}{\varnothing}},  
    \str{ \tc{a}{e_2}{d} }{\varnothing}
    \str{  \tc{a}{e_2}{z},\tc{\gamma}{z}{d}}{z}
    }{
    \vlin{\tcrule}{}{
    \str{\nf{a}(e_2,e)}{\varnothing},
    \red{\str{\cotc{\nf{\alpha}}{e}{d}}{\varnothing}},  
    \str{ a(e_2, w)\tc{a}{w}{d} }{w}
    \str{  \tc{a}{e_2}{z},\tc{\gamma}{z}{d}}{z}
    }{
    \vlin{\instrule,\id}{}{
    \str{\nf{a}(e_2,e)}{\varnothing},
   \red{\str{\cotc{\nf{\alpha}}{e}{d}}{\varnothing}},  
    \str{ a(e_2, w)\tc{a}{w}{d} }{w}
    \str{ a(e_2, z) ,\tc{\gamma}{z}{d}}{z},
    \str{ a(e_2, k), \tc{a}{k}{z},\tc{\gamma}{z}{d}}{z,k}
    }{
    \vlin{}{\circ}{
   \red{\str{\cotc{\nf{\alpha}}{e}{d}}{\varnothing}},  
    \str{ \tc{a}{e}{d} }{\varnothing}
    \str{\tc{\gamma}{e}{d}}{\varnothing},
    \str{  \tc{a}{e}{z},\tc{\gamma}{z}{d}}{z}
    }{\vlhy{\vdots}}
    }
    }
    }
    }
    }
    }
    }
    }
    }
    }
    }
    }
    }
    }
    $
    \end{adjustbox}
    \end{example}

Finally, it is pertinent to revisit the `counterexample' \eqref{eq:counterexample-formula-pdltr} that witnessed incompleteness of $\ltcl$ for $\PDLtr$.
The following result is, in fact, already implied by our later completeness result, Thm.~\ref{thm:completeness-htcl-pdltr}, but we shall present it nonetheless:

\begin{proposition}
\label{prop:counterexample-in-htcl}
$\htcl \dercyc \st{c,d}{(aa \cup aba)^+} \vljm \st{c,d}{a^+((ba^+)^+\cup a)}$.
\end{proposition}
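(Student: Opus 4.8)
The plan is to construct an explicit regular $\htcl$-derivation, refining the pattern of Example~\ref{ex_htcl_appendix_1} (which treats the very same programs). Write $\bar\alpha$ for $\overline{\st{x,y}{aa \cup aba}}$ and set $\gamma(x,y) = \st{x,y}{ba^+}$, so that $\tc\gamma z d$ is the standard translation of $(ba^+)^+$. First I would unwind the goal: since $\vljm$ abbreviates $\bar A \vlor B$, and duality commutes with the standard translation ($\overline{\st x A} = \st x{\bar A}$), the formula of the proposition is
\[
\cotc{\bar\alpha}cd \ \vlor\ \exists z\bigl(\tc a c z \vlan (\tc\gamma z d \vlor a(z,d))\bigr),
\]
so it suffices to give a cyclic proof of the hypersequent $\sq_0 := \str{\cotc{\bar\alpha}cd}\emptyset,\ \str{\exists z(\tc a c z \vlan (\tc\gamma z d \vlor a(z,d)))}\emptyset$, whose interpretation $\fmi{\sq_0}$ is exactly this formula.

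Next I would decompose the existential cedent by the invertible rules $\exists$ and $\vlan$, reaching $\str{\cotc{\bar\alpha}cd}\emptyset,\ \str{\tc a c y,\ \tc\gamma y d \vlor a(y,d)}{y}$. The core of the construction is then a cyclic derivation that keeps this outer structure intact and proceeds by unfolding the left $\coTC$-formula $\cotc{\bar\alpha}cd$ with the $\coTC$ rule, matching each unfolding on the right. Each bottom-up $\coTC$ step introduces a Herbrand constant (via the fresh $f$) denoting the next point on the co-path; on the right I would instantiate (rule $\instrule$) the annotation variable to track this same point and unfold the $a^+$-prefix $\tc a c y$ with the $\TC$ rule, using the $\cup$ rule to separate the resulting cedents. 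Because $\alpha = aa \cup aba$, every unfolding commits, on the given branch, either to an $aa$-block — routed through the pure-$a$ disjunct $a(y,d)$ after extending the $\tc a$-prefix — or to an $aba$-block — routed through the $\tc\gamma = (ba^+)^+$ disjunct. Crucially, the additive $\vlor$ that resolves $\tc\gamma y d \vlor a(y,d)$ is applied only locally, on each branch once the block type is fixed, so a single disjunct always suffices there; this lazy, branch-local handling of the disjunction is exactly the deep-inference behaviour that $\ltcl$ lacks (cf.\ \Cref{thm:incompleteness}).

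After finitely many steps each branch returns to a hypersequent identical, up to a renaming of the freshly introduced constants (i.e.\ up to substitution), to one already seen, so we may close the cycle and obtain a regular preproof. It then remains to certify the progress condition: on every infinite branch I would exhibit a progressing hypertrace following the left cedents carrying $\cotc{\bar\alpha}{s}{d}$, which is principal infinitely often along the loop, and check that the unique infinite trace along this hypertrace is the corresponding sequence of $\coTC$-formulas, hence progressing in the sense of \Cref{def:progress_tcl}. This is precisely the red hypertrace of Example~\ref{ex_htcl_appendix_1}.

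The main obstacle I anticipate is not the individual rule applications but the simultaneous bookkeeping needed to make the construction genuinely regular \emph{and} progressing: one must instantiate the right-hand existentials and unfold the $a^+$-prefix so that the hypersequents recur up to substitution (otherwise branches grow forever, exactly as in the $\ltcl$ argument of \Cref{branch-no-repetitions-until-arbitrary}), while simultaneously ensuring that on each branch the block structure of $\bar\alpha$ selects a disjunct ($\tc\gamma y d$ or $a(y,d)$) for which the remaining cedent is again provable. Verifying that a single progressing hypertrace threads through all of these choices — and that the length-one final edge is genuinely covered by the $a(y,d)$ disjunct, rather than needing the weakening to $\tc a c d$ used in Example~\ref{ex_htcl_appendix_1} — is the delicate part, and is what distinguishes this direct argument from merely invoking the Simulation Theorem (Thm.~\ref{thm:completeness-htcl-pdltr}).
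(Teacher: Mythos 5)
You have the paper's strategy in outline: reduce via $\vlor,\exists,\vlan$ to the two-cedent hypersequent $\str{\cotc{\bar\alpha}{c}{d}}{\emptyset}, \str{\tc{a}{c}{y}, \tc{\gamma}{y}{d} \vlor a(y,d)}{y}$, run a cyclic derivation whose main loop plays the left $\coTC$-unfoldings against $\TC$/$\instrule$ steps on the right, certify progress by the hypertrace on the $\cotc{\bar\alpha}$ cedents, and lean on Example~\ref{ex_htcl_appendix_1}. But there is a genuine flaw at the crux of your plan: you route branches by the type of the \emph{current} block (``$aa$-block $\Rightarrow$ the $a(y,d)$ disjunct, $aba$-block $\Rightarrow$ the $\tc{\gamma}{y}{d}$ disjunct''). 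The disjunct is not determined by the current block; it is determined by whether \emph{any later} block is $aba$. Concretely, on the branch where the first block is $aa$ but $\cotc{\bar\alpha}{e}{d}$ remains (more blocks follow), resolving the disjunction to $a(y,d)$ leaves essentially $\str{\overline{aa}(c,e)}{\emptyset}, \str{\cotc{\bar\alpha}{e}{d}}{\emptyset}, \str{\tc{a}{c}{y}, a(y,d)}{y}$, which is \emph{invalid} --- take the word $aa\cdot aba$, i.e.\ $a(0,1),a(1,2),a(2,3),b(3,4),a(4,5)$ with $c=0$, $e=2$, $d=5$ --- and hence unprovable, by Theorem~\ref{thm:soundness_tcl}. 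Any per-block criterion is a local one, and the whole point of this formula (it is why $\ltcl$ fails, cf.\ Theorem~\ref{thm:incompleteness}) is that no local choice of disjunct can work.

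The paper's derivation avoids this exactly by making the \emph{unresolved} disjunction part of the loop invariant: the cycle $\bullet$ is on $\str{\cotc{\bar\alpha}{e}{d}}{\emptyset}, \str{\tc{a}{e}{y}, \tc{\gamma}{y}{d} \vlor a(y,d)}{y}$, so $aa$-blocks are traversed \emph{without} touching the $\vlor$; it is resolved only (i) in the finitary side-derivations $\mathbf R,\mathbf R'$ handling the case that the current block is the last one, and (ii) on meeting the \emph{first} $aba$ block (sub-proof $\mathbf P$), where it is resolved to $\tc{\gamma}{y}{d}$. Moreover, after that resolution the two-cedent shape does not recur: $\mathbf P$ flows into the \emph{four}-cedent hypersequent of Example~\ref{ex_htcl_appendix_1} (with cedents for $\cotc{\bar\alpha}$, $\tc{a}{e}{d}$, $\tc{\gamma}{e}{d}$ and $\tc{a}{e}{z},\tc{\gamma}{z}{d}$), which is the loop invariant for the phase inside $(ba^+)^+$; your single-loop description, keeping ``this outer structure intact'', misses this change of invariant, although your citation of the Example points at the right object. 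Your progress argument --- the unique infinite trace along the $\cotc{\bar\alpha}$ hypertrace is principal infinitely often --- is correct once the derivation is repaired in this way.
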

\begin{proof}
We use the following abbreviations: $\aredf{c}{d}   =  \st{c, d}{a  a \union a  b  a}$ 
and $\bbluef{c}{d}   = \st{c, d}{(b  \plus{a})^+ \cup a}$. The progressing hypertrace is marked in blue.

\begin{adjustbox}{max width = \textwidth}
$
\vlderivation{
	\vlid{\vlor, \exists, \vlan}{}{ 
	\blue{\str{\nf{\st{c,d}{(aa \cup aba)^+}} \vlor  \st{c, d}{a^+ ;((ba^+)^+ \cup a )} }{\varnothing}} }{
	\vlin{\cotcrule}{\bullet}{ 
	\blue{\str{ \cotc{\coared}{ c}{ d} }{\varnothing}}, \str{ \tc{{a}}{ c}{  y}, \bblue({y}, {d}) }{y} }{
	\vliin{\cup}{}{  
	\str{\coaredf{c}{d}, \coaredf{c}{e} }{\varnothing}, \blue{\str{\coaredf{c}{d}, \cotc{\coared}{ e}{ d} }{\varnothing}}, \str{ \tc{{a}}{ c}{ y}, \bblue({y}, {d}) }{y}  }{
	\vlhy{\mathbf{R}}}
	{
	\vliin{\cup}{}{ 
	\str{\coaredf{c}{e} }{\varnothing}, \blue{\str{\coaredf{c}{d}, \cotc{\coared}{ e}{ d} }{\varnothing}}, \str{ \tc{{a}}{\c}{  y}, \bblue({y}, {d})  }{y} }{
	\vlhy{\mathbf{R}'}
	}{
	\vlid{\vlan}{}{ \str{\coaredf{c}{e} }{\varnothing}, 
	\blue{\str{ \cotc{\coared}{e}{ d} }{\varnothing}}, \str{ \tc{{a}}{ c}{  y}, \bblue({y}, {d}) }{y} }{
	\vliin{\cup}{}{\str{\nf{aa}(c,e), \nf{aba}(c,e) }{\varnothing}, \blue{\str{ \cotc{\coared}{ e}{ d} }{\varnothing}}, \str{ \tc{{a}}{ c}{ y}, \bblue({y}, {d})  }{y}  }{
	\vlhy{\mathbf{P}}
	}{
	\vlid{\forall, \vlor}{}{\str{\nf{aa}(c,e) }{\varnothing}, 
	\blue{\str{ \cotc{\coared}{ e}{ d} }{\varnothing}}, \str{ \tc{{a}}{ c}{  y}, \bblue({y},{d})  }{y}  }{
	\vlin{\tcrule}{}{ \str{ {\nf{a}}({c}, {f}) }{\varnothing}, \str{{\nf{a}}( f,  e) }{\varnothing}, \blue{\str{ \cotc{\coared}{ e}{ d} }{\varnothing}}, \str{ \tc{{a}}{ c}{ y}, \bblue({y}, {d})  }{y}   }{
	\vlin{\wk}{}{ \str{ {\nf{a}}({c}, {f}) }{\varnothing}, \str{{\nf{a}}( f,  e) }{\varnothing}, \blue{ \str{ \cotc{\coared}{ e}{ d} }{\varnothing}},   \str{\abluef{c}{y}, \bblue({y}, {d})  }{y},  \str{ \abluef{c}{z}, \tc{{a}}{ z}{  y},  \bblue({y}, {d}) }{y,z} }{
	\vlin{\instrule}{[f/z]}{\str{ {\nf{a}}({c}, {f}) }{\varnothing}, \str{{\nf{a}}( f,  e) }{\varnothing},  \blue{\str{ \cotc{\coared}{ e}{ d} }{\varnothing}},    \str{ \abluef{c}{z}, \tc{{a}}{ z}{  y}, \bblue({y}, {d}) }{y,z} }{
	\vlin{\id}{}{ \str{ {\nf{a}}({c}, {f}) }{\varnothing}, \str{{\nf{a}}( f,  e) }{\varnothing},  \blue{\str{ \cotc{\coared}{ e}{ d} }{\varnothing}},    \str{ \abluef{c}{f}, \tc{{a}}{ f}{  y}, \bblue({y}, {d})  }{y} }{
	\vlin{\tcrule}{}{ \str{{\nf{a}}( f,  e) }{\varnothing},  \blue{\str{ \cotc{\coared}{ e}{ d} }{\varnothing}},    \str{\tc{{a}}{ f}{  y}, \bblue({y}, {d}) }{y} }{
	\vlin{\wk}{}{  \str{{\nf{a}}( f,  e) }{\varnothing},  \blue{\str{ \cotc{\coared}{ e}{\ d} }{\varnothing}}, \str{\abluef{ f}{  y}, \bblue({y}, {d}) }{y},  \str{\abluef{f}{k},\tc{{a}}{ k}{  y}, \bblue({y}, {d}) }{y,k} }{
	\vlin{\instrule}{[e/k]}{ \str{{\nf{a}}( f,  e) }{\varnothing}, \blue{ \str{ \cotc{\coared}{ e}{ d} }{\varnothing}},  \str{\abluef{f}{k},\tc{{a}}{ k}{  y}, \bblue({y}, {d}) }{y,k} }{
	\vlin{\id}{}{\str{{\nf{a}}( f,  e) }{\varnothing},  \blue{\str{ \cotc{\coared}{ e}{ d} }{\varnothing}},  \str{\abluef{f}{g},\tc{{a}}{ g}{  y}, \bblue({y}, {d}) }{y}  }{
	%
	%
	%
	%
	\vlin{}{\bullet}{ \blue{ \str{ \cotc{\coared}{ e}{ d} }{\varnothing}}, \str{ \tc{{a}}{ j}{  e}, \bblue({ y}{ d})}{y} }{\vlhy{\vdots}}
	}
	}
	}
	}
	}
	}
	}
	}
	}
	}
	}
	}
	}
	}
	}
}
$
\end{adjustbox}

\

\noindent In the above derivation, $ \mathbf{R} , \mathbf{R}'$ and $\mathbf P $ are the following hypersequents:
\begin{eqnarray*}
\mathbf{R} & = & \str{\coaredf{c}{d} }{\varnothing}, \str{\coaredf{c}{d}, \cotc{\coared}{ e}{ d} }{\varnothing}, \str{ \tc{{a}}{ c}{  y}, \bblue({y}, {d}) }{y}\\
\mathbf{R}' & = & \str{ \coaredf{c}{d} }{\varnothing}, \str{\coaredf{c}{d} }{\varnothing}, \str{ \tc{{a}}{ c}{  y}, \bblue({y}, {d}) }{y}\\
\mathbf P & = & \str{\nf{aba}(c,e)}{\varnothing}, \str{ \cotc{\coared}{ e}{ d} }{\varnothing}, \str{ \tc{{a}}{ c}{  y}, \bblue({y}, {d}) }{y}
\end{eqnarray*}
$\mathbf R,\mathbf R'$ have finitary proofs, while
 $\mathbf{P}$ has a cyclic proof, shown below, 
 which uses \Cref{ex_htcl_appendix_1} above. 
	We use the following abbreviations:
	\[
	\begin{array}{r c l}
		 \aredf{c}{d}  &  = & \st{c, d}{a  a \union a  b  a} \\
		 \bbluef{c}{d} &  = & \st{c, d}{(b  \plus{a})^+ \cup a}\\
		 \gamma(c,d) &  = & \st{c, d}{b  \plus{a}}.
	\end{array}
	\]
	
	\begin{adjustbox}{max width = \textwidth}
	$
	\vlderivation{
	\vlid{\vlor}{}{
	\str{ \nf{aba}(c,e) }{\varnothing}, 
	\str{ \cotc{\coared}{ e}{ d} }{\varnothing}, 
	\str{ \tc{{a}}{ c}{  y}, \bblue({y}, {d}) }{y} 
	}{
	\vlin{\wk}{}{
	\str{ \nf{aba}(c,e) }{\varnothing}, 
	\str{ \cotc{\coared}{ e}{ d} }{\varnothing}, 
	\str{ \tc{{a}}{ c}{  y}, \tc{\gamma}{y}{d} }{y} ,
	\str{ \tc{{a}}{ c}{  y}, a(y,d) }{y} 
	}{
	\vlid{\forall, \vlor}{}{
	\str{ \nf{aba}(c,e) }{\varnothing}, 
	\str{ \cotc{\coared}{ e}{ d} }{\varnothing}, 
	\str{ \tc{{a}}{ c}{  y}, \tc{\gamma}{y}{d} }{y} 
	}{
	\vlid{\forall, \vlor}{}{
	\str{ \nf{a}(c,e_1) }{\varnothing}, 
	\str{ \nf{ba}(e_1,e) }{\varnothing}, 
	\str{ \cotc{\coared}{ e}{ d} }{\varnothing}, 
	\str{ \tc{{a}}{ c}{  y}, \tc{\gamma}{y}{d} }{y} 
	}{
	\vlin{\tcrule, \wk}{}{
	\str{ \nf{a}(c,e_1) }{\varnothing}, 
	\str{ \nf{b}(e_1,e_2) }{\varnothing}, 
	\str{ \nf{a}(e_2,e) }{\varnothing}, 
	\str{ \cotc{\coared}{ e}{ d} }{\varnothing}, 
	\str{ \tc{{a}}{ c}{  y}, \tc{\gamma}{y}{d} }{y} 
	}{
	\vlin{\instrule, \id}{}{
	\str{ \nf{a}(c,e_1) }{\varnothing}, 
	\str{ \nf{b}(e_1,e_2) }{\varnothing}, 
	\str{ \nf{a}(e_2,e) }{\varnothing}, 
	\str{ \cotc{\coared}{ e}{ d} }{\varnothing}, 
	\str{ a(c,y), \tc{\gamma}{y}{d} }{y} 
	}{
	\vlin{\tcrule}{}{
	\str{ \nf{b}(e_1,e_2) }{\varnothing}, 
	\str{ \nf{a}(e_2,e) }{\varnothing}, 
	\str{ \cotc{\coared}{ e}{ d} }{\varnothing}, 
	\str{ \tc{\gamma}{e_1}{d} }{\varnothing} 
	}{
	\vlid{\exists, \vlan}{}{
	\str{ \nf{b}(e_1,e_2) }{\varnothing}, 
	\str{ \nf{a}(e_2,e) }{\varnothing}, 
	\str{ \cotc{\coared}{ e}{ d} }{\varnothing},
	\str{ \st{e_1, d}{ba^+} }{\varnothing}, 
	\str{\st{e_1, z}{ba^+}, \tc{\gamma}{z}{d} }{z} 
	}{
	\vlin{\instrule, \id}{}{
	\str{ \nf{b}(e_1,e_2) }{\varnothing}, 
	\str{ \nf{a}(e_2,e) }{\varnothing}, 
	\str{ \cotc{\coared}{ e}{ d} }{\varnothing},
	\str{b(e_1,k), \tc{a}{k}{d}  }{k}, 
	\str{b(e_1,z_1), \tc{a}{z_1}{z}, \tc{\gamma}{z}{d} }{z,z_1} 
	}{
	\vlin{\tcrule}{}{
	\str{ \nf{a}(e_2,e) }{\varnothing}, 
	\str{ \cotc{\coared}{ e}{ d} }{\varnothing},
	\str{ \tc{a}{e_2}{d}  }{\varnothing}, 
	\str{ \tc{a}{e_2}{z}, \tc{\gamma}{z}{d} }{z}
	}{
	\vlin{\tcrule,wk}{}{
	\str{ \nf{a}(e_2,e) }{\varnothing}, 
	\str{ \cotc{\coared}{ e}{ d} }{\varnothing},
	\str{ \tc{a}{e_2}{d}  }{\varnothing}, 
	\str{a(e_2, z), \tc{\gamma}{z}{d} }{z},
	\str{a(e_2, w), \tc{a}{w}{z}, \tc{\gamma}{z}{d} }{z,w}
	}{
	\vlin{\instrule, \id}{}{
	\str{ \nf{a}(e_2,e) }{\varnothing}, 
	\str{ \cotc{\coared}{ e}{ d} }{\varnothing},
	\str{a(e_2, j), \tc{a}{j}{d}  }{j}, 
	\str{a(e_2, z), \tc{\gamma}{z}{d} }{z},
	\str{a(e_2, w), \tc{a}{w}{z}, \tc{\gamma}{z}{d} }{z,w}
	}{
	\vliq{}{}{
	\str{ \cotc{\coared}{ e}{ d} }{\varnothing},
	\str{ \tc{a}{e}{d}  }{\varnothing}, 
	\str{ \tc{\gamma}{e}{d} }{\varnothing},
	\str{ \tc{a}{e}{z}, \tc{\gamma}{z}{d} }{z}
	}{\vlhy{\Cref{ex_htcl_appendix_1}}}
	}
	}
	}
	}
	}
	}
	}
	}
	}
	}
	}
	}
	}
	$
	\end{adjustbox}

\end{proof}

\subsection{On cyclic-proof checking}
In usual cyclic systems, checking that a regular preproof is progressing is decidable by straightforward reduction to the universality of nondeterministic $\omega$-word-automata,
with runs `guessing' a progressing thread along an infinite branch.
Our notion of progress exhibits an extra quantifier alternation: we must \emph{guess} an infinite hypertrace in which \emph{every} trace is progressing.
Nonetheless, by appealing to determinisation or alternation, 
we can still decide our progressing condition:
\begin{proposition}
\label{prop:checking-progress-htcl-automata}
    Checking whether a $\htcl$ preproof is a proof is decidable by reduction to universality (or emptiness) of $\omega$-regular languages.
\end{proposition}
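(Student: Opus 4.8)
The plan is to model the infinite branches of the (regular) preproof as $\omega$-words over a finite alphabet and to express the progress condition of \Cref{def:progress_tcl} as an $\omega$-regular containment. Since $\derd$ is regular, it is presented by a finite graph; let $\Sigma$ be the finite set of its rule instances (nodes together with the applied rule). Infinite branches correspond to infinite paths through this graph, so the set $B \subseteq \Sigma^\omega$ of codes of infinite branches is $\omega$-regular, recognised by the preproof graph itself read as a B\"uchi automaton with every state accepting. The target statement ``$\derd$ is a proof'' is precisely that every infinite branch is progressing, i.e.\ $B \subseteq P$, where $P$ is the set of progressing branches. It therefore suffices to show that $P$ is $\omega$-regular, after which $B \subseteq P$ (equivalently $B \cap \overline P = \emptyset$) is decidable by the standard closure and emptiness results for $\omega$-regular languages.

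The heart of the argument is unfolding the quantifier structure of $P$ and realising it by automata. A branch is progressing iff there \emph{exists} a hypertrace along it \emph{every} trace of which is progressing: a nesting of $\exists$ (over hypertraces) above $\forall$ (over traces), as in \Cref{def:hypertraces_htcl,def:progress_tcl}. I would capture the existential layer by enriching the alphabet to $\hat\Sigma = \Sigma \times C$, where the $C$-component records, at each step, which cedent is selected as the next point of the hypertrace; by \Cref{def:ancestry_htcl} this choice is non-trivial exactly at the $\TC$ and $\coTC$ steps, where a conclusion cedent has two immediate-ancestor cedents in its premiss. A word over $\hat\Sigma$ thus codes a branch together with a candidate hypertrace, and the set of words whose $C$-component traces out a genuine \emph{maximal infinite} hypertrace is recognised by a deterministic B\"uchi automaton $\mathcal N$ with all states accepting. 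Over this enriched alphabet the inner condition ``every trace within the chosen hypertrace is progressing'' is a \emph{universal} B\"uchi condition: a universal automaton $\mathcal U$ spawns, at each step, one copy per formula of the current hypertrace-cedent, each copy following formula-ancestry; a copy whose trace terminates halts and accepts (only \emph{infinite} traces must progress), while an infinite copy is accepting iff it is progressing, which, using the simplification in the footnote to \Cref{def:progress_tcl}, is the B\"uchi condition ``infinitely often principal for a $\coTC$-formula''. As a universal B\"uchi automaton is the dual of a nondeterministic one, $\mathcal U$ complements into an NBA by the usual construction; intersecting with $\mathcal N$ and projecting away the $C$-component yields an NBA for $P = \pi_\Sigma\bigl(L(\mathcal N)\cap L(\mathcal U)\bigr)$, so $P$ is $\omega$-regular.

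Equivalently one may present $P$ directly as an alternating B\"uchi automaton over $\Sigma$ whose $\lor$-branching guesses the hypertrace and whose $\land$-branching verifies all traces along it, and appeal to the Miyano--Hayashi translation of alternating B\"uchi automata into NBAs; either route is effective, and the decision procedure then complements the NBA for $P$, intersects with $B$, and tests emptiness. The one genuinely new point, compared with the standard cyclic-proof case where a single nondeterministic trace-guessing automaton suffices, is the \emph{coordination} forced by the $\exists\forall$ alternation: all the universal trace-copies must verify traces lying within \emph{one and the same} existentially chosen hypertrace. This is exactly what the enriched alphabet $\hat\Sigma$ enforces, since the hypertrace is fixed as part of the word before the universal check runs, and it is where the extra round of determinisation/complementation, absent in the ordinary setting, is required. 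I expect this coordination to be the main conceptual obstacle; the remaining work, namely verifying that $\mathcal N$ and $\mathcal U$ are correctly read off \Cref{def:ancestry_htcl,def:hypertraces_htcl} and that maximal finite traces are treated as accepting, is routine.
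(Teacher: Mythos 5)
Your argument is correct and shares the paper's overall strategy---existentially guess a hypertrace, universally verify all traces along it, and reduce to standard decision problems for $\omega$-regular languages---but the key step is realised with genuinely different machinery. The paper first builds a nondeterministic B\"uchi automaton recognising \emph{non-progressing} hypertraces (nondeterminism guesses the offending trace), then applies B\"uchi complementation and McNaughton determinisation to obtain a \emph{deterministic parity} automaton $\mathcal P_H$ for progressing hypertraces; a nondeterministic parity automaton then guesses the hypertrace along a branch while running $\mathcal P_H$ deterministically, and proof-checking becomes universality of this composite (your containment check $B \subseteq P$ via complementation and emptiness is the same test in different clothing, and both fit the proposition's ``universality or emptiness'' phrasing). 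You instead freeze the hypertrace inside the word via the enriched alphabet $\Sigma \times C$, express the inner $\forall$ over traces directly as a universal (equivalently, alternating) B\"uchi automaton, convert to a nondeterministic B\"uchi automaton, and then intersect, project, and decide containment. What each buys: the paper's determinisation is, as it itself concedes, overkill, but it makes composition with the outer existential guess transparently sound; your route avoids determinisation and parity acceptance altogether, needing only the cheaper Miyano--Hayashi breakpoint construction, at the price of one B\"uchi complementation at the very end---a cost the paper also pays, implicitly, in its universality test. One phrasing slip to repair: a universal B\"uchi automaton does not ``complement into an NBA'' by dualisation---the dual automaton is a nondeterministic \emph{co-B\"uchi} automaton recognising the complement language---so the first of your two routes is garbled as stated; this is harmless only because your second route, through alternating automata and Miyano--Hayashi, is exactly the correct and standard way to obtain an NBA for $L(\mathcal U)$, and it is the one your proof should rely on.
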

\anupam{give proof in appendix? add citation.}

\begin{proof}[Proof sketch]
The result is proved using using automata-theoretic techniques. 
Fix a cyclic $\htcl$ preproof $\derd$.
First, using standard methods from cyclic proof theory, it is routine to construct a nondeterministic B\"uchi automaton recognising non-progressing hypertraces of $\derd$.
The construction is similar to that recognising progressing branches in cyclic sequent calculi, e.g.\ as found in \cite{daxhofmannlange06} or \cite{Simpson17,Das20:cyclic-arithmetic}, since we are asking that there \emph{exists} a non-progressing trace within a hypertrace.
By B\"uchi's complementation theorem and McNaughton's determinisation theorem (see, e.g., \cite{thomas1997languages} for details), we can thus construct a deterministic parity automaton $\mathcal P_H$ recognising progressing hypertraces. 
(This is overkill, but allows us to easily deal with the issue of alternation in the progressing condition.)

Now we can construct a nondeterministic parity automaton $\mathcal P$ recognising progressing branches of $\derd$ similarly to the previous construction, but further keeping track of states in $\mathcal P_H$:
\begin{itemize}
    \item $\mathcal P$ essentially guesses a `progressing' hypertrace along the branch input;
    \item at the same time, $\mathcal P$ runs the hypertrace-in-construction along $\mathcal P_H$ and keeps track of the state therein;
    \item acceptance for $\mathcal P$ is inherited directly from $\mathcal P_H$, i.e.\ the hypertrace guessed is accepting for $\mathcal P$ just if it is accepted by $\mathcal P_H$. 
\end{itemize}
Now it is clear that $\mathcal P$ accepts a branch of $\derd$ if and only if it is progressing.
Assuming that $\mathcal P$ also accepts any $\omega$-words over the underlying alphabet that are not branches of $\derd$ (by adding junk states), we have that $\derd$ is a proof (i.e.\ is progressing) if and only if $\mathcal P$ is universal. 
\end{proof}

\subsection{Simulating Cohen-Rowe}
As we mentioned earlier, cyclic proofs of $\htcl$ indeed are at least as expressive as those of Cohen and Rowe's system by a routine local simulation of rules:

\begin{theorem}
	\label{simulation-of-cohen-rowe}
	If $\ltcl \dercyc A$ then $\htcl \dercyc A$.
\end{theorem}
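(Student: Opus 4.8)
The plan is to exhibit a \emph{local} translation of $\ltcl$ derivations into $\htcl$ derivations that preserves both regularity and the progress condition. On sequents, the translation sends an $\ltcl$ sequent $\Gamma = \{A_1,\dots,A_n\}$ to the hypersequent $\hat\Gamma \defsym \{\str{A_1}\emptyset,\dots,\str{A_n}\emptyset\}$ of singleton cedents, which has the same formula reading $A_1 \vlor\cdots\vlor A_n$. I would then show that each rule of $\ltcl$ is \emph{locally derivable} in $\htcl$: for each instance with conclusion $\Gamma$ and premisses $\Gamma_1,\dots,\Gamma_m$, there is a fixed finite $\htcl$ derivation of $\hat\Gamma$ from $\hat\Gamma_1,\dots,\hat\Gamma_m$. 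Replacing every node of a given (regular) $\ltcl$ preproof by its gadget and routing back-edges accordingly yields an $\htcl$ preproof; since each gadget is of bounded size, the translation is deterministic, and repeated leaves of a gadget may share a single translated subderivation, regularity is preserved.

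Next I would go through the gadgets. The propositional and quantifier rules are immediate or near-immediate: $\vlor_i$ and $\forall$ (with empty annotation, so the Herbrand term $f(\bv)$ reduces to a fresh constant) map to their $\htcl$ namesakes; $\wk$ and $\sigma$ map directly; $\ltcl$-$\exists$ is simulated by the $\htcl$ $\exists$ rule followed by $\instrule$, which re-instantiates the fresh annotation variable by the witnessing term; and $\id$ on $\Gamma,p(t),\bar p(t)$ is obtained by one $\reset$ step (with closed $A = p(t)$) reducing to $\str{\,}\emptyset$, then $\wk$ and $\init$. The interesting cases are those where $\ltcl$ branches or unfolds a fixed point. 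The key observation is that the $\cup$ rule realises the distribution of a disjunction over a conjunction, i.e.\ it derives $\sq\vlor(F_1\vlan F_2)$ from $\sq\vlor F_1$ and $\sq\vlor F_2$ on the disjunctive reading of hypersequents. Thus the two-premiss $\vlan$ of $\ltcl$ is simulated by a $\cup$ step (merging $\str A\emptyset$ and $\str B\emptyset$ into $\str{A,B}\emptyset$) followed by the single-premiss $\htcl$ $\vlan$; the rule $\TC_0$ is the $\htcl$ $\TC$ rule followed by $\wk$; and $\TC_1$ is the $\htcl$ $\TC$ rule, then $\wk$ to discard the base cedent, $\instrule$ setting the fresh $z$ to $r$, and a $\cup$ to split $\str{A(s,r),\tc A r t}\emptyset$ into the two translated premisses. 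Finally, $\ltcl$-$\coTC$ is simulated by the $\htcl$ $\coTC$ rule (taking $f(\bv)=\cons{c}$), whose premiss reads $\sq\vlor(A(s,t)\vlan A(s,\cons{c}))\vlor(A(s,t)\vlan\cotc{A}{\cons{c}}{t})$, followed by three $\cup$ steps and weakenings that recover this from $\hat P_1 = \sq\vlor A(s,t)$ and $\hat P_2 = \sq\vlor A(s,\cons{c})\vlor\cotc{A}{\cons{c}}{t}$.

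The main obstacle, and the crux of the argument, is transferring the progress condition, since $\htcl$ uses the richer \emph{alternating} criterion under which a hypertrace progresses only if \emph{every} infinite trace along it progresses. Here the singleton-cedent shape of the translation is essential: I would maintain the invariant that at the boundary of every simulated $\ltcl$ step each formula occupies its own singleton cedent, the multi-formula cedents produced inside a gadget (such as $\str{A,B}\emptyset$, $\str{A(s,r),\tc A r t}\emptyset$, or the $\coTC$-premiss cedents) being transient and eliminated within the same gadget by $\cup$. Consequently a progressing $\ltcl$ trace $(\cotc{A}{s_i}{t_i})_{i\ge k}$ lifts to the hypertrace following the singleton cedents $\str{\cotc{A}{s_i}{t_i}}\emptyset$; because these cedents are singletons, and any formula transiently appearing beside a traced $\coTC$-formula (e.g.\ the $A(s,t)$ sitting next to $\cotc{A}{\cons{c}}{t}$) dies at the very next $\cup$ step and so supports no infinite trace, this hypertrace carries exactly one infinite trace, namely the $\coTC$ trace itself. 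That trace is infinitely often principal precisely when the $\ltcl$ $\coTC$ rule fires on it infinitely often, i.e.\ exactly when the original trace progresses. Since each gadget is finite, every infinite $\htcl$ branch passes through infinitely many gadgets and hence projects to an infinite $\ltcl$ branch carrying a progressing trace; therefore every infinite branch of the simulation has a progressing hypertrace, and the construction is a genuine cyclic $\htcl$ proof. I would conclude by noting that the equality rules $=,\neq$, absent from the equality-free fragment that is our focus, are handled identically in the reflexive extension of Sec.~\ref{sec:pdl}.
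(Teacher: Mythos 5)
Your proposal is correct and follows essentially the same route as the paper's proof: translate each sequent $\Gamma$ to the hypersequent of singleton cedents $\str{A_1}\emptyset,\dots,\str{A_n}\emptyset$, simulate each $\ltcl$ rule by a fixed finite $\htcl$ gadget (using $\cup$ to merge/split cedents, $\instrule$ after $\exists$ and $\TC$, and $\reset$/$\init$ for $\id$), and observe that progressing $\ltcl$ traces lift to progressing hypertraces. Your explicit argument that transient companion formulas (e.g.\ the $A(s,t)$ beside $\cotc{A}{\cons{c}}{t}$) die at the next $\cup$ step, so the lifted hypertrace carries exactly one infinite trace, is a correct elaboration of what the paper compresses into its closing remark that ``inspection of the $\coTC$ case shows that progressing traces of $\ltcl$ induce progressing hypertraces of $\htcl$''.
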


\begin{proof}[Proof sketch]
	Let $\derd$ be a $\ltcl$ cyclic proof. 
	We can convert it to a $\htcl$ cyclic proof by simply replacing each sequent $A_1, \dots, A_n$ by the hypersequent $\str{A_1}\emptyset,\dots, \str{A_n}\emptyset$ and applying some local corrections.
	In what follows, if $\Gamma = A_1, \dots, A_n$, let us simply write $\sq_\Gamma$ for $\str{A_1}\emptyset, \dots, \str{A_n}\emptyset$.
	\anupam{with current notation, we can actually just write $\Gamma $ instead of $\sq_\Gamma$.}
	\begin{itemize}
		\item Any $\id$ step of $\derd$ must be amended as follows:
		\[
		\vlinf{\id}{}{\Gamma, p(t),\bar p(t)}{}
		\quad \leadsto \quad
		\vlderivation{
			\vliq{\wk}{}{\sq_\Gamma, \str{p(t)}\emptyset, \str{\bar p(t)}\emptyset}{
				\vlin{\reset}{}{\str{p(t)}\emptyset, \str{\bar p(t)}\emptyset}{
					\vlin{\init}{}{\str{\, }\emptyset}{\vlhy{}}
				}
			}
		}
		\]
		\item
		Any $\vlor $ step of $\derd$ becomes a correct $\vlor$ step of $\htcl$ or $\hrtcl$
		\item Any $\vlan$ step of $\derd$ must be amended as follows:
		\[
		\vliinf{\vlan}{}{\Gamma, A \vlan B}{\Gamma, A}{\Gamma, B}
		\quad \leadsto\quad
		\vlderivation{
		\vlin{\vlan}{}{\sq_\Gamma, \str{A\vlan B}\emptyset}{
		\vliin{\cup}{}{\sq_\Gamma, \str{A,B}\emptyset}{
			\vlhy{\sq_\Gamma, \str A \emptyset}
		}{
			\vlhy{\sq_\Gamma, \str B \emptyset}
		}
		}
		}
		\]
		\item Any $\exists $ step of $\derd$ must be amended as follows:
		\[
		\vlinf{\exists}{}{\Gamma, \exists x A(x)}{\Gamma, A(t)}
		\quad \leadsto \quad
		\vlderivation{
		\vlin{\exists}{}{\sq_\Gamma, \str{\exists xA(x)}\emptyset}{
		\vlin{\instrule}{}{\sq_\Gamma, \str{A(x)}x}{
		\vlhy{\sq_\Gamma, \str{A(t)}\emptyset}
		}
		}
		}
		\]
		\item Any $\forall$ step of $\derd$ becomes a correct $\forall$ step of $\htcl$ or $\hrtcl$
		\item Any $\TC_0$ step of $\derd$ becomes a correct $\TC_0$ step of $\htcl$.
		\item Any $\TC_1$ step of $\derd$ must be amended as follows:
		\[
		\vliinf{\TC_1}{}{\Gamma, \tc Ast}{\Gamma , A(s,r)}{\Gamma, \tc Art}
		\quad \leadsto \quad
		\vlderivation{
		\vlin{\TC_1}{}{\sq_\Gamma, \str{\tc Ast}\emptyset}{
		\vlin{\instrule}{}{\sq_\Gamma, \str{A(s,x), \tc Axt}x}{
		\vliin{\cup}{}{\sq_\Gamma, \str{A(s,r),\tc Art}\emptyset }{
			\vlhy{\sq_\Gamma, \str{A(s,r)}\emptyset}
		}{
			\vlhy{\sq_\Gamma, \str{\tc Art}\emptyset}
		}
		}
		}
		}
		\]
		
		\item Any $\coTC$ step of $\derd$ must be amended as follows:
		\[
		\begin{array}{rl}
		    & \vliinf{\coTC}{}{\Gamma, \blue{\cotc A st}}{\Gamma, A(s,t)}{\Gamma,A(s,c),\blue{\cotc A ct}} \\
		    \noalign{\medskip}
		\leadsto & 
		\vlderivation{
		\vlin{\coTC}{}{\sq_\Gamma,\blue{\str{\cotc Ast}\emptyset}}{
		\vliin{2\cup}{}{\sq_\Gamma, \str{A(s,t),A(s,c)}\emptyset , \blue{\str{A(s,t), \cotc A ct}\emptyset} }{
		\vlhy{\sq_\Gamma, \str{A(s,t)}\emptyset}
		}{
		\vlhy{\sq_\Gamma, \str{A(s,c)}\emptyset, \blue{\str{\cotc Act}\emptyset} }
		}
		}
		}
		\end{array}
		\]
	\end{itemize}
Particular inspection of the $\coTC$ case shows that progressing traces of $\ltcl$ induce progressing hypertraces of $\htcl$.\anupam{explain more?}
\end{proof}

\section{Soundness of $\htcl$}
\label{sec:soundness_htcl}


This section is devoted to the proof of the first of our main results:

\begin{theorem}[Soundness]
	\label{thm:soundness_tcl}
	If $\htcl \dernwf \sq $ then $  \models  \sq $.
\end{theorem}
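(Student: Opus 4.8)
The plan is to argue by contraposition. Assuming $\htcl\dernwf\sq$ via a proof $\derd$, I suppose toward a contradiction that $\not\models\sq$ and fix a falsifying structure. The first ingredient is a \emph{local counter-branch lemma}: for each rule of \Cref{fig:rules:htcl}, whenever the conclusion is falsified by a structure $\M$ and interpretation $\rho$, some premiss is falsified by $\M$ together with an interpretation $\rho'$ extending $\rho$ on the function symbols freshly introduced by the rule. For the propositional, $\exists$, $\inst$ and $\TC$ rules this is routine, using the fixed-point unfolding \eqref{eq:tc-fixed-point-formula} and the side-conditions of $\cup$ (which let $\fmi{\str{\Gamma,\Delta}{\bx,\by}}$ factor as $\fmi{\str\Gamma\bx}\vlan\fmi{\str\Delta\by}$). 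The interesting cases are $\forall$ and $\coTC$, where the fresh $f$ is interpreted as a Skolem/Herbrand witness for the falsified universal quantifier; for $\coTC$ I moreover fix this witness to be the \emph{next vertex of a shortest $\bar A$-path}, exploiting that $\cotc A s t$ false means $\tc{\bar A}{s}{t}$ true (\Cref{rtc-is-mu-cortc-is-nu}). Iterating this lemma from the root produces an infinite `bad branch' $\branch$ of $\derd$ (it cannot terminate at $\init$, which is valid) together with a single coherent interpretation $\rho^*$ under which every hypersequent along $\branch$ is false.

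Since $\derd$ is a proof, $\branch$ carries a progressing hypertrace $\htrace=(\ced_i)_i$ (\Cref{def:progress_tcl}): every infinite trace along it is eventually of the form $(\cotc A {s_i}{t_i})_i$ and infinitely often principal. I will turn progress into an impossible infinite descent. As $\coTC$ is a greatest fixed point, $\cotc A{s_i}{t_i}$ being false under $\rho^*$ is equivalent to $\tc{\bar A}{s_i}{t_i}$ being true, which carries a finite \emph{rank} $\ell_i$, namely the length of a shortest $\bar A$-path from $\rho^*(s_i)$ to $\rho^*(t_i)$ (equivalently the least approximation stage at which $\TC(\bar A)$ holds of the pair). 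Non-principal steps leave $(s_i,t_i)$ unchanged modulo substitution, hence preserve $\ell_i$; at a principal $\coTC$ step the trace passes to $\cotc A{f(\bv)}{t_i}$ where, by the counter-branch construction, $\rho^*$ interprets $f(\bv)$ as the successor vertex of a shortest counter-path, so $\ell_{i+1}<\ell_i$. Being principal infinitely often, the $\ell_i$ would form a strictly decreasing $\omega$-sequence of naturals — absurd.

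The crux, and the source of the `more intricate' argument, is the alternation built into \Cref{def:progress_tcl}: I need a \emph{single} infinite trace along $\htrace$ whose $\coTC$-formulas are genuinely false, so that the descent above can run on it. This does \emph{not} follow by naively tracking one false formula, because falsity of a cedent $\ced_i=\str{\Gamma_i}{\bv_i}$ asserts only that for \emph{every} valuation of $\bv_i$ \emph{some} formula of $\Gamma_i$ is false, and the offending formula depends on the valuation, while the annotations grow along $\htrace$. I plan to resolve this via a determinacy argument: consider the two-player game played along $\htrace$ in which a `Verifier' chooses witnesses for the annotation variables as $\exists,\inst,\TC$ introduce them and picks which cedent to follow at splits, while a `Refuter' chooses, among the immediate ancestors of the current formula, one to continue the trace. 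A winning play for Refuter is an infinite false trace, which by the hypertrace condition is progressing and yields the descent above, hence a contradiction; by (Borel) determinacy, if Refuter has no winning strategy then Verifier's strategy assembles, in the limit, a valuation satisfying some $\bigwedge\Gamma_i$, contradicting the falsity of $\ced_i$. Either branch yields a contradiction, so $\models\sq$.

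I expect this alternation step — reconciling cedent-level falsity with the formula-level trace that carries the ordinal descent — to be the main obstacle, whereas the local counter-branch lemma and the $\coTC$-rank descent are essentially bookkeeping once the fresh Herbrand symbols are pinned to counter-path successors.
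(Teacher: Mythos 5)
Two of your three ingredients coincide with the paper's own proof: your local counter-branch lemma, with the fresh function symbol at $\coTC$ steps interpreted as the successor vertex along a shortest falsifying $\bar A$-path, is exactly Lem.~\ref{lem:countermodel-branch}, and your rank descent (ranks non-increasing at every step, strictly decreasing at principal $\coTC$ steps, principal infinitely often) is exactly how the paper concludes. The gap lies in the third ingredient, the one you yourself flag as the crux: extracting a single infinite \emph{false} trace along the progressing hypertrace.

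Your determinacy argument fails at its ``Verifier wins'' horn, and the culprit is $\instrule$. Falsity can be passed from $F(y)$ to its immediate ancestor $F(t)$ only when $y$ is valued at $\fint(t)$ (writing $\fint$ for your $\rho^*$); but your Verifier chooses the value of $y$ freely at introduction time, before the instantiation is seen. Concretely, suppose the hypertrace contains $\ced_i=\str{p(y),q(y)}{y}$ followed by an $\instrule$ step yielding $\ced_{i+1}=\str{p(t),q(t)}{\varnothing}$, in a structure where $\fint(t)=w$, $p$ is false exactly at $w$, and $q$ is false exactly at the points other than $w$. Both cedents are false, yet Verifier wins your game by valuing $y$ at any $v\neq w$: then $p(y)$ is true (so Refuter cannot track it), while $q(y)$ is false but its unique ancestor $q(t)$ is true. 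Hence ``Verifier has a winning strategy'' does not contradict falsity of the cedents, and the dichotomy collapses. (There is a second, independent defect: a winning Verifier strategy's choices depend on Refuter's moves, so even an ``honest'' win assembles one valuation per play, not a single valuation making a whole conjunction $\bigwedge\Gamma_i$ true, which is what contradicting $\fint\not\models\ced_i$ would require.) The missing idea is that no game is needed at all: after eliminating the substitution rule and insisting that every quantifier of $\sq$ binds a distinct variable, each annotation variable is instantiated \emph{at most once} along the hypertrace, so one can define a single canonical assignment by look-ahead over the entire (fixed, infinite) hypertrace, namely $\deltah(y)\defsym\fint(t)$ whenever $y$ is later instantiated by $t$, and arbitrary otherwise (Dfn.~\ref{def:assignment_h}). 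Relative to this one valuation, falsity propagates deterministically, rule by rule: at $\exists$ and $\TC$ steps falsity at \emph{every} value gives falsity at $\deltah(y)$ in particular, and at $\instrule$ steps $F(y)$ and $F(t)$ have the same truth value by construction (Lem.~\ref{lemma:failed_trace}, Prop.~\ref{prop:failed_trace}); your descent then runs on the resulting false trace. Any repair of your game that forces Verifier's choices to agree with future instantiations just reconstructs this canonical assignment, so the alternation you correctly identified is discharged by a syntactic convention pinning the valuation in advance, not by determinacy.
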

%
%
The argument is quite technical due to the alternating nature of our progress condition.
In particular the treatment of traces within hypertraces requires a more fine grained argument than usual, bespoke to our hypersequential structure.

Throughout this section, we shall fix a $\htcl$ preproof $\der$ of a hypersequent $\sq$.
We start by introducing some additional definitions and propositions. 

\anupam{added some section structure}

\subsection{Some conventions on (pre)proofs and semantics}
First, we work with proofs without substitution, in order to control the various symbols occurring in a proof.

\begin{proposition}
    If $\htcl \dernwf \sq $ then there is also a $\htcl$ proof of $\sq$ that does not use the substitution rule.
\end{proposition}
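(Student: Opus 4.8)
The plan is to show that substitution is eliminable by permuting it towards the leaves of the (non-wellfounded) proof and absorbing it along each branch; crucially, we only need to produce a \emph{non-wellfounded}---not a regular---proof, so the result may be an infinite, non-regular tree even if we start from a cyclic one. Concretely, I would define by corecursion a transformation $T(\der,\sigma)$ which, given a preproof $\der$ of a hypersequent $\sq$ (possibly using substitution) together with a substitution $\sigma$, returns a substitution-free preproof of $\sigma(\sq)$. The desired proof is then $T(\der,\id)$.

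The transformation proceeds by cases on the lowermost inference of $\der$. If it is a substitution step with map $\tau$ applied to an immediate subproof $\dere$ of $\sq'$ (so $\sq=\tau(\sq')$), set $T(\der,\sigma):=T(\dere,\sigma\circ\tau)$: the accumulated substitution swallows $\tau$ and the step disappears. For any other rule $\infrule$ with premises $\sq_1,\dots,\sq_k$, first rename the fresh symbol introduced by $\infrule$ (the eigenvariable or Herbrand function of $\exists,\forall,\TC,\coTC$) away from the domain and range of $\sigma$, via a renaming $\theta_i$ on the $i$-th premise; then the $\sigma$-image of $\infrule$ is again a valid instance of the same rule, concluding $\sigma(\sq)$ from the premises $\sigma(\theta_i(\sq_i))$ for $i=1,\dots,k$. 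Place this inference at the root and recurse, setting the $i$-th subproof to $T(\dere_i,\sigma\circ\theta_i)$, where $\dere_i$ is the $i$-th immediate subproof of $\der$.

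The construction is productive, hence well-defined: no branch of a \emph{proof} can consist of an infinite tail of consecutive substitution steps, since such a tail would be infinite yet carry no $\coTC$-principal inference and so no progressing trace, contradicting progress of $\der$; thus between any two non-substitution inferences only finitely many substitution steps intervene. That $T(\der,\id)$ is a genuine preproof follows from a routine inspection of \Cref{fig:rules:htcl}, checking that each non-substitution rule commutes with substitution once fresh symbols have been renamed apart. Finally, soundness of the reduction amounts to preservation of progress: the construction induces on each branch a bijection between the non-substitution inferences of $\der$ and the inferences of $T(\der,\id)$, under which every hypersequent is replaced by its image under the accumulated substitution. Because substitution steps relate formulas and cedents only \emph{modulo substitution}---precisely the clause built into immediate ancestry (\Cref{def:ancestry_htcl} and the ancestry relation for formulas)---this bijection lifts to an isomorphism of the immediate-ancestry graphs, mapping traces to traces and hypertraces to hypertraces, and preserving at each position whether the trace is principal for a $\coTC$-formula. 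Consequently progressing traces map to progressing traces and progressing hypertraces to progressing hypertraces, so $T(\der,\id)$ is again a proof.

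I expect the main obstacle to be the bookkeeping of the second case: ensuring that the freshness conditions on eigenvariables and Herbrand functions survive the substitution (this is what the renamings $\theta_i$ are for), and, above all, transporting the \emph{alternating} progress condition---``every trace along the hypertrace is progressing''---faithfully across the ancestry isomorphism, since this is exactly where our notion of proof departs from the standard cyclic setting and where a naive trace-by-trace argument would be insufficient.
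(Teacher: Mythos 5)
Your proposal is correct and is essentially the paper's own argument, elaborated: the paper's proof is a short sketch that proceeds ``by a coinductive argument, applying a meta-level substitution operation on proofs to admit each substitution step,'' with productivity guaranteed exactly as you argue---by the progressing condition forcing infinitely many $\coTC$ steps (hence non-substitution steps) on every infinite branch. Your corecursive transformation $T(\derd,\sigma)$ accumulating substitutions, the renaming of fresh symbols, and the ancestry-isomorphism argument for preservation of progress are precisely the details the paper leaves implicit in its sketch.
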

\begin{proof}
    [Proof sketch]
    We proceed by a coinductive argument, applying a meta-level substitution operation on proofs to admit each substitution step.
    Productivity of the translation is guaranteed by the progressing condition: each infinite branch must, at the very least, have infinitely many $\coTC$ steps.
\end{proof}
\todo{at some point make this proof formal, can wait for next revision. Also requires proper typing of the substitution rule.}

The utility of this is that we can now carefully control the occurrences of eigenfunctions in a proof so that, bottom-up, they are never `re-introduced', thus facilitating the definition of the interpretation $\deltah$ on them.



Throughout this section, we shall allow interpretations to be only partially defined, i.e.\ they are now \emph{partial} maps from the set of function symbols of our language to appropriately typed functions in the structure at hand.
Typically our interpretations will indeed interpret the function symbols in the context in which they appear, but as we consider further function symbols it will be convenient to extend an interpretation `on the fly'.
	This idea is formalised in the following definition:
	
	\begin{definition}[Interpretation extension]
	    Let $\mathcal{M}$ be a structure and $\rho$, $\rho'$ be two interpretations over $\domain{\mathcal{M}}$. We say that $\rho '$ is an \emph{extension} of $\rho$, written $\rho \subseteq \rho'$, if $\rho'(f) = \rho(f)$, for all $f$ in the domain of $\rho$.
	\end{definition}

Finally, we assume that each quantifier in $\sq$ binds a distinct variable.
Note that this convention means we can simply take $y=x$ in the $\exists$ rule in Fig.~\ref{fig:rules:htcl}.

\subsection{Constructing a `countermodel' branch}
Recall that we have fixed at the beginning of this section a $\htcl$ preproof $\der$ of a hypersequent $\sq$.
Let us fix some structure $\fmd $ and an interpretation $\rho_0$ such that $\rho_0 \not\models \sq$ (within $\fmd$). 
Since each rule is locally sound, by contraposition we can continually choose `false premisses' to construct an infinite `false branch':
\begin{lemma}
[Countermodel branch]
\label{lem:countermodel-branch}
There is a branch $\fbr = (\sq_i)_{i < \omega}$ of $\der$ and an interpretation $\fint$ such that, with respect to $\fmd$:
\begin{enumerate}
    \item\label{item:fint-falsifies-Si} $\fint \not\models \sq_i$, for all $i<\omega$;
    \item\label{item:fint-assigns-minimal-n} Suppose that $\sq_i$ concludes a $\coTC$ step,
as typeset in Fig.~\ref{fig:rules:htcl},
and $\fint \models \tc{\bar A}st \, [\ntuple/\bv]$.
If $n$ is 
minimal 
such that $\fint \models \bar A(d_i,d_{i+1})$ 
for all 
$i\leq n$, $\fint(s)=d_0$ and $\fint(t)=d_n$, 
and $n>1$, 
then $\fint(f)(\vec d) = d_1$\footnote{To be clear, we here choose an arbitrary such minimal `$\bar A$-path'.
}
	so that $\rho_{i+1} \models \bar A(s,f(\bv)) [\ntuple/\bv]$ and $\fint \models \tc{\bar A}{f(\vec x)}t [\ntuple/\bv]$.
\end{enumerate}
\end{lemma}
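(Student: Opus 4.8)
The plan is to build $\fbr$ and $\fint$ by a single corecursion driven by the \emph{local soundness} of each rule, read contrapositively. First I would record local soundness in the form: for any rule with conclusion $\sq^c$ and premisses $\sq^1,\dots,\sq^k$, and any interpretation $\rho$ defined on all function symbols of $\sq^c$ with $\rho\not\models\sq^c$ (in $\fmd$), there is a premiss $\sq^j$ and an extension $\rho'\supseteq\rho$ with $\rho'\not\models\sq^j$. For $\vlan,\vlor,\cup,\instrule,\TC,\exists,\wk$ no extension is needed: one simply propagates the falsifying assignment, choosing the false disjunct at $\cup$ (the freshness side conditions $\freev{\Delta}\cap\bx=\emptyset$ and $\freev{\Gamma}\cap\by=\emptyset$ let the matrix $\exists\bx\exists\by(\Gamma\wedge\Delta)$ factor as $(\exists\bx\,\Gamma)\wedge(\exists\by\,\Delta)$), instantiating at $\instrule$, and unfolding the fixed point $\tc A s t\iff A(s,t)\vlor\exists z(A(s,z)\wedge\tc A z t)$ at $\TC$. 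For $\forall$ and $\coTC$ the extension defines the fresh eigenfunction $f$ as a \emph{Herbrand witness}, picking for each value of the annotation $\bv$ a domain element falsifying the relevant instance. Finally $\reset$ is unary and sound because making its conclusion false falsifies the detached cedent $\str{\nf A}\emptyset$, hence makes the closed formula $A$ true, so that falsity of $\str{\Gamma,A}\bv$ collapses to falsity of $\str{\Gamma}\bv$.

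Next I would define by recursion a chain $\rho_0\subseteq\rho_1\subseteq\cdots$ and sequents $\sq_0=\sq,\sq_1,\dots$ along a branch, starting from the hypothesised $\rho_0\not\models\sq$ and at each step invoking local soundness to select a false premiss $\sq_{i+1}$ with $\rho_{i+1}\supseteq\rho_i$. This branch is genuinely infinite: the only axiom is $\init$, whose sole cedent $\str{\,}\emptyset$ denotes $\top$ and can never be falsified, so the recursion never halts. Setting $\fint\defsym\bigcup_i\rho_i$ yields a well-defined interpretation precisely because we have passed beforehand to a substitution-free proof in which, read bottom-up, each eigenfunction is introduced exactly once and never re-introduced; thus the $\rho_i$ agree on common domains. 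Since every function symbol of $\sq_i$ already lies in $\dom{\rho_i}\subseteq\dom{\fint}$, falsity is preserved under the extension, giving $\fint\not\models\sq_i$ for all $i$, which is item~\ref{item:fint-falsifies-Si}.

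It remains to arrange item~\ref{item:fint-assigns-minimal-n}, refining the Herbrand choice at a $\coTC$ step. If $\sq_i$ concludes such a step with principal cedent $\str{\Gamma,\cotc A s t}\bv$, fix the tuple $\ntuple$ for which $\fint\models\tc{\bar A}st\,[\ntuple/\bv]$ (such a tuple exists exactly because this cedent is false). Crucially, $A$ does not contain the fresh symbol $f$, so the witnessing $\bar A$-path may be evaluated using $\rho_i$ alone, \emph{before} $f$ is interpreted; this is what averts circularity in the definition of $\fint(f)$. Choosing a minimal-length such path $d_0,\dots,d_n$ (with $\fint(s)=d_0$, $\fint(t)=d_n$) and setting $\fint(f)(\ntuple)\defsym d_1$ when $n>1$ falsifies both new cedents at $\ntuple$: the step $\bar A(d_0,d_1)$ falsifies $A(s,f(\bv))[\ntuple/\bv]$ in the first, while the shorter path $d_1,\dots,d_n$ witnesses $\tc{\bar A}{f(\bv)}t[\ntuple/\bv]$ and so falsifies $\cotc A{f(\bv)}t$ in the second. (When $n=1$ we have $\bar A(s,t)$, so $A(s,t)$ is already false and $f$ may be set arbitrarily, consistent with item~\ref{item:fint-assigns-minimal-n} constraining only the case $n>1$.) This is compatible with the earlier demand of selecting a false premiss, and yields the required $\rho_{i+1}\models\bar A(s,f(\bv))[\ntuple/\bv]$ and $\fint\models\tc{\bar A}{f(\bv)}t[\ntuple/\bv]$.

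The hard part will be exactly this last step: one must secure items~\ref{item:fint-falsifies-Si} and~\ref{item:fint-assigns-minimal-n} by a \emph{single} choice at each $\coTC$ step, verifying simultaneously that the minimal-path Herbrand assignment falsifies the whole premiss hypersequent and that the path is computed from symbols already interpreted by $\rho_i$, so that $\fint(f)$ is well-defined and $\fint$ remains a coherent extension. Passing beforehand to a substitution-free proof with one-time eigenfunction introduction is precisely what makes both bookkeeping points go through.
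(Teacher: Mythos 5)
Your proposal is correct and follows essentially the same route as the paper's proof: a rule-by-rule contrapositive (local soundness) argument that builds a chain $\rho_0 \subseteq \rho_1 \subseteq \cdots$ of partial interpretations along a falsified branch, with pointwise Herbrand-style definitions of the fresh eigenfunctions at $\forall$ and $\coTC$ (shortest-$\bar A$-path choice at $\coTC$, arbitrary when the path has length $1$ or the context disjunct is already true), and $\fint$ obtained as the union. The only slip is classifying $\instrule$ among the rules needing no extension: the instantiating term $t$ may contain function symbols not yet interpreted, so (as in the paper) $\rho_{i+1}$ must extend $\rho_i$ arbitrarily on those symbols --- a triviality that your general framework (``there is a premiss and an extension $\rho' \supseteq \rho$'') already accommodates.
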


Intuitively, our interpretation $\fint$ is going to be defined at the end of the construction as the limit of a chain of `partial' interpretations $(\rho_i)_{i<\omega}$, with each $\rho_i \not\models \sq_i$ (
within $\fmd$).
Note in particular that, by \Cref{item:fint-assigns-minimal-n}, whenever some $\coTC$-formula is principal, we choose $\rho_{i+1}$ to always assign to it 
a 
falsifying path 
of minimal length 
(if one exists at all), with respect to the assignment to variables in its annotation.
It is crucial at this point that our definition of $\fint$ is parametrised by such assignments.

\begin{proof}[Proof of \Cref{lem:countermodel-branch}]
To construct $\fint$, we extend the interpretation $\rho_0$ at every step of the derivation. Thus, we shall define a chain of interpretations $\rho_0 \subseteq \rho_1\subseteq \rho_2 \subseteq \cdots $ such that, for each $i$, $\fmd, \rho_i \models \sq_i $. We will define $\fint$ as the limit of this chain. 
In the following, we shall write $\rho \models \sq$ instead of $\rho \models \fmi{\sq}$. 
We distinguish cases according to the last rule $ r_i $ applied to $ \sq_i $. 
For the case of weakening, $\sq_{i+1}$ is the unique premiss of the rule, and $\rho_{i+1} = \rho_i$. 

\noindent \textbf{$\triangleright$ Case $ (\cup) $}
$$
\vliiinf{\cup}{}{\sq_i \, = \, \qq , \str{\Gamma_1, \Gamma_2}{\bv_1,\bv_2} }{\sq^1 \, = \, \qq, \str{\Gamma_1}{\bv_1}}{}{\sq^2 \, = \, \qq , \str{ \Gamma_2}{\bv_2}} 
$$
By assumption, $ \rho_i \not\models \qq $ and  $ \rho_i \models \forall \bv_1 \,\forall \bv_2(\bigvee \nf{\Gamma}_1 \vlor \bigvee \nf{\Gamma}_2) $. 
Set $ \rho_{i+1} = \rho_i $. 
By the truth condition associated to $ \forall $, we have that 
for all $m$-tuples $ \ntuple_1 \in \domain{\fmd}$ and $n$-tuples $ \ntuple_2 \in \domain{\fmd}$, 
for $ n = \card{\bv_1} $, $ m = \card{\bv_2} $, 
it holds that\footnote{Recall that $ \bv_1 $, $ \bv_2 $ are defined as \emph{sets} of variables, but we are here considering them as \emph{lists}, assuming their elements to be ordered according to some fixed canonical ordering of variables.}:
$$
\rho_{i+1} \models \bigvee \nf{\Gamma}_1 \vlor \bigvee \nf{\Gamma}_2 \quad [\ntuple_1 / \bv_1][\ntuple_2 / \bv_2]
$$
By the truth condition associated to $ \vlor $ we can conclude that, for all 
$\ntuple_1$, $\ntuple_2$, either:
$$
\rho_{i+1} \models \bigvee \nf{\Gamma}_1 \quad [\ntuple_1 / \bv_1][\ntuple_2 / \bv_2]
\quad \text{ or } \quad 
\rho_{i+1} \models  \bigvee \nf{\Gamma}_2\quad [\ntuple_1 / V_1][\ntuple_2 / V_2]
$$
Since $ \bv_1  \cap \freev{\Gamma_2} = \emptyset$ and $ \bv_2 \cap \freev{\Gamma_1}=\emptyset$, the above is equivalent to:
$$
\rho_{i+1} \models \bigvee \nf{\Gamma}_1 \quad [\ntuple_1 / \bv_1]
\quad \text{ or } \quad 
\rho_{i+1} \models  \bigvee \nf{\Gamma}_2\quad [\ntuple_2 / \bv_2]
$$
And, since this holds 
for all choices of $ \ntuple_1$ and $\ntuple_2 $, we can conclude that:
$$
\rho_{i+1}\models \forall \bv_1( \bigvee \nf{\Gamma}_1)
\quad \text{ or } \quad 
\rho_{i+1} \models\forall \bv_2(  \bigvee \nf{\Gamma}_2)
$$
Take $ \sq_{i+1} $ to be the $ \sq^k $ such that $ \rho_{i+1} \models \forall \bv_k( \bigvee \nf{\Gamma}_k)$, for $ k = \{1,2\} $. 

\

\noindent For all the remaining cases, $ \sq_{i+1} $ is the unique premiss of the rule $ r_i $. 
Moreover, for $ \bv $ the unique (possibly empty) annotation explicitly reported in the remaining rules, let $ n = \card{\bv} $ and $ \ntuple \in \domain{\fmd} $ be a $ n $-tuple of elements of the domain. 

\


\noindent \textbf{$\triangleright$  Cases $(  \vlan) $, $(  \vlor) $, $ (\exists) $, $(\reset)$ and $ (\tcrule) $}

\ 

	\begin{adjustbox}{max width = \textwidth}
	\begin{tabular}{c}
		$	\vlinf{\vlan}{}{\sq_i\,  = \, \qq, \str{\Gamma, A \vlan B}{\bv}}{\sq_{i+1}\,  = \, \qq , \str{\Gamma, A , B}{\bv}}
		\quad 
		\vlinf{\vlor}{}{\sq_i \, = \, \qq, \str{\Gamma, A \vlor B}{\bv}}{\sq_{i+1} \, = \, \qq, \str{\Gamma, A }{\bv}, \str{\Gamma, B}{\bv}}
		\quad
		\vlinf{\exists}{}{ \sq_i \, = \, \qq , \str{\Gamma, \exists x (A(x))}{\bv} }{ \sq_{i+1} \, = \,  \qq, \str{\Gamma, A(x)}{\bv, x} }$\\[0.5cm]
		$\vlinf{\reset}{}{\sq_i = \qq, \str{\Gamma, A}{\bv},\str{\nf A }{\varnothing}}{\sq_{i+1} = \qq , \str{\Gamma}{\bv}} 
		\quad  
		\vlinf{\tcrule}{
		}{\sq_i \, = \, \qq, \str{\Gamma, \tc{A}{s}{t}}{\bv}}{\sq_{i+1} \, = \, \qq, \str{\Gamma, A(s,t)}{\bv}, \str{\Gamma,A(s,z), \tc{A}{z}{t}}{\bv,z} } $
	\end{tabular}
\end{adjustbox}

\ 

\noindent For all these cases set $\rho_{i+1} = \rho_i$. In all these cases, the formula interpretation of the conclusion logically implies the formula interpretation of the premiss. Thus, form $\fmd, \rho_{i} \not \models \sq_i$  we have that $ \fmd, \rho_{i+1} \not \models \sq_{i+1} $. We explicitly show the construction for $ (\exists) $, $ (\reset) $ and $ (\mathsf{tc}) $. 

\ 

\noindent $ (\exists) $ 
By assumption, $ \rho_i\not\models \qq  $ and $ \rho_i \models \forall \bv (\bigvee \nf{\Gamma} \vlor \forall x (\nf A(x))) $. We have that $ \rho_{i+1}\not\models \qq $. Moreover, by prenexing the quantifier we obtain that $ \rho_{i+1} \models \forall \bv \,\forall x ( \bigvee \nf{\Gamma} \vlor \nf{A}(x)) $. 

\ 

\noindent $ (\reset) $
By assumption, $ \rho_i\not\models \qq $ and $ \rho_i \models \forall \bv (\bigvee \nf{\Gamma} \vlor \nf{A}) $ and $\rho_i \models A $. By the forcing condition associated to $ \forall $ we have that, for all choices of $ \ntuple $, it holds that:
$$
\rho_{i+1} \models \bigvee \nf{\Gamma} \vlor \nf{A} \quad [\ntuple / \bv]
$$
By the truth condition associated to $ \vlor $, for every choice of $\ntuple $ it holds that either:
$$
\rho_{i+1}  \models \bigvee \nf{\Gamma} \quad [\ntuple / \bv]\quad \text{ or } \quad  \rho_{i+1}  \models\nf{A} \quad [\ntuple / \bv]
$$
Since $\freev{A} \cap \bv = \emptyset $, the above is equivalent to:
$$
\rho_{i+1} \models \bigvee \nf{\Gamma} \quad [\ntuple / \bv] \quad \text{ or } \quad  \rho_{i+1} \models\nf{A}
$$
By assumption, $ \rho_{i+1} \models A $. Thus, the second disjunct cannot hold, and we have that
$
\rho_{i+1}  \models \bigvee \nf{\Gamma} \quad [\ntuple / \bv]
$. 
Since this holds for all choices of $\ntuple $, we conclude that 
$
\rho_{i+1} \models \forall \bv (\bigvee \nf{\Gamma} )
$. 

\ 

	\noindent $ (\mathsf{tc}) $
By assumption, $ \rho_i\not\models \qq $ and $ \rho_i \models \forall \bv (\bigvee \nf{\Gamma} \vlor \nf{\tc{A}{s}{t})} $. Recall that $\nf{\tc{A}{s}{t})} \defsym \cotc{\nf A}{s}{t}$. We reason as follows:
\begin{align*}
&\rho_{i+1} \models \forall \bv (\bigvee \nf{\Gamma} \vlor \cotc{\nf A}{s}{t}) &  \\
&\rho_{i+1} \models \forall \bv (\bigvee \nf{\Gamma} \vlor (\nf{A}(s,t) \vlan \forall z(\nf{A}(s,z) \vlor \cotc{\nf A}{z}{t})) ) & (\star)\\
&\rho_{i+1} \models \forall \bv ( (\bigvee \nf{\Gamma} \vlor \nf{A}(s,t)) \vlan (\bigvee \nf{\Gamma} \vlor \forall z(\nf{A}(s,z) \vlor \cotc{\nf A}{z}{t})) )& 
\\	 
&\rho_{i+1} \models \forall \bv ( (\bigvee \nf{\Gamma} \vlor \nf{A}(s,t) )\vlan \forall z(\bigvee \nf{\Gamma} \vlor \nf{A}(s,z) \vlor \cotc{\nf A}{z}{t}) )&
\\	 
&\rho_{i+1} \models \forall \bv ( \bigvee \nf{\Gamma} \vlor \nf{A}(s,t) ) \vlan \forall \bv \forall z(\bigvee \nf{\Gamma} \vlor \nf{A}(s,z) \vlor \cotc{\nf A}{z}{t}) & (*)  \\	 
&\rho_{i+1} \models \forall \bv ( \bigvee \nf{\Gamma} \vlor \nf{A}(s,t) ) \text{ and }  \rho_{i+1} \models\forall \bv \forall z(\bigvee \nf{\Gamma} \vlor \nf{A}(s,z) \vlor \cotc{\nf A}{z}{t}) & 
\end{align*}
In the above, step $(\star)$ follows from the inductive definition of $\coTC$, and step $ (*) $ is obtained by distributing $ \forall $ over $ \vlan $, i.e., by means of the classical theorem $ \forall x (A \vlan B) \vljm (\forall x (A) \vlan \forall x (B) )$. The other steps are standard theorems or follows from the truth conditions of the operators.


\

For the three remaining case of $ (\mathsf{inst}) $, $ (\forall) $ and $ (\nf{\mathsf{tc}}) $, $\rho_{i+1}$ extends $\rho_i$ by interpreting the function symbols introduced, bottom-up, by the rules.

\ 

\noindent \textbf{$\triangleright$ Case	$ (\mathsf{inst}) $}
$$
\vlinf{\mathsf{inst}}{}{\sq_i \, = \, \qq, \str{\Gamma(y)}{\bv ,y}}{\sq_{i+1} \, = \, \qq , \str{\Gamma(t)}{\bv~~}} 
$$
By assumption, $ \rho_i\not\models \qq $ and $ \rho_i \models \forall \bv\, \forall x( \bigvee \nf{\Gamma}(x)) $. Thus, for all choices of $ \ntuple$, we have that 
$
\rho_{i} \models \forall x (\bigvee \nf{\Gamma}(x)) \, [\ntuple / \bv] 
$. By the truth condition associated to $ \forall $, this means that, for all  $ d \in \domain{\fmd}$, 	$
\rho_{i} \models \bigvee \nf{\Gamma}(x) \, [\ntuple / \bv] [d / x]
$. Take $ \rho_{i+1} $ to be any extension of $ \rho_i $ that is defined on the language of $\sq_{i+1}$. That is, if $ f $ is a function symbol in $t$ to which $ \rho_{i} $ already assigns a map, then $ \rho_{i+1} $ assigns to it that same map. Otherwise, $ \rho_{i+1} $ assigns an arbitrary map to $ f $. 
It follows that $ \rho_{i+1}\not\models \fmi{\mathcal{Q}} $ and $ \rho_{i+1} \models \bigvee \nf{\Gamma}(t) [\ntuple / \bv]$ and, since this holds for all $ \ntuple $, we have that $ \rho_{i+1} \models\forall \bv ( \bigvee \nf{\Gamma}(t))$. Thus  $ \rho_{i+1} \not \models \sq_{i+1} $.


\


\noindent \textbf{$\triangleright$ Case $ (\forall) $}
$$
\vlinf{\forall}{}{ \sq_i \, = \, \qq, \str{\Gamma, \forall x (A(x)) }{\bv } }{\sq_{i+1} \, = \, \qq , \str{\Gamma, A(f(\bv))}{\bv}}
$$
By assumption, $ \rho_i\not\models \qq $ and $ \rho_i \models \forall \bv (\bigvee \nf{\Gamma} \vlor \exists x (\nf{A}(x)) $.  
By the truth condition associated to $ \forall $ and to $ \vlor $, for all choices of $ \ntuple$ we have that:
$$
\rho_i \models  \bigvee \nf{\Gamma} \quad [\ntuple / \bv] \quad  \text{ or } \quad  	\rho_i \models \exists x(\nf{A}(x)) \quad [\ntuple / \bv]
$$
We define $ \rho_{i+1} $ to extend $\rho_i$ by defining $\rho_{i+1}(f)$ as follows. 
Let $\vec d \subseteq \domain{\fmd}$.
If $ \rho_{i} \models  \bigvee \nf{\Gamma} \, [\ntuple / \bv]$
then we may set $\rho_{i+1}(f)(\vec d)$ to be arbitrary. 
We still have $ \rho_{i+1} \models  \bigvee \nf{\Gamma} \, [\ntuple/\bv]$, as required.
Otherwise, it holds that $ \rho_i \models \exists x(\nf{A}(x)) \, [\ntuple / \bv]$. 
By the truth condition associated to $ \exists $, there is a $ d \in \domain{\fmd} $ such that 
$
\rho_i  \models \nf{A}(x) \, [\ntuple / \bv][d / x]
$. 
In this case, we define $ \rho_{i+1}(f) (\vec d) =d$, 
so that $ 	\rho_{i+1}  \models \nf{A}(f(\bv)) \, [\ntuple / \bv] $. 
So, for all $\vec d$, we have that
$
\rho_{i+1} \models  \bigvee \nf{\Gamma} \vlor  \nf{A}(f(\bv)) \, [\ntuple/\bv]
$, 
and so
$	\rho_{i+1}  \models \forall \bv (  \bigvee \nf{\Gamma} \vlor  \nf{A}(f(\bv)))
$. 
Thus, $ \rho_{i+1} \not \models \sq_{i+1} $, as required.

\


\noindent \textbf{$\triangleright$  Case $ (\cotcrule) $} 
$$
\vlinf{\cotcrule}{}{\sq_i \, = \, \qq, \str{\Gamma, \cotc{A}{s}{t}}{\bv}}{ \sq_{i+1} \, = \, \qq, \str{\Gamma, {A}(s,t),{A}(s,f(\bv))}{\bv}, \str{\Gamma, {A}(s,t),\cotc{A}{f(\bv)}{t}}{\bv} } 
$$

By assumption, $ \rho_i\not\models \qq$ and $ \rho_i \models \forall \bv (\bigvee \nf{\Gamma} \vlor \nf{\cotc{A}{s}{t})} $ which, by definition of duality, means $ \rho_i \models \forall \bv (\bigvee \nf{\Gamma} \vlor \tc{\nf A}{s}{t}) $. 
By the truth conditions for $\vlor$ we have, for all $\ntuple$:
\[
1) \quad \rho_{i} \models \bigvee \nf{\Gamma} \, [\ntuple / \bv]
\quad \text{ or } \quad 
2) \quad 
\rho_i \models \tc{\bar A}st \, [\ntuple/\bv]
\]
We define $ \rho_{i+1} $ to extend $ \rho_{i} $ by defining $\rho_{i+1}(f)$ as follows.
Let $ \ntuple \subseteq \fmd $.
If $1)$ holds, then we may set $\rho_{i+1}(\ntuple)$ to be an arbitrary element of $\domain \fmd$.
Otherwise, $2)$ must hold, so by the truth conditions for $\TC$ there is a $ \nf A $-path between $ \rho_{i}(s) $ and $ \rho_{i}(t) $ of length greater or equal than 1, i.e.\ there are elements $d_0, \dots, d_{n}$, with $n>0$ and $\rho_i(s)=d_0 $ and $\rho_i(t)=d_{n}$, such that $\rho_i \models \bar A(d_i,d_{i+1})$ for all $i< n$. 
We select a \emph{shortest} such path, i.e.\ one with smallest possible $n>0$.
There are two cases:
\begin{enumerate}[i)]
	\item if $n=1$, then already $\rho_i \models \bar A(s,t) [\ntuple/\bv]$, so we may set $\rho_{i+1}(f)(\ntuple)$ to be arbitrary;
	\item otherwise $n>1$ and we set $\rho_{i+1}(f)(\vec d) = d_1$, so that $\rho_{i+1} \models \bar A(s,f(\bv)) [\ntuple/\bv]$ and $\rho_{i+1} \models \tc{\bar A}{f(\vec x)}t [\ntuple/\bv]$.
\end{enumerate}

We have considered all the rules; the construction of $\fbr$ is complete.
From here, note that we have $\rho_i \subseteq \rho_{i+1}$, for all $i<\omega$. 
Thus we can construct the limit $\fint = \bigcup_{i<\omega} \rho_i$, that we shall call the interpretation \emph{induced} by $\fmd$ and $\rho_0$.
\end{proof}

\subsection{Canonical assignments along countermodel branches}
Let us now fix $\fbr$ and $\fint$ as provided by \Cref{lem:countermodel-branch} above.
Moreover, let us henceforth assume that $\der$ is a proof, i.e.\ it is progressing, and fix a progressing hypertrace $ \fhy = (\str{\Gamma_i}{\bv_i})_{i< \omega} $ along $\fbr$.
In order to carry out an infinite descent argument, we will need to define a particular trace along this hypertrace that `preserves' falsity, bottom-up.
This is delicate since the truth values of formulas in a trace depend on the assignment of elements to variables in the annotations.
A particular issue here is the instantiation rule $\mathsf{inst}$, which requires us to `revise' whatever assignment of $y$ we may have defined until that point.
Thankfully, our earlier convention on substitution-freeness and uniqueness of bound variables in $\derd$ facilitates the convergence of this process to a canonical such assignment:

	\begin{definition}[Assignment]
	\label{def:assignment_h}
	We define $\deltah : \bigcup\limits_{i<\omega}\vec x_i \to \domain{\fmd}$ by $	\deltah(x) \defsym \rho(t)$ if $x$ is instantiated by $t$ in $\fhy$; otherwise $\deltah(x)$ is some arbitrary $d\in \domain \fmd$.
	\end{definition}

Note that $\deltah$ is indeed well-defined, thanks to the convention that each quantifier in $\sq$ binds a distinct variable.
In particular we have that each variable $ x $ is instantiated at most once along a hypertrace.
	Henceforth we shall simply write $\rho,\deltah \models A(\vec x)$ instead of $\rho \models A(\deltah(\vec x))$.
	Working with such an assignment ensures that false formulas along $\fhy$ always have a false immediate ancestor:
	
\begin{lemma}[Falsity through $\fhy$]
		\label{lemma:failed_trace}
	 	 If $\fint, \deltah \not \models F$ for some $F\in \Gamma_i$, then $F$  has an immediate ancestor $F'\in \Gamma_{i+1}$ with $\fint,\deltah \not\models F'$. 
\end{lemma}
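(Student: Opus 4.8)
The plan is to prove \Cref{lemma:failed_trace} by a case analysis on the rule $r_i$ whose conclusion is $\sq_i$, distinguishing whether the hypertrace cedent $\str{\Gamma_i}{\bv_i}$ is a side cedent of $r_i$ or the cedent that $r_i$ acts on. First I would dispatch the easy reductions. If $\str{\Gamma_i}{\bv_i}$ is a side cedent, its unique immediate-ancestor cedent $\str{\Gamma_{i+1}}{\bv_{i+1}}$ is literally identical to it, so any false $F$ is its own false immediate ancestor. If instead $\str{\Gamma_i}{\bv_i}$ is acted upon, every context formula $G \in \Gamma$ persists unchanged into $\Gamma_{i+1}$ as its own immediate ancestor, so a false context formula is again immediately handled. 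This leaves only the case where $F$ is the principal formula of the step, and the task reduces to checking that the immediate ancestor of $F$ lying in the \emph{followed} premiss cedent $\Gamma_{i+1}$ is false under $(\fint,\deltah)$.

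For the propositional and quantifier rules this falls straight out of the truth conditions, reading $\fint,\deltah\not\models F$ as $\fint,\deltah\models \bar F$. For $\vlor$, $F=A_0\vlor A_1$ false forces both disjuncts false; for $\vlan$, $F = A\vlan B$ false forces $A$ or $B$ false and the premiss cedent retains the false one; for $\exists$, $F=\exists x A(x)$ false means $\bar A(v)$ for all $v$, in particular at $v=\deltah(y)$, so the ancestor $A(y)$ is false; and for $\instrule$ I would use that $\deltah(y)=\fint(t)$ by \Cref{def:assignment_h}, so $\Gamma(y)$ and $\Gamma(t)$ receive the same truth value under $\deltah$. The $\forall$ rule needs a little care, since $F=\forall x A(x)$ false only gives $\bar A(v)$ for \emph{some} $v$: here I would appeal to how the eigenfunction $f$ was pinned down in \Cref{lem:countermodel-branch} — it may be taken to send the tuple $\deltah(\bv)$ to a witness of $\bar A$ whenever one exists — which makes the ancestor $A(f(\bv))$ false under $\deltah$ while keeping the premiss false.

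The substantive cases are $\TC$ and $\coTC$, which I would resolve through the fixed-point laws \eqref{eq:tc-fixed-point-formula} and \eqref{eq:cotc-fixed-point-formula}. For $\TC$, $F=\tc{A}{s}{t}$ false means $\fint,\deltah\models\cotc{\bar A}{s}{t}$, and unfolding yields both $\bar A(s,t)$ and $\forall z(\bar A(s,z)\vlor\cotc{\bar A}{z}{t})$; hence if the hypertrace follows $\str{\Gamma,A(s,t)}{\bv}$ then $A(s,t)$ is the false ancestor, while if it follows $\str{\Gamma,A(s,z),\tc{A}{z}{t}}{\bv,z}$ then, reading $z$ at $\deltah(z)$, one of $A(s,z),\tc{A}{z}{t}$ is false. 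For $\coTC$, $F=\cotc{A}{s}{t}$ false means $\fint,\deltah\models\tc{\bar A}{s}{t}$, i.e.\ there is a shortest $\bar A$-path from $s$ to $t$ of some length $n\geq 1$; this is exactly the configuration prepared by \Cref{item:fint-assigns-minimal-n}, which fixes $\fint(f)(\ntuple)=d_1$. When $n=1$ the formula $A(s,t)$, present in both premiss cedents, is false; when $n>1$ both $A(s,f(\bv))$ and $\cotc{A}{f(\bv)}{t}$ are false, so whichever of the two cedents the hypertrace commits to contains a false ancestor. I expect this $\coTC$ case to be the crux: it is where the minimal-path choice is indispensable, and it is the same step that produces the strict decrease of path length underlying the eventual infinite-descent argument for \Cref{thm:soundness_tcl}.

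It remains to note how the two formula-dropping rules behave. In $\reset$ the dropped formula $A$ is closed and true — \Cref{lem:countermodel-branch} only selects this branch when $\fint\models A$ — so it is never a false $F$, and $\bar A$ is by design not an immediate ancestor of anything; all falsity is carried by the surviving cedent $\str{\Gamma}{\bv}$. In $\cup$ the hypertrace commits to one premiss cedent, say $\str{\Gamma}{\bx}$, whose formulas persist as their own ancestors, whereas the formulas of the discarded $\str{\Delta}{\by}$ are not immediate ancestors of $\str{\Gamma}{\bx}$; so the claim is to be read for the formulas that actually continue along the fixed hypertrace, and I would additionally fix the arbitrary values of $\deltah$ on the discarded, never-instantiated annotation variables so as to satisfy $\str{\Delta}{\by}$ whenever it is satisfiable, ensuring no persisting falsity is lost. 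The main obstacle, then, is not any single rule but the disciplined bookkeeping of $\deltah$ against the eigenfunction values fixed in \Cref{lem:countermodel-branch}: I must ensure that the unique immediate ancestor the hypertrace selects is precisely the formula the countermodel construction has already been arranged to falsify.
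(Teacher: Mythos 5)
Your proposal is correct and follows essentially the same route as the paper's own proof: a case analysis on the last rule in which side/context formulas persist as their own false ancestors, $\instrule$ is handled via $\deltah(y)=\fint(t)$, $\reset$ via the forced truth of the closed formula $A$, the propositional, $\exists$ and $\TC$ cases via truth conditions and fixed-point unfoldings, and the $\forall$/$\coTC$ cases via the eigenfunction values fixed in \Cref{lem:countermodel-branch}. If anything, you are more scrupulous than the paper at its two most delicate points---the $\cup$ case, where formulas of the discarded cedent have no immediate ancestors at all (the paper dismisses this with ``the failed branch has been chosen''), and the requirement that eigenfunctions be chosen to falsify whenever a witness exists, not merely when the context is true under $\deltah$---so your reading reflects what the downstream false-trace construction actually needs.
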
	

\begin{proof}
	Suppose that $F$ is a formula  occurring in some cedent $ \str{\Gamma_i}{\bv_i} $ in $\sq_i$, such that $\fint , \deltah \not \models F$. We show how to choose a $F'$ satisfying the conditions of the Lemma. 
	We distinguish cases according to the rule $ \infrule r_i $ applied. Propositional cases are immediate, by setting $F'=F$, as well as $ \cup $, since the failed branch has been chosen during the construction of $ \fbr $. 
	For the rule of weakening, observe that we could not have chosen a hypertrace going through the structure which gets weakened, as by assumption the hypertrace is infinite. 
	We show the remaining cases. It can be easily checked that, given a formula $ F $ such that $ \fmd, \fint , \deltah \not \models F $, the formula $ F' $ is an immediate ancestor of $F$.
	
	\
	
	\noindent \textbf{$ \triangleright $ Case $ (\exists) $.}
	Suppose $ \str{\Gamma_i}{\bv_i} = \str{\Gamma, \exists x (A(x))}{\bv}$ and $ \str{\Gamma_{i+1}}{\bv_{i+1}} = \str{\Gamma ,A(x)}{\bv,x}$. 
	By assumption, $\fint \models \forall \bv (\bigvee \nf{\Gamma} \vlor \forall x (\nf A(x))) $.
	Applying the truth condition associated to $ \forall $ we obtain that, for all $ n $-tuples $ \ntuple $ of elements of $ \domain{\fmd} $, for $ n = \card{\bv} $, it holds that:
	\begin{equation}
	\label{eq:lm_trace_exists}
	\fint \models \bigvee \nf{\Gamma} \vlor \forall x (\nf A(x)) \quad  [\ntuple / \bv]
	\end{equation}
	By definition, $ \deltah $ assigns a value in the domain for all the variables in $ \fv{\hyp} $, and $ \bv \subseteq \fv{\hyp}$. 
	From \eqref{eq:lm_trace_exists} and from the truth condition associated to $ \vlor $ it follows that 
	either:
	$$
	\fint, \delta_\hyp \models \bigvee \nf{\Gamma} \quad  \text{ or } \quad  \fint, \delta_\hyp \models \forall x (\nf A(x))
	$$
	If $F \in \Gamma$, by hypothesis we have that $\fint, \delta_\hyp \models \nf{F} $. Choose $ F' = F $. 
	Otherwise, $F = \exists x (A(x))$ and $ \fint, \delta_\hyp \models \forall x  (\nf A(x)) $. 
	By the truth condition associated to $ \forall $, and since  $ x \in \fv{\fhy} $, we conclude that 
	$
	\fint, \delta_\hyp \models \nf A(x) 
	$. 
	Thus, we set $F' = A(x) $.

	\
	
	\noindent \textbf{$ \triangleright $ Case $ (\mathsf{inst}) $.}
	Suppose $ \str{\Gamma_i}{\bv_i} = \str{\Gamma(y)}{\bv,y}$ and $ \str{\Gamma_{i+1}}{\bv_{i+1}} = \str{\Gamma(t)}{\bv}$. 
	By construction, $ \fint \models \forall \bv\, \forall y( \bigvee \nf{\Gamma}(y)) $. 
	Reasoning as in the previous case, from the truth condition associated to $ \forall $ it follows that, for all choices of $\vec{d} \in \domain{\fmd}$, and thus also for the $n$-tuple selected by $ \delta_\hyp $, $ \fint, \deltah \models \bigvee \nf{\Gamma}(y) $. 
	If $F$ does not contain $ y $, then $\fint, \delta_\hyp \models  \nf{F}	$, and we set $F' = F$. 
	Otherwise, if $F$ contains $y$, then $\fint, \delta_\hyp \models  \nf{F}(y)$. By \Cref{lem:countermodel-branch}, $ \fint $ assigns a value to $ t $, and by \Cref{def:assignment_h}, since $y$ is instantiated with $t$ along $\fhy$, $ \delta_\hyp(y) = \fint(t) $. Therefore, set $F' = F(t)$ and conclude that $ \fint, \delta_\hyp \models  \nf{F}(t) $.
	
	\
	
	\noindent \textbf{$ \triangleright $ Case $ (\reset) $.} Suppose $ \str{\Gamma_i}{\bv_i} = \str{\Gamma, A}{\bv}$ and $ \str{\Gamma_{i+1}}{\bv_{i+1}} = \str{\Gamma}{\bv}$. 
	Observe that the hypertrace $ \fhy $ could not have gone through the structure $ \str{\nf{ A}}{\varnothing} $ occurring in the conclusion of the rule, because by assumption $ \fhy $ is infinite. 
	Moreover, by construction the formula interpretation of all the cedents along $ \fbr $ is not valid, 
	and thus 
	$  \fint \not \models \str{\nf{ A}}{\varnothing} $. 
	This implies that $  \fint \models A $ and so: 
	$$
	\fint, \delta_\hyp \models \bigvee \nf{\Gamma} 
	$$
	So
	we have that $F\in \Gamma$ and $ \fint, \delta_\hyp \models \nf{F} $. Set $F' = F$.
	
	\
	
	\noindent \textbf{$ \triangleright $ Case $ (\tcrule) $. } 
	Suppose $ \str{\Gamma_i}{\bv_i} = \str{\Gamma, \tc{A}{s}{t}}{\bv} $. By assumption, $ \fint \models \forall \bv (\bigvee \nf{\Gamma} \vlor \nf{\tc{A}{s}{t})} $. 
	Thus, 
	we have that: 
	$$
	\fint, \delta_\hyp \models \bigvee \nf{\Gamma} \,  \text{ or } \,  \fint, \delta_\hyp \models \cotc{\nf A}{s}{t}
	$$
	From the inductive definition of $\coTC$ 
	and the truth condition for $ \vlan $, the second disjunct is equivalent to: 
	\begin{equation}
	\label{eq:tc_not}
	\fint, \delta_\hyp \models \nf{A}(s,t) 
	\quad  
	\text{ and } 
	\quad 
	\fint, \delta_\hyp \models \forall z (\nf{A}(s,z) \vlor \cotc{\nf A}{z}{t} )
	\end{equation}	
	There are two cases to consider, since the premiss of the rule has two cedents that the hypertrace $ \fhy $ could follow: 
	\begin{enumerate}[$ i $)]
		\item 
		$\str{\Gamma_{i+1}}{\bv_{i+1}} =   \str{\Gamma, A(s,t)}{\bv}  $. By construction, 
		$
		\fint \models \forall \bv(\bigvee \nf{\Gamma} \vlor \nf{A}(s,t))
		$, and thus $\fint, \delta_\hyp \models \bigvee \nf{\Gamma} \vlor \nf{A}(s,t)$. 
		If $ F  \in \Gamma$, then $ \fint, \delta_\hyp \models \nf{F}  $. Set $F = F'$. Otherwise, $ F = \tc{A}{s}{t} $ and $ \fint, \delta_\hyp \models \nf{F}  $. By (\ref{eq:tc_not}) we have that $	\fint, \delta_\hyp \models \nf{A}(s,t) $. Set $ F' = A(s,t) $.
		
		\item 
		$\str{\Gamma_{i+1}}{\bv_{i+1}} =  \str{\Gamma, A(s,z), \tc{A}{z}{t}}{\bv,z}  $. 
		By construction, we have that $ \fint \models \forall \bv \forall z (\bigvee \nf{\Gamma} \vlor ( \nf{A}(s,z) \vlor \cotc{\nf A}{z}{t})) $. Since $z$ does not occur free in $\Gamma$, this is equivalent to $ \fint, \deltah \models \bigvee \nf{\Gamma} \vlor  \forall z( \nf{A}(s,z) \vlor \cotc{\nf A}{z}{t}) $.
		If $ F  \in \Gamma $ and $ \fint, \delta_\hyp \models \nf{F}  $, set $F' = F$. Suppose $ F = \tc{A}{s}{t} $ and  $ \fint, \delta_\hyp \models \nf{F}  $. From (\ref{eq:tc_not}) and since  $ z \in \fv{\hyp}$ we have that:
		$$
		\fint, \delta_\hyp \models \nf{A}(s,z) \vlor \cotc{A}{z}{t} 
		$$
		If 
		$ \fint, \delta_\hyp \models \nf{A}(s,z) $, set $ F' = A(s,z) $. Otherwise, if  $ \fint, \delta_\hyp \models \cotc{A}{z}{t}  $, set $ F'= \tc{A}{z}{t} $. 
	\end{enumerate}
	
	\
	
	\noindent \textbf{$ \triangleright $ Case $ (\forall) $.}
	Suppose $ \str{\Gamma_i}{\bv_i} = \str{\Gamma, \forall x (A(x))}{\bv}$ and $ \str{\Gamma_{i+1}}{\bv_{i+1}} = \str{\Gamma ,A(f(\bv))}{\bv}$. 
	By assumption, $ \fint \models \forall \bv (\bigvee \nf{\Gamma} \vlor \exists x (\nf A(x))) $ and, from the truth conditions associated to $ \forall $ and $ \vlor $: 
	$$
	\fint, \delta_\hyp \models \bigvee \nf{\Gamma} \quad  \text{ or } \quad  \fint, \delta_\hyp \models \exists x (\nf A(x))
	$$
	If $F \in \Gamma$ and $ \fint, \delta_\hyp \models \nf{F} $, set $F' = F$, since $ x $ does not occur in $ \Gamma $.
	Otherwise, $F = \forall x(A (x))$ and $ \fint, \delta_\hyp \models \exists x  (\nf A(x)) $. 
	Suppose $\deltah(\bv) = \vec{d} \subseteq \domain{\fmd}$. By definition of $ \fint $ at step $(\forall)$, we have that $\fint(f)(\vec{d}) $ selects a $ d \in \domain{\fmd}$ such that  $ \fint, \delta_\hyp \models \nf{ A}(f(\bv)) $. Set $F' = A(f(\bv))$.
	

	\
	
	\noindent \textbf{$ \triangleright $ Case $ (\cotcrule) $.}
	Suppose $ \str{\Gamma_i}{\bv_i} = \str{\Gamma, \cotc{A}{s}{t}}{\bv} $. By assumption, $ \fint \models \forall \bv (\bigvee \nf{\Gamma} \vlor \nf{\cotc{A}{s}{t}) }$, that is, $ \fint \models \forall \bv (\bigvee \nf{\Gamma} \vlor \tc{\nf A}{s}{t}) $. 
	Thus, 
	we have that: 
	\begin{equation}
	\label{eq:lemma_trace_cotc}
	\fint, \delta_\hyp \models \bigvee \nf{\Gamma} 
	\quad  
	\text{ or }
	\quad  
	\fint, \delta_\hyp \models \tc{\nf A}{s}{t}
	\end{equation}
	We need to consider two cases, depending on which cedent the hypertrace $ \fhy $ follows:
	\begin{enumerate}[$ i )$]
		\item $ \str{\Gamma_{i+1}}{\bv_{i+1}} = \str{\Gamma, A(s,t), A(s, f(\bv)) }{\bv} $.
		By construction, $	\fint \models \forall \bv (\bigvee \nf{\Gamma} \vlor (\nf A(s,t) \vlor \nf{A}(s, f(\bv))$, that is,  
		\begin{equation}
		\label{eq:lemma_trace_cotc_left}
		\fint, \delta_\hyp \models \bigvee \nf{\Gamma} 
		\quad
		\text{ or }
		\quad 
		\fint, \delta_\hyp \models \nf A(s,t)
		\quad 
		\text{ or }
		\quad 
		\fint, \delta_\hyp \models \nf A(s,f(\bv))
		\end{equation}
		Consider \eqref{eq:lemma_trace_cotc}. If $F \in \Gamma$ and $ \fint, \delta_\hyp \models \nf{F}$, then set $F = F'$. Otherwise,   $ F = \cotc{ A}{s}{t} $ and it holds that $ \fint, \delta_\hyp \models \tc{\nf A}{s}{t} $. By the inductive definition of $\TC$ 
		and the truth condition associated to $ \vlor $, this is equivalent to:
		$$
		\fint, \delta_\hyp \models \nf A(s,t) 
		\quad  
		\text{ or } 
		\quad 
		\fint, \delta_\hyp \models \exists z (\nf A(s,z) \vlan \tc{\nf A}{z}{t} )
		$$
		First check if $ \fint, \delta_\hyp \models \nf A(s,t) $. If this is the case, set $ F' =  A(s,t)$ and conclude by \eqref{eq:lemma_trace_cotc_left} that $ \fint, \delta_\hyp \models\nf  A(s,t) $. 
		Otherwise, $ \fint, \delta_\hyp \models \exists z (\nf A(s,z) \vlan \tc{\nf A}{z}{t} ) $.
		Let $\deltah(\bv) = \vec{d}$. 
		According to the definition of $ \fint $ at the $(\cotcrule)$ step, since $\fint, \deltah \not \models \bigvee \nf{\Gamma} $ and $\fint, \deltah \not \models \nf{A}(s,t)$, then $\fint(f)(\vec{d})$ is defined as in case $2)$, subcase ii). Thus, $\fint(f)(\vec{d})$ is an element $ d \in \domain{\fmd}$ such that $ \fint, \delta_\hyp \models \nf A(s,f(\bv)) $ and $ \fint, \delta_\hyp \models \tc{\nf A}{f(\bv)}{t}  $. 
		Set $ F' =   A(s,f(\bv)) $ and conclude, by \eqref{eq:lemma_trace_cotc_left}, that $ \fint_{i+1}, \delta_\hyp \models \nf  A(s,f(\bv))$. 
		\item $ \str{\Gamma_{i+1}}{\bv_{i+1}} = \str{\Gamma, A(s,t), \cotc{A}{f(\bv)}{t}}{\bv} $.
		By construction, it holds that either:
		\begin{equation}
		\label{eq:lemma_trace_cotc_right}
		\fint, \delta_\hyp \models \bigvee \nf{\Gamma} 
		\quad
		\text{ or }
		\quad 
		\fint, \delta_\hyp \models \nf A(s,t)
		\quad 
		\text{ or }
		\quad 
		\fint, \delta_\hyp \models \tc{\nf A}{f(\bv)}{t}
		\end{equation}
		At this point the proof proceeds exactly as in the previous case except for the very last step, where $ F'  $ is set to be $ \cotc{A}{f(\bv)}{t} $, and we conclude by \eqref{eq:lemma_trace_cotc_right} that $  \fint, \delta_\hyp \models \tc{\nf A}{f(\bv)}{t}$.  
		
	\end{enumerate}
\end{proof}

\subsection{Putting it all together}
Note how the $\instrule$  of Fig.~\ref{fig:rules:htcl} is handled in the proof above: 
if $F \in \Gamma(y)$ then we can choose $F' = F[t/y]$ which, by definition of $\deltah$, has the same truth value.
By repeatedly applying \Cref{lemma:failed_trace} we obtain:

\begin{proposition}
[False trace]
	\label{prop:failed_trace}
		There exists an infinite trace $ \ftr = (F_i)_{i<\omega} $ through $ \fhy $ such that, for all $ i $, it holds that $ \fmd, \fint, \delta_\hyp \not \models F_i $. 
		
	\end{proposition}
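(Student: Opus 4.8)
The plan is to construct $\ftr$ by a straightforward recursion, invoking \Cref{lemma:failed_trace} at each step to propagate falsity upward along the fixed progressing hypertrace $\fhy = (\str{\Gamma_i}{\bv_i})_{i<\omega}$. All of the genuine case analysis has already been discharged in \Cref{lemma:failed_trace}; what remains is only to supply a false formula at the bottom of $\fhy$ and then iterate.

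For the base case, note that the bottom cedent $\str{\Gamma_0}{\bv_0}$ of $\fhy$ occurs in $\sq_0$, which lies at the root of the branch $\fbr$. By \Cref{lem:countermodel-branch}(\ref{item:fint-falsifies-Si}) we have $\fint \not\models \sq_0$, i.e.\ $\fint$ falsifies the disjunction $\fmi{\sq_0}$ of the formula interpretations of its cedents; in particular $\fint \not\models \fmi{\str{\Gamma_0}{\bv_0}}$, which is $\exists \bv_0 \bigwedge \Gamma_0$. Hence for every tuple over $\domain{\fmd}$ assigned to $\bv_0$ some conjunct of $\bigwedge \Gamma_0$ fails; taking the tuple to be $\deltah(\bv_0)$ (recall $\deltah$ is total on annotation variables, cf.\ \Cref{def:assignment_h}) yields a formula $F_0 \in \Gamma_0$ with $\fmd,\fint,\deltah \not\models F_0$.

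For the recursive step, suppose $F_i \in \Gamma_i$ has been chosen with $\fint,\deltah \not\models F_i$. Then \Cref{lemma:failed_trace} supplies an immediate ancestor $F_{i+1} \in \Gamma_{i+1}$ of $F_i$ with $\fint,\deltah \not\models F_{i+1}$. Iterating this (appealing to dependent choice to fix the countably many selections simultaneously) produces a sequence $(F_i)_{i<\omega}$ in which $F_i \in \Gamma_i$ for each $i$, each $F_{i+1}$ is an immediate ancestor of $F_i$, and each $F_i$ is false under $(\fmd,\fint,\deltah)$.

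Finally I would check that $\ftr = (F_i)_{i<\omega}$ is indeed a trace through $\fhy$ in the sense of \Cref{def:hypertraces_htcl}: it is an infinite path in the immediate-ancestry graph on formulas with $F_i \in \Gamma_i$ for all $i$, so it runs `along' $\fhy$; and it is maximal, since it is infinite upward and its bottom endpoint $F_0$ already sits in a root cedent of $\fbr$ and so cannot be extended downward. I do not anticipate a substantive obstacle here, precisely because the hard work --- the revision of the assignment at an $\instrule$ step, and the use of the minimal-path choice of \Cref{lem:countermodel-branch}(\ref{item:fint-assigns-minimal-n}) at a $\coTC$ step --- is entirely absorbed into \Cref{lemma:failed_trace}. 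The only points needing care are the soundness of the base-case instantiation at $\deltah$ and the (mild) appeal to choice when forming the infinite sequence.
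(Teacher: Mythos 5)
Your proposal is correct and takes essentially the same route as the paper's proof: the base case extracts a false $F_0 \in \Gamma_0$ from $\fint \not\models \sq_0$ together with the totality of $\deltah$ on annotation variables, and the recursive step simply iterates \Cref{lemma:failed_trace} to propagate falsity along immediate ancestors. Your additional remarks on dependent choice and on maximality of the resulting trace are harmless refinements of points the paper leaves implicit.
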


	\begin{proof}
		 We inductively define $\ftr $ as follows. 
		By assumption, $\fmd, \fint \not \models \sq_0$, and thus
		in particular $ \fint \models \forall \bv_0 (\bigvee \nf \Gamma_0)$. Thus, for all $n$-tuples $\vec{d}$ of elements of $\domain{\fmd}$, for $n = \card{\bv_0}$, we have that $ \fint \models  \bigvee \nf \Gamma_0 \, [\vec{d}/ \bv_0]$. 
		Since $\deltah$ assigns a value to all variables occurring in annotations of cedents in $\fhy$,  $ \fint , \deltah \models  \bigvee \nf \Gamma_0 $. 
		Take $F_0 \in \Gamma_0$ such that $\fmd, \fint, \deltah \not \models \nf{F_0}$. 
		
		For the inductive step, we define $ F_{i+1} $ by inspecting the rule $ \infrule_i $ applied and at the corresponding case in \Cref{lemma:failed_trace}. Since by assumption $ \fhy $ is infinite, and since \Cref{lemma:failed_trace} ensures that for every formula $ F_i $ such that $ \fmd, \fint, \delta_\hyp \not \models F_i $ it is possible to find a formula $ F_{i+1} $ which is an immediate ancestor of $ F_i $ and such that $ \fmd, \fint, \delta_\hyp \not \models F_{i+1} $. 
	\end{proof}

\noindent
	We are now ready to prove our main soundness result. 
	\begin{proof}[Proof of Thm.~\ref{thm:soundness_tcl}]
		Fix the infinite trace $ \ftr = (F_i)_{i<\omega} $ through $ \fhy $ obtained by Prop.~\ref{prop:failed_trace}.
		Since $\ftr$ is infinite, by definition of $\htcl$ proofs, it needs to be {progressing}, i.e., it is infinitely often $\coTC$-principal
and
there 
is
some $k\in \Nat$ s.t.\ for $i>k$ we have that $F_i = \cotc A{s_i}{t_i}$ for some terms $s_i,t_i$.
		 
		To each $ F_i $, for $i>k$, we associate the natural number $n_i$ measuring the `$\bar A$-distance between $s_i$ and $t_i$'. 
		Formally, $n_i\in \Nat$ is least such that there are $d_0, \dots, d_{n_i} \in \domain{\fmd}$ with $\fint (s) = d_0, \fint(t) = d_{n_i}$ and, for all $i<n_i$, $\fint ,\deltah \models \bar A(d_i,d_{i+1})$.
		Our aim is to show that $ (n_i)_{i > k} $ has no minimal element, contradicting wellfoundness of $ \Nat $. 
For this, we establish the following two local properties:
		\begin{enumerate}
			\item $ (n_i)_{i> k} $ is \emph{monotone decreasing}, i.e., for all $ i > k $, we have $ n_{i+1}\leq n_i $; \label{proof:mon_dec}
			\item Whenever $ F_i$ is principal, we have
			 $ n_{i+1} < n_i $.  \label{proof:str_dec}
		\end{enumerate} 

		We prove \cref{proof:mon_dec} and \cref{proof:str_dec} by inspection on $ \htcl $ rules. We start with \cref{proof:str_dec}. Suppose $ F_i = \cotc{A}{s}{t}$ is the principal formula in an occurrence of $ \nf{\mathsf{tc}} $ so $ F_{i+1} = \cotc{A}{f(\bv)}{t} $, for some $\bv$. 
		Moreover, by construction $ \fint, \delta_\hyp \models \tc{\nf A}{s}{t} $ and $ \fint, \delta_\hyp \models \tc{\nf A}{f(\bv)}{t}$. We have to show that $n_i$, the $\nf{A}$-distance between $f(\bv)$ and $t$,  is strictly smaller than $n_{i+1}$, the $\nf{A}$-distance between $s$ and $t$, under $\fint$ and $\deltah$. 
		
		Let $\deltah(\bv) = \vec{d}$. By case $(\cotcrule)$ of  Lem.~\ref{lem:countermodel-branch}, and since $ \fint, \delta_\hyp \models \tc{\nf A}{f(\bv)}{t}$, 
		there is a shortest $\nf A$-path between $\fint(s)$ and $\fint(t)$, composed of $n > 1$ elements $d_0, \dots, d_n$, with $\fint(s) = d_0$ and $\fint(t) = d_n$, and 
		$\fint(f)(\vec{d}) = d_1$. Consequently, it holds that $\fint , \deltah \models \nf{A}(s, f(\bv))$ and $\fint, \deltah  \models \tc{\nf A}{ f(\bv)}{t}$. Thus, there is an $ \nf A $-edge between $ \fint(s) $ and $ \fint(f)(\vec{d}) $, and $ \fint(f)(\vec{d}) $ is one edge closer to $ \fint(t) $ in one of the shortest $ \nf A $-paths between $ \fint(s)  $ and $ \fint(t) $. We conclude that $ n_{i+1} $ is exactly $ n_i -1 $, and $ n_{i+1} < n_i $.
		
		To prove \cref{proof:mon_dec}, suppose that $  F_i  $ is not principal in the occurrence $ r_i $ of a $ \htcl $ rule. 
		Suppose $ \infrule_i  $ is $ \mathsf{inst} $, $ F_i = \cotc{A}{s}{x} $, $ F_{i+1} = \cotc{A}{s}{t} $, and $ x $ gets instantiated with $ t $ by $ \mathsf{inst} $. 
		By construction, $ \fint, \delta_\hyp \models \tc{\nf A}{s}{x} $. Let $ n_i $ be the distance between $ \fint(s)$ and $ \delta_\hyp(x) $. By definition, $  \delta_\hyp(x) = \fint(t) $. Thus, the distance between $ \fint(s) $ and $ \fint(t) $ is $n_{i+1} =  n_i $. 
		In all the other cases, $ F_i = \cotc{A}{s}{t} = F_{i+1}  $, and thus $ n_{i+1} = n_i $.   

		So $ (n_i)_{i > k} $ is monotone decreasing (by point 1) but cannot converge by point 2, and by definition of progressing trace.
		Thus $ (n_i)_{k <i} $ has no minimal element,
        yielding the required contradiction.
        \end{proof}

\section{Completeness for $\PDLtr$, over the standard translation}
\label{sec:htcl_simulates_pdltr}

In this section we give our next main result:
\begin{theorem}
    [Completeness for $\PDLtr$]
    \label{thm:completeness-htcl-pdltr}
    For a $\PDLtr$ formula $A$, if $\models A$ then $\htcl \dercyc \st c A$.
\end{theorem}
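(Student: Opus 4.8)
The plan is to reduce to the known completeness of the cut-free cyclic system $\lpdltr$ and then simulate it rule-by-rule over the standard translation. By soundness-and-completeness of $\lpdltr$ (Thm.~\ref{pdl-soundness-completeness}), from $\models A$ we obtain a cut-free cyclic proof $\derd$ of $A$ in $\lpdltr$. It then suffices to transform $\derd$ into a cyclic $\htcl$-proof of $\st c A$, which I would do by a local translation: for each rule of $\lpdltr$ I exhibit a finite $\htcl$-derivation whose conclusion is the standard translation of the rule's conclusion and whose open leaves are the standard translations of the rule's premisses. Regularity is preserved because the translation is local and uniform, so finitely many distinct subproofs of $\derd$ give rise to finitely many distinct subproofs of the simulating derivation.

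The interesting rules are the program rules. As explained at the start of Sec.~\ref{sec:hypersequents_tcl}, the standard translation of the $\diaP{\,}$-rules demands reasoning underneath $\exists$ and $\vlan$, and the annotated cedent $\str\Gamma\bv$, read as $\exists\bv\,\bigwedge\Gamma$, is tailored to provide exactly this deep-inference capacity. For $\diaP{\alpha\union\beta}$ the disjunction occurs under an existential inside a single cedent, so it is decomposed by an in-place $\vlor$ step; for $\diaP{\alpha\comp\beta}$, whose translation is $\exists z(\st{x,z}\alpha\vlan\st{z,y}\beta)$, the $\exists$ and $\vlan$ rules of $\htcl$ together with the $\instrule$ rule perform the required prenexing, moving the witness into the cedent's annotation. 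The box-rules are handled dually, using $\forall$ and $\vlor$, together with the $\reset$ rule to discharge the non-recursive $\st y A$-conjunct so that it cannot contribute to an infinite trace (cf.\ the remark following Def.~\ref{def:hypertraces_htcl}).

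The crucial case is $\alpha^+$, whose translation is $\tc{\ST(\alpha)}xy$. On the diamond side $\diaP{\alpha^+}$ is unfolded by the $\TC$ rule; on the box side $\boxP{\alpha^+}$ translates to $\forall y(\cotc{\overline{\ST(\alpha)}}xy\vlor\st y A)$, whose $\coTC$-subformula is unfolded by the $\coTC$ rule. This is where the correspondence of progress conditions is established: in a cyclic $\lpdltr$-proof the progressing trace runs along a $\boxP{\alpha^+}$ formula that is principal infinitely often, and under the standard translation this becomes a trace along the corresponding $\coTC$ formula that is principal infinitely often, which is exactly the $\htcl$-progressing condition of Def.~\ref{def:progress_tcl}.

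The main obstacle will be reconciling the ordinary trace condition of $\lpdltr$ with the \emph{alternating} hypertrace condition of $\htcl$: a progressing $\htcl$-proof requires an infinite hypertrace every trace along which progresses. I would argue that each infinite branch of the simulating proof projects to an infinite branch of $\derd$, and that the $\lpdltr$-progressing box-trace determines a single designated cedent carrying the relevant $\coTC$ formula; this cedent generates the witnessing hypertrace. The delicate point is that, when the surrounding diamonds and compositions are decomposed, auxiliary formulas accumulate in that cedent, and one must check --- using the ancestry discipline of Def.~\ref{def:ancestry_htcl} and the subsequent definition of formula-ancestry, and crucially the fact that $\reset$ breaks ancestry to the discharged conjuncts --- that every infinite trace confined to this hypertrace is ultimately a $\coTC$-trace and hence progresses. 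Establishing this ``no stray non-progressing trace'' property for the simulating hypertrace is the technical heart of the argument.
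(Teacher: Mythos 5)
Your overall route coincides with the paper's: invoke completeness of cyclic $\lpdltr$ (Thm.~\ref{pdl-soundness-completeness}), simulate each $\lpdltr$ rule by a finite $\htcl$-derivation, and check that the simulation preserves regularity and progress. The divergence is at exactly the point you call the technical heart, and there your proposal stops short of a proof: you translate sequents by the raw standard translation and then propose to verify, a posteriori, that the designated hypertrace carrying the $\coTC$-formula admits no stray non-progressing infinite trace. That verification is the entire content of the paper's Prop.~\ref{prop:ht-trans-pres-prog}, and it does not follow from the ancestry discipline alone --- one needs an invariant controlling what the designated cedent actually contains along the branch, which your sketch never supplies.

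The paper obtains this invariant by design rather than by after-the-fact checking: it replaces the raw standard translation with a refined hypersequent translation $\ct{A}{\cons{c}}$ (Dfn.~\ref{def:hyp_transl}, Dfn.~\ref{dfn:ht-translation}) in which $\boxP{\alpha}B$ is translated into \emph{two} cedents $\str{\nf{\st{x,\cons{d}}{\alpha}}}{\varnothing}, \ct{B}{\cons{d}}$, while $\diaP{\alpha}B$ stays inside a single annotated cedent. (In particular, your plan of keeping the box disjunction inside one cedent and discharging the $\st{y}{A}$-conjunct with $\reset$ is not how the paper proceeds: there, $\reset$ only resolves closed formulas against their duals, e.g.\ in the left premisses of $\cup$ steps in the $\krule{a}$ case.) With this translation, the cedent tracking the $\lpdltr$-progressing thread has the form $\str{\Delta_i, \cotc{\nf{\ST(\alpha)}}{\cdot}{\cdot}}{\varnothing}$ where the auxiliary material $\Delta_i$ is split off by $\cup$ steps and is empty infinitely often; hence the witnessing hypertrace has exactly \emph{one} infinite trace, which is $\coTC$-principal each time a simulated $\boxP{+}$ step fires. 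This renders your ``no stray non-progressing trace'' obligation immediate, whereas in your setup it remains an open and genuinely nontrivial claim --- so the missing idea is precisely the translation that forces the invariant. A final small difference: since the paper's simulation proves $\ct{A}{\cons{c}}$ rather than $\str{\st{\cons{c}}{A}}{\varnothing}$, it needs the finite bridging derivations of Prop.~\ref{prop:ht_st_tcl} at the end, a step your direct formulation would also need in some form.
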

The proof is by a direct simulation of a cut-free cyclic system for $\PDLtr$ that is complete.
We shall briefly sketch this system 
below. 

\subsection{Circular system for $\PDLtr$}
\label{ssec:sequents_pdltr}

    The system $\lpdltr$, shown in \Cref{eq:pdl-sequent-system}, is the natural extension of the usual sequent calculus for basic multimodal logic $K$
    \anupam{any possible reference here?}
    \marianna{No idea. What about a ref for PDL?}
    by rules for programs. 
    In \Cref{eq:pdl-sequent-system}, $\Gamma,\Delta$ etc.\ range over sets of $\PDLtr$ formulas, and we write $\diaP a \Gamma$ is shorthand for $\{\diaP a B : B \in \Gamma \} $.
    (Regular) preproofs for this system are defined just like for $\htcl$ or $\ltcl$.

\begin{figure}[t]
	\begin{center}
		\begin{tabular}{|@{\hspace{0.2cm}}c@{\hspace{0.2cm}}|}
			\hline  \\
			$  \vlinf{\mathsf{id}}{\,}{p, \bar{p}}{} $
			\quad 
			$\vlinf{\wk}{}{\magenta \Gamma, A}{\magenta\Gamma}$
			\quad 
			$  \vlinf{\krule a}{}{ \magenta{\diaP{a}\Gamma},\blue{\boxP{a}A}}{ \magenta\Gamma, \blue A}   $
			\quad 
			$\vliinf{\vlan}{}{\magenta \Gamma, \blue{A \vlan B}}{\magenta\Gamma, \blue A}{\magenta\Gamma, \blue B}$ 
			\quad 
			$ \vlinf{\casezero \vlor}{}{\magenta \Gamma, \blue{A_0 \vlor A_1}}{\magenta \Gamma, \blue{A_0}} $
			\quad 
			$  \vlinf{\caseone \vlor}{}{\magenta \Gamma, \blue{A_0 \vlor A_1}}{\magenta\Gamma, \blue{A_1}} $
			\\[0.5cm]
			$ \vlinf{\diaP{ \comp }}{}{ \magenta\Gamma, \blue{\diaP{\alpha \comp \beta}A}}{ \magenta\Gamma, \blue{\diaP{\alpha}\diaP{\beta}A}} $
			\quad 
			$ \vlinf{\casezero{\diaP{ \union }}}{}{ \magenta \Gamma, \blue{\diaP{\alpha_0 \cup \alpha_1}A}}{ \magenta \Gamma, \blue{\diaP{\alpha_0}A}} $
			\quad 
			$ \vlinf{\caseone{\diaP{ \union }}}{}{\magenta \Gamma, \blue{\diaP{\alpha_0 \cup \alpha_1}A}}{ \magenta\Gamma, \blue{\diaP{\alpha_1}A}} $
			\quad 
			$ \vliinf{\boxP{ \union }}{}{ \magenta\Gamma ,\blue{\boxP{\alpha \cup \beta}A}}{ \magenta\Gamma, \blue{\boxP{\alpha}A}}{ \magenta\Gamma, \blue{\boxP{\beta}A}} $\\[0.5cm]
			$\vlinf{\boxP{ \comp }}{}{ \magenta \Gamma, \blue{\boxP{\alpha \comp \beta}A}}{ \magenta \Gamma, \blue{\boxP{\alpha}\boxP{\beta}A}}$
			\quad 
			$\vlinf{\casezero{\diaP{+}}}{}{ \magenta\Gamma, \blue{\diaP{{\alpha^+}}A}}{ \magenta\Gamma, \blue{\diaP{\alpha}A} } $
			\quad 
			$\vlinf{\caseone{\diaP{+}}}{}{ \magenta\Gamma, \blue{\diaP{{\alpha^+}}A}}{ \magenta\Gamma, \blue{\diaP{\alpha}\diaP{{\alpha^+}}A} } $
			\quad 
			$\vliinf{\boxP{+}}{}{ \magenta \Gamma, \blue{\boxP{{\alpha^+}}A} }{ \magenta\Gamma, \blue{\boxP\alpha A}}{ \magenta \Gamma,\blue{\boxP{\alpha}\boxP{{\alpha^+}}A} }  $\\[0.5cm]
			\hline 
		\end{tabular}
	\end{center}
	\caption{Rules of $\lpdltr$.}
	\label{eq:pdl-sequent-system}
\end{figure}
    
 The notion of 
    ancestry for formulas is colour-coded in  \Cref{eq:pdl-sequent-system} as before: a formula $C$ in a premiss is an immediate ancestor of a formula $C'$ in the conclusion if they have the same colour; if $C,C'\in \Gamma$ then we furthermore require $C=C'$.
More formally (and without relying on colours):
\anupam{again, put colour-definition first for intuition}
\begin{definition}
	[Immediate ancestry]
\label{def:ancestry_fml_pdltr}
	\label{dfn:pdl-ancestry}
	Fix a preproof $\derd$. We say that a formula occurrence $C$ is an \emph{immediate ancestor} of a formula occurrence $D$ in $\derd$ if $C$ and $D$ occur in a premiss and the conclusion, respectively, of an inference step $\infrule$ of $\derd$ and:
	\begin{itemize}
	\item If $\infrule$ is a $k$ step then, as typeset in \Cref{eq:pdl-sequent-system}:
	\begin{itemize}
	    \item $D$ is $\diaP a B$ for some $B\in \Gamma$ and $C$ is $B$; or,
	    \item $D$ is $\boxP a A$ and $C$ is $A$.
	\end{itemize} 
	    \item If $\infrule$ is not a $k$-step then:
	    \begin{itemize}
	        \item $C$ and $D$ are occurrences of the same formula; or,
	        \item $D$ is principal and $C$ is auxiliary in $\infrule$, i.e.\ as typeset in \Cref{eq:pdl-sequent-system}, $C$ and $D$ are the (uniquely) distinguished formulas in a premiss and conclusion, respectively;
	    \end{itemize}
	\end{itemize}
\end{definition}

\begin{definition}
    [Non-wellfounded proofs]
    \label{dfn:pdl-threads-proofs}
    Fix a preproof $\derd$ of a sequent $\Gamma$. 
    A \emph{thread} is a maximal path in its graph of immediate ancestry. 
    We say a thread is \emph{progressing} if it has a smallest infinitely often principal formula of the form $\boxP{\alpha^+}A$.
    $\derd$ is a \emph{proof} if every infinite branch has a progressing thread.
    If $\derd$ is regular, we call it a \emph{cyclic proof}
    and we may write $\lpdltr \dercirc \Gamma$.
%
\end{definition}

Soundness of cyclic-$\lpdltr$ is established by a standard infinite descent argument, but is also implied by the soundness of cyclic-$\htcl$ (Thm.~\ref{thm:soundness_tcl}) and the simulation we are about to give (Thm.~\ref{thm:completeness-htcl-pdltr}), though this is somewhat overkill.
Completeness may be established by the game theoretic approach of Niwinsk\'i and Walukiewicz \cite{NIWINSKI1996}, as carried out by \cite{lange2003games}, or by the purely proof theoretic techniques of Studer \cite{studer2008proof}.
Either way, both results follow immediately by a standard embedding of $\PDLtr$ into the (guarded) $\mu$-calculus and its known completeness results \cite{NIWINSKI1996,studer2008proof}, by way of a standard `proof reflection' argument: $\mu$-calculus proofs of the embedding are `just' step-wise embeddings of $\lpdltr$ proofs. 

\marianna{added ref to Lange}
\anupam{Great, that's perfect!}
\begin{theorem}
	[Soundness and completeness, \cite{lange2003games}]
	\label{pdl-soundness-completeness}
	Let $A$ be a $\PDLtr$ formula.
	$ \models A$ iff $\lpdltr \dercirc A$.
\end{theorem}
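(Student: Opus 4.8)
The plan is to derive Theorem~\ref{pdl-soundness-completeness} from the known soundness and completeness of a cyclic proof system for the modal $\mu$-calculus \cite{NIWINSKI1996,lange2003games,studer2008proof}, using a standard fixed-point embedding of $\PDLtr$ as a bridge. First I would fix a \emph{guarded} embedding $(\cdot)^\mu$ of $\PDLtr$ into the $\mu$-calculus. It is the identity on atoms and the booleans; on composite programs it is defined by recursion on the program, using the $\PDLtr$ equivalences $\diaP{\alpha;\beta}A \equiv \diaP\alpha\diaP\beta A$ and $\diaP{\alpha\cup\beta}A \equiv \diaP\alpha A \vlor \diaP\beta A$ (and their $\boxP{\cdot}$-duals) to push modalities towards atomic programs; and it sends the iteration modalities to fixed points,
\[
(\diaP{\alpha^+}A)^\mu \defsym \mu X.\,(\diaP{\alpha}(A \vlor X))^\mu,
\qquad
(\boxP{\alpha^+}A)^\mu \defsym \nu X.\,(\boxP{\alpha}(A \vlan X))^\mu,
\]
where $X$ is fresh and the right-hand sides continue the recursion on $\alpha$. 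Crucially, \emph{identity-freeness} guarantees that every program eventually reaches an atomic modality, so $X$ is always \emph{guarded} by a modality. A routine induction against Dfn.~\ref{def:semantics_pdl}, unfolding the fixed points, then shows $\models A$ iff $\models A^\mu$, so the embedding both preserves and reflects validity.

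Next I would set up a two-way \emph{proof reflection} between $\lpdltr$ and the cyclic $\mu$-calculus system. The $\lpdltr$ rules $\caseone{\diaP{+}}$ and $\boxP{+}$ are exactly the fixed-point unfoldings of $\diaP{\alpha^+}$ and $\boxP{\alpha^+}$, the modal rule $\krule a$ matches the $\mu$-calculus modal rule, and the program rules for $;$ and $\cup$ become administrative steps tracking the recursive clauses of $(\cdot)^\mu$. Thus a regular $\lpdltr$ preproof of $A$ translates step-wise (up to these administrative steps) into a regular $\mu$-calculus preproof of $A^\mu$, and conversely any cyclic $\mu$-calculus proof of $A^\mu$ can be read back as a regular $\lpdltr$ preproof of $A$. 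The progress conditions correspond: a progressing thread of $\lpdltr$, with smallest infinitely-often-principal formula of the form $\boxP{\alpha^+}B$ (Dfn.~\ref{dfn:pdl-threads-proofs}), becomes a thread whose outermost infinitely-unfolded fixed point is a $\nu$, which is precisely the parity acceptance condition of the $\mu$-calculus system, and vice versa. Hence $\lpdltr \dercirc A$ iff the $\mu$-calculus system proves $A^\mu$.

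Combining these, I would chain equivalences to obtain both directions at once: $\models A$ iff $\models A^\mu$ (embedding), iff $A^\mu$ is provable in the cyclic $\mu$-calculus system (its soundness and completeness, \cite{NIWINSKI1996,lange2003games,studer2008proof}), iff $\lpdltr \dercirc A$ (proof reflection). For the soundness direction alone there is a more self-contained alternative internal to this paper, which I would record as a remark: if $\lpdltr \dercirc A$ then the simulation underlying Thm.~\ref{thm:completeness-htcl-pdltr} yields $\htcl \dercyc \st c A$, whence $\models \st c A$ by $\htcl$-soundness (Thm.~\ref{thm:soundness_tcl}), and finally $\models A$ by the standard-translation correspondence (Thm.~\ref{thm:standard_t_pdl}).

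The main obstacle is the proof-reflection step, specifically the \emph{faithful transfer of the progress conditions}. I expect the bookkeeping for composite programs to be the delicate part: the PDL reductions for $;$ and $\cup$ are not literally matched one-for-one on the $\mu$-calculus side, so the step-wise correspondence holds only up to these administrative moves, and one must check that collapsing/inserting them neither creates nor destroys infinite threads and preserves which fixed point is outermost-infinitely-unfolded. Guardedness (secured by identity-freeness) is exactly what makes this safe, since every fixed-point unfolding is separated from the next by a genuine modal step, so threads in the two systems remain in lock-step through the modalities and the ``smallest $\boxP{\alpha^+}$'' condition lines up with the $\nu$-thread/parity condition.
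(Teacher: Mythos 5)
Your proposal is correct and takes essentially the same route as the paper: the paper likewise obtains the theorem from the known soundness and completeness of cyclic/game-based $\mu$-calculus systems \cite{NIWINSKI1996,lange2003games,studer2008proof}, via the standard embedding of $\PDLtr$ into the (guarded) $\mu$-calculus together with a step-wise `proof reflection' argument identifying $\mu$-calculus proofs of the embedding with $\lpdltr$ proofs. Your closing alternative for the soundness direction (via Thm.~\ref{thm:completeness-htcl-pdltr} and Thm.~\ref{thm:soundness_tcl}) also matches the paper's own remark that this route works but is ``somewhat overkill''.
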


\subsection{Examples of cyclic proofs in $\lpdltr$}
Before giving our main simulation result, let us first see some examples of proofs in $\lpdltr$, in particular addressing the `counterexample' from \Cref{ssec:counterexample}.

\begin{example}
	\label{ex:pdltr_1}
	We show a cyclic $\lpdltr$  proof of the $ \lpdltr $ sequent:
	\begin{equation*}
	\boxP{(aa\cup aba)^+} \nf p ,  \diaP{a^+ }p , \diaP{(ba^+)^+ }p ,  \diaP{a^+}\diaP{(ba^+)^+ }p
	\end{equation*}
	We use the following abbreviations: $\alpha =(aa\cup aba)^+$ and $\beta  = (ba^+)^+$. Moreover, we sometimes use rule $ \diaP{+} $, which is derivable from rules $\diaP{+}_0$ and $ \diaP{+}_1 $, keeping in mind that $ \lpdltr $ sequents are \emph{sets} of formulas:
	$$
	\vlinf{\diaP{+}}{}{\Gamma, \diaP{\alpha^+}A}{\Gamma, \diaP{\alpha}A, \diaP{\alpha} \diaP{\alpha^+}A}
	$$
	Similarly for rule $\vlor$. The progressing threads (one for each infinite branch) are highlighted in blue. $ \Gamma $ is sequent $  \boxP{aa \cup aba} \nf p,  \diaP{a^+ }p ,  \diaP{\beta}p ,  \diaP{a^+}\diaP{\beta }p $, derivable by means of a finite derivation (which we do not show). 
	
	\begin{adjustbox}{max width= \textwidth}
	$
	\vlderivation{
	\vliin{\boxP{+}}{\circ}{
		\blue{\boxP{\alpha} \nf p} ,  \diaP{a^+ }p ,  \diaP{\beta}p ,  \diaP{a^+}\diaP{\beta }p }{	\vlhy{\Gamma} }{
	\vliin{\boxP{\cup}}{}{ \blue{\boxP{aa \cup aba}\boxP{\alpha} \nf p} ,  \diaP{a^+ }p ,  \diaP{\beta}p ,  \diaP{a^+}\diaP{\beta }p }{
	\vlin{\boxP{\comp} }{}{ \blue{\boxP{aa }\boxP{\alpha} \nf p} ,  \diaP{a^+ }p ,  \diaP{\beta}p ,  \diaP{a^+}\diaP{\beta }p }{
	\vlin{\diaP{+}_1}{}{ \blue{\boxP{a}\boxP{a }\boxP{\alpha} \nf p} ,  \diaP{a^+ }p ,  \diaP{\beta}p ,  \diaP{a^+}\diaP{\beta }p  }{
	\vlin{k_a}{}{ \blue{\boxP{a}\boxP{a }\boxP{\alpha} \nf p },  \diaP{a}\diaP{a^+ }p ,  \diaP{\beta}p ,  \diaP{a}\diaP{a^+}\diaP{\beta }p  }{
	\vlin{\diaP{+}}{}{  \blue{\boxP{a }\boxP{\alpha} \nf p} ,  \diaP{a^+ }p , \diaP{a^+}\diaP{\beta }p  }{
	\vlin{\diaP{+}_1}{}{ \blue{ \boxP{a }\boxP{\alpha} \nf p} ,  \diaP{a^+ }p ,\diaP{a}\diaP{\beta }p, \diaP{a}\diaP{a^+}\diaP{\beta }p  }{
	\vlin{k_a}{}{ \blue{\boxP{a }\boxP{\alpha} \nf p} , \diaP{a} \diaP{a^+ }p ,\diaP{a}\diaP{\beta }p, \diaP{a}\diaP{a^+}\diaP{\beta }p }{
	\vlin{}{\circ}{ \blue{\boxP{\alpha} \nf p} , \diaP{a^+ }p ,\diaP{\beta }p, \diaP{a^+}\diaP{\beta }p }{\vlhy{\vdots}}
	}
	}
	}
	}
	}
	}
	}{
	\vlin{2\boxP{\comp}}{}{{\boxP{aba}\boxP{\alpha} \nf p} ,  \diaP{a^+ }p ,  \diaP{\beta}p ,  \diaP{a^+}\diaP{\beta }p }{
	\vlin{\diaP{+}_0}{}{{\boxP{a}\boxP{b}\boxP{a}\boxP{\alpha} \nf p} ,  \diaP{a^+ }p ,  \diaP{\beta}p ,  \diaP{a^+}\diaP{\beta }p}{
	\vlin{k_a}{}{ {\boxP{a}\boxP{b}\boxP{a}\boxP{\alpha} \nf p} ,  \diaP{a^+ }p ,  \diaP{\beta}p ,  \diaP{a}\diaP{\beta }p  }{
	\vlin{\diaP{+}}{}{ {\boxP{b}\boxP{a}\boxP{\alpha} \nf p} ,  \diaP{\beta }p  }{
	\vlin{2\diaP{\comp}}{}{ {\boxP{b}\boxP{a}\boxP{\alpha} \nf p} , \diaP{b a^+ }p, \diaP{b a^+}\diaP{\beta }p   }{
	\vlin{k_b}{}{  {\boxP{b}\boxP{a}\boxP{\alpha} \nf p} , \diaP{b}\diaP{ a^+ }p, \diaP{b}\diaP{ a^+}\diaP{\beta }p     }{
	\vlin{\diaP{+}}{}{   {\boxP{a}\boxP{\alpha} \nf p} , \diaP{ a^+ }p, \diaP{ a^+}\diaP{\beta }p     }{
	\vlin{\diaP{+}_1}{}{  {\boxP{a}\boxP{\alpha} \nf p} , \diaP{ a^+ }p,  \diaP{ a}\diaP{\beta }p,\diaP{a}\diaP{ a^+}\diaP{\beta }p     }{
	\vlin{k_a}{}{  {\boxP{a}\boxP{\alpha} \nf p} , \diaP{a}\diaP{ a^+ }p,  \diaP{ a}\diaP{\beta }p,\diaP{a}\diaP{ a^+}\diaP{\beta }p     }{
	\vlin{}{\circ}{  {\boxP{\alpha} \nf p} , \diaP{ a^+ }p,  \diaP{\beta }p,\diaP{ a^+}\diaP{\beta }p   }{\vlhy{\vdots}}
	}
	}
	}
	}
	}
	}
	}
	}
	}
	}
	}
	}
	$
	\end{adjustbox}
\end{example}

\begin{example}
We show a cyclic  $\lpdltr$ proof of formula \eqref{eq:counterexample-formula-pdltr}, which witnesses the incompleteness of $\ltcl$ without cut:
\begin{equation*}
\diaP{(aa\cup aba)^+} p \,  \vljm \, \diaP{a^+((ba^+)^+ \cup a)}p
\end{equation*}
We employ the same shorthands as in the previous example, i.e., $\alpha =(aa\cup aba)^+$ and $\beta  = (ba^+)^+$. The progressing threads are highlighted in blue; in the rightmost branch, the thread continues into the thread shown in \Cref{ex:pdltr_1}, where progress can be observed. 
$$
\vlderivation{
	\vlin{\vlor}{}{\blue{\boxP{\alpha} \nf{p} \vlor \diaP{a^+ (\beta \cup a)}p} }{
		\vlin{\diaP{\comp }}{}{\blue{\boxP{\alpha} \nf{p}} , \diaP{a^+ (\beta \cup a)}p}{
			\vliin{\boxP{+}}{\bullet}{ \blue{\boxP{\alpha} \nf{p}} , \diaP{a^+}\diaP{ \beta \cup a}p }{
				\vlhy{\Gamma'
				}
			}{
				\vliin{\boxP{\cup}}{}{\blue{\boxP{aa \cup aba}\boxP{\alpha} \nf{p}} , \diaP{a^+}\diaP{ \beta \cup a}p}{
					\vlin{\boxP{\comp}}{}{ \blue{\boxP{aa}\boxP{\alpha} \nf{p} }, \diaP{a^+}\diaP{ \beta \cup a}p }{
						\vlin{\diaP{+}_1}{}{\blue{\boxP{a}\boxP{a}\boxP{\alpha} \nf{p} }, \diaP{a^+}\diaP{ \beta \cup a}p}{
							\vlin{k_a}{}{ \blue{\boxP{a}\boxP{a}\boxP{\alpha} \nf{p} }, \diaP{a}\diaP{a^+}\diaP{ \beta \cup a}p }{
								\vlin{\diaP{+}_1}{}{\blue{\boxP{a}\boxP{\alpha} \nf{p}} , \diaP{a^+}\diaP{ \beta \cup a}p}{
									\vlin{k_a}{}{\blue{\boxP{a}\boxP{\alpha} \nf{p}} , \diaP{a}\diaP{a^+}\diaP{ \beta \cup a}p}{
										\vlin{}{\bullet}{\blue{ \boxP{\alpha} \nf{p}} , \diaP{a^+}\diaP{ \beta \cup a}p }{\vlhy{\vdots}}
									}
								}
							}
						}
					}
				}{ 
					\vlin{2\boxP{\comp}}{}{\blue{\boxP{aba }\boxP{\alpha} \nf{p}} , \diaP{a^+}\diaP{ \beta \cup a}p}{  
					\vlin{\diaP{+}_0}{}{ \blue{\boxP{a}\boxP{b}\boxP{a }\boxP{\alpha} \nf{p} }, \diaP{a^+}\diaP{ \beta \cup a}p }{
					\vlin{k_a}{}{\blue{ \boxP{a}\boxP{b}\boxP{a }\boxP{\alpha} \nf{p}} , \diaP{a}\diaP{ \beta \cup a}p  }{
					\vlin{\diaP{\cup}_0}{}{\blue{\boxP{b}\boxP{a }\boxP{\alpha} \nf{p} }, \diaP{ \beta \cup a}p}{
					\vlin{\diaP{+}}{}{ \blue{\boxP{b}\boxP{a }\boxP{\alpha} \nf{p} }, \diaP{ \beta }p }{
					\vlin{2\diaP{\comp}}{}{\blue{\boxP{b}\boxP{a }\boxP{\alpha} \nf{p} }, \diaP{ ba^+ }p,\diaP{ ba^+ }\diaP{ \beta }p  }{
					\vlin{k_b}{}{\blue{\boxP{b}\boxP{a }\boxP{\alpha} \nf{p}} , \diaP{ b}\diaP{a^+ }p,\diaP{ b}\diaP{a^+ }\diaP{ \beta }p }{
					\vlin{\diaP{+}}{}{\blue{ \boxP{a }\boxP{\alpha} \nf{p} }, \diaP{a^+ }p,\diaP{a^+ }\diaP{\beta }p }{
					\vlin{\diaP{+}_1}{}{\blue{ \boxP{a }\boxP{\alpha} \nf{p} }, \diaP{a^+ }p,\diaP{a}\diaP{\beta }p,\diaP{a}\diaP{a^+ }\diaP{\beta }p }{
					\vlin{k_b}{}{\blue{ \boxP{a }\boxP{\alpha} \nf{p}} , \diaP{a}\diaP{a^+ }p,\diaP{a}\diaP{\beta }p,\diaP{a}\diaP{a^+ }\diaP{\beta }p}{
					\vliq{}{}{ \blue{\boxP{\alpha} \nf{p} }, \diaP{a^+ }p,\diaP{\beta }p,\diaP{a^+ }\diaP{\beta }p  }{\vlhy{\text{\Cref{ex:pdltr_1}}}}
					}
					}
					}
											}
										}
									}
								}
							}
						}
					}
				}
			}
		}
	}
}
$$
Here follows the finite derivation of $\Gamma'$, i.e., sequent $\boxP{aa \cup aba} \nf{p} , \diaP{a^+}\diaP{ \beta \cup a}p$. 
$$
\vlderivation{
\vliin{\boxP{\cup}}{}{ \boxP{aa \cup aba} \nf{p} , \diaP{a^+}\diaP{ \beta \cup a}p }{
\vlin{\boxP{\comp}}{}{ \boxP{aa } \nf{p} , \diaP{a^+}\diaP{ \beta \cup a}p }{
\vlin{\diaP{+}_0}{}{\boxP{a}\boxP{a } \nf{p} , \diaP{a^+}\diaP{ \beta \cup a}p}{
\vlin{k_a}{}{ \boxP{a}\boxP{a } \nf{p} , \diaP{a}\diaP{ \beta \cup a}p  }{
\vlin{\diaP{\cup}}{}{ \boxP{a } \nf{p} , \diaP{ \beta \cup a}p  }{
\vlin{k_a}{}{ \boxP{a } \nf{p} , \diaP{  a}p}{
\vlin{\init}{}{  \nf{p} , p  }{\vlhy{}}
}
}
}
}
}
}{
\vlin{2\boxP{\comp}}{}{ \boxP{ aba} \nf{p} , \diaP{a^+}\diaP{ \beta \cup a}p  }{
\vlin{\diaP{+}_0}{}{ \boxP{ a}\boxP{b}\boxP{a} \nf{p} , \diaP{a^+}\diaP{ \beta \cup a}p   }{
\vlin{k_a}{}{ \boxP{ a}\boxP{b}\boxP{a} \nf{p} , \diaP{a}\diaP{ \beta \cup a}p    }{
\vlin{\diaP{\cup}}{}{ \boxP{b}\boxP{a} \nf{p} , \diaP{\beta \cup a}p    }{
\vlin{\diaP{+}_0}{}{  \boxP{b}\boxP{a} \nf{p} , \diaP{ \beta }p     }{
\vlin{\diaP{\comp}}{}{  \boxP{b}\boxP{a} \nf{p} , \diaP{ b a^+ }p    }{
\vlin{k_b}{}{  \boxP{b}\boxP{a} \nf{p} , \diaP{ b}\diaP{ a^+ }p }{
\vlin{\diaP{+}_0}{}{   \boxP{a} \nf{p} , \diaP{ a^+ }p}{
\vlin{k_a}{}{  \boxP{a} \nf{p} , \diaP{ a }p}{
\vlin{\init}{}{\nf p, p}{\vlhy{}}
}
}
}
}
}
}
}
}
}
}
}
$$

\end{example}

\subsection{A `local' simulation of $\lpdltr$ by $\htcl$}
\label{ssec:pdltr_to_htcl}

In this subsection we show that $\lpdltr$-preproofs can be stepwise transformed into $\htcl$-proofs, with respect to the standard translation. 
In order to produce our local simulation, we need a more refined version of the standard translation that incorporates the structural elements of hypersequents.

\marianna{commented the 'simplified' version of HT. Maybe put a footnote in the IJCAR paper observing the difference between the two?}
	

We start by refining the definition of the hypersequent translation of a $\PDLtr$ formula from \cref{ssec:pdltr_to_htcl}, including the explicit definition of the cedent translation, $\mathsf{CT}$.  
In the definitions below we are often using arbitrarily chosen variables (e.g. ``fresh variables'') and constant symbols. We assume these choices do not break our assumptions on variables and constants occurring in hypersequents.



\begin{definition}[Hypersequent translation for formulas]
\label{def:hyp_transl}
	For $ x $ a variable and $ A $  a $   \PDLtr $ formula, we define the \emph{hypersequent translation} of $ A $, denoted by $ \ct{A}{\cons{c}} $, by induction on the complexity of $A$ as follows, for $\circ \in \{\vlan, \vlor\}$, $d$ fresh constant symbol and $y$ fresh variable: 
\begin{eqnarray*}
A = p \text{ or } A = B \circ C &   \ct{A}{x} & \defsym \quad  \str{\st{x}{A}}{\varnothing}\\
 A = \boxP{\alpha}B &  \ct{A}{x} & \defsym \quad  \str{\nf{\st{x,\cons d}{\alpha}}}{\varnothing}, \ct{B}{\cons{d}}\\
A = \diaP{\alpha}B &   \ct{A}{x} & \defsym \quad  \str{\st{x,y}{\alpha},\diat{B}{y}}{\bvvar{B},y}
\end{eqnarray*}
where the \emph{cedent translation} $ \diat{B}{t} $  and $\bvvar{B}$ are  defined as follows:
\begin{itemize}
    \item 
    \label{prop:diat_1}
    if $B = p$ or $B = F \circ G $ or $B = \boxP{\beta}F$, then:
    \begin{eqnarray*}
    \diat{B}{t} \, \defsym \, \st{y}{B} & \text{and} \quad \bvvar{B} = \varnothing 
    \end{eqnarray*}
    \item 
    \label{prop:diat_2}
    if $B = \diaP{\beta}C$ then, for $z$ fresh variable:
    $$
    \diat{B}{y} \, \defsym \, \st{y, z}{\beta}, \diat{C}{z} \quad \text{and} \quad \bvvar{B}  = \bvvar{C}, z.
    $$
\end{itemize}
\end{definition}


\begin{proposition}
\label{prop:ht_st_tcl}
    For $ \sq $ a hypersequent, and $ A$ a $\PDLtr$ formula:
    \begin{enumerate}
    	\item There is a finite cut-free $\htcl$-derivation from $\sq, \ct{A}{\cons{c}}$ to $ \sq, \str{\diat{A}{c}}{\bvvar{A}}$; and,
    	\item There is a finite cut-free derivation from $\sq, \str{\diat{A}{c}}{\bvvar{A}}$ to $ \sq, \str{\st{c}{A}}{\varnothing}$.
    \end{enumerate} 
\end{proposition}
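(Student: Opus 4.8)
The plan is to prove both statements by induction on the structure of the $\PDLtr$ formula $A$, establishing them simultaneously, since the box case of the first part relies on both halves applied to the immediate subformula. Together the two parts chain to give a derivation relating $\ct{A}{\cons{c}}$ and $\str{\st{c}{A}}{\varnothing}$, which is what the subsequent simulation needs; but each part is cleaner to state and prove against the intermediate cedent translation $\str{\diat{A}{c}}{\bvvar{A}}$. All derivations use only the structural, propositional, quantifier and $\reset$ rules of $\htcl$, so cut-freeness is automatic.

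I would begin with the second statement, which is the more routine `prenexing'. When $A$ is atomic, a conjunction, a disjunction, or a box, we have $\diat{A}{c} = \st{c}{A}$ and $\bvvar{A} = \varnothing$ by definition, so the two hypersequents coincide and there is nothing to do. The only real case is $A = \diaP{\beta}C$, where $\st{c}{A} = \exists y(\st{c,y}{\beta} \vlan \st{y}{C})$ while $\str{\diat{A}{c}}{\bvvar{A}} = \str{\st{c,z}{\beta}, \diat{C}{z}}{\bvvar{C},z}$. Reading bottom-up from the conclusion $\sq, \str{\st{c}{A}}{\varnothing}$, I would apply the $\exists$ rule (moving the existential witness into the annotation) and then the $\vlan$ rule (splitting the conjunction inside the cedent), reaching $\sq, \str{\st{c,z}{\beta}, \st{z}{C}}{z}$. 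It then remains to turn $\st{z}{C}$ into $\diat{C}{z}$, which is exactly the induction hypothesis for the subformula $C$. To make this recursion go through cleanly I would strengthen the second part to allow an arbitrary side-cedent $\str{\Gamma}{\bv}$ surrounding the translated formula, so that the hypothesis applies `underneath' the carried formula $\st{c,z}{\beta}$ and the ambient annotation $z$.

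For the first statement the propositional and diamond cases are again immediate: when $A$ is atomic, a conjunction or a disjunction the two hypersequents are literally equal, and when $A = \diaP{\alpha}B$ the hypersequent translation $\str{\st{c,y}{\alpha}, \diat{B}{y}}{\bvvar{B},y}$ and the cedent translation $\str{\st{c,z}{\alpha}, \diat{B}{z}}{\bvvar{B},z}$ agree up to renaming the variable introduced for the outer diamond. The substantial case is $A = \boxP{\alpha}B$, where $\ct{A}{\cons{c}} = \str{\nf{\st{c,\cons d}{\alpha}}}{\varnothing}, \ct{B}{\cons d}$ is a genuine multi-cedent hypersequent while $\str{\diat{A}{c}}{\bvvar{A}} = \str{\st{c}{\boxP{\alpha}B}}{\varnothing} = \str{\forall y(\nf{\st{c,y}{\alpha}} \vlor \st{y}{B})}{\varnothing}$ is a single cedent. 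Here I would first apply the $\forall$ rule to Herbrandise the universal quantifier, introducing the fresh constant $\cons d$ (the `successor world'), and then use the $\reset$ rule, whose role is precisely to peel the body of the modality off into its own cedent, exposing the separate cedent structure of the hypersequent translation. The body is reconciled by invoking both induction hypotheses on $B$ (the first to expand $\ct{B}{\cons d}$, the second to recover $\st{\cons d}{B}$), carried out with the cedent $\str{\nf{\st{c,\cons d}{\alpha}}}{\varnothing}$ as side context.

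I expect the box case of the first statement to be the main obstacle. It is exactly the point where the standard translation, read modally, forces the deep inference that motivated the system: the single Herbrandised disjunction $\nf{\st{c,\cons d}{\alpha}} \vlor \st{\cons d}{B}$ must be realigned with the two separate cedents of $\ct{\boxP{\alpha}B}{\cons c}$, and this realignment is what the $\reset$ rule and the annotated-cedent structure are designed to accommodate. The delicate points are ensuring that $\reset$'s closedness side-condition is met (so that $\cons d$ has been introduced and the relevant formulas are closed), keeping the freshly introduced constants and annotation variables disjoint from those already occurring in $\sq$, and arranging the induction so that the hypotheses for $B$ apply with exactly the right side context. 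The remaining cases, together with the whole of the second statement, are then direct.
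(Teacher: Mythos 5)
Your decomposition is essentially the paper's: a simultaneous structural induction in which part 1 is trivial except for boxes (for atoms, conjunctions and disjunctions the three translations literally coincide, and for diamonds $\ct{\diaP{\alpha}B}{\cons{c}}$ and $\str{\diat{\diaP{\alpha}B}{c}}{\bvvar{\diaP{\alpha}B}}$ agree up to renaming), part 2 is trivial except for diamonds, and the box case is closed by a $\forall$ step, a cedent-splitting step, and both induction hypotheses on $B$ under the side cedent $\str{\nf{\st{\cons{c},\cons{d}}{\alpha}}}{\varnothing}$. Your treatment of part 2's diamond case differs only cosmetically from the paper's: you recurse one diamond at a time under a strengthened induction hypothesis (side formulas and an ambient annotation inside the cedent), whereas the paper processes a maximal block $\diaP{\alpha_1}\cdots\diaP{\alpha_n}B$ in one go by $n$ interleaved $\exists,\vlan$ steps; both work, since the $\exists$ and $\vlan$ rules already carry side formulas $\Gamma$ and annotations $\bv$.

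The genuine gap is the step in the box case that you assign to $\reset$. After the bottom-up $\forall$ step you sit at $\sq, \str{\nf{\st{\cons{c},\cons{d}}{\alpha}} \vlor \st{\cons{d}}{B}}{\varnothing}$ and must reach $\sq, \str{\nf{\st{\cons{c},\cons{d}}{\alpha}}}{\varnothing}, \str{\st{\cons{d}}{B}}{\varnothing}$. But $\reset$ does not ``peel the body of the modality off into its own cedent'': its conclusion has the shape $\sq', \str{\Gamma, A}{\bv}, \str{\nf{A}}{\varnothing}$, requiring a \emph{complementary} pair $A$, $\nf{A}$ spread over two cedents, and its premiss $\sq', \str{\Gamma}{\bv}$ \emph{deletes} both; it is an identity-style resolution rule, and no instance of it applies to the hypersequent above, since $\nf{\st{\cons{c},\cons{d}}{\alpha}}$ and $\st{\cons{d}}{B}$ are not complements of each other. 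The step that actually does this work in the paper's derivation is the $\vlor$ rule applied \emph{twice} (in the direction from $\ct{\boxP{\alpha}B}{\cons{c}}$ towards $\str{\st{\cons{c}}{\boxP{\alpha}B}}{\varnothing}$): once with principal cedent $\str{\nf{\st{\cons{c},\cons{d}}{\alpha}}}{\varnothing}$, choosing the left disjunct, and once with principal cedent $\str{\st{\cons{d}}{B}}{\varnothing}$, choosing the right disjunct; both applications produce the identical cedent $\str{\nf{\st{\cons{c},\cons{d}}{\alpha}} \vlor \st{\cons{d}}{B}}{\varnothing}$, and the two copies collapse into one because hypersequents are \emph{sets} of cedents. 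Replacing your $\reset$ step by this double-$\vlor$ step (and dropping the attendant worries about $\reset$'s closedness side condition) repairs the argument; everything else you propose then matches the paper's proof.
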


\begin{proof}
    By induction on the complexity of $A$. 
    If $A $ is atomic, a conjunction or a disjunction, then  $\ct{A}{\cons{c}} = \str{\diat{A}{c}}{\bvvar{A}} = \str{\st{\cons{c}}{A}}{\varnothing} 	 $. 
    The remaining cases are shown below. If $A= \boxP{\alpha} B$, then $\str{\diat{A}{c}}{\bvvar{A}} = \str{\st{c}{A}}{\varnothing}$. 
    If $A = \diaP{\alpha_1}\dots \diaP{\alpha_n}B$, for $ 1 \leq n$,  
    then $  \str{\diat{A}{y}}{\bvvar{A}} = \ct{A}{c}$, and $ \bvvar{A} = \{y_1, \dots, y_n\}$, for $ y_1, \dots, y_n $ variables that do not occur in $ B $. The double line in the derivation below denotes $ n -1$ occurrences of rules $ \exists, \vlan $.    
    
    \ 

   \begin{adjustbox}{max width = \textwidth}
    \begin{tabular}{c @{\hspace{0.3cm}} c}
    $    \vlderivation{
    \vlid{}{}{
   	    \sq,\str{\st{\cons{c}}{\boxP{\alpha}B}}{\varnothing}
   	}{
    \vlid{}{}{
	\sq,\str{\diat{\boxP{\alpha}B}{\cons{c}}}{\varnothing}
	}{
    \vlin{\forall}{}{\sq,\str{ \forall y (  \nf{\st{\cons{c}, y }{\alpha}} \vlor \st{y}{B} )  }{\varnothing}}{
    \vlin{\vlor}{}{\sq, \str{ \nf{\st{\cons{c}, \cons{d} }{\alpha}} \vlor \st{\cons{d}}{B}}{\varnothing} }{
    \vlid{}{}{\sq, \str{ \nf{\st{\cons{c}, \cons{d} }{\alpha}}}{\varnothing}, \str{\st{\cons{d}}{B}}{\varnothing}   }{
    \vlhy{\sq,\ct{\boxP{\alpha} B}{\cons{c}} }
    }
    }
    }
    }
    }
	}
    $
    &
    $
     \vlderivation{
    	\vlid{}{}{\sq,\str{\st{\cons{c}}{\diaP{\alpha_1}\dots \diaP{\alpha_n}B}}{\varnothing}}{
    		\vlin{\exists}{}{\sq,\str{ \exists y_1 (  \st{\cons{c}, y_1 }{\alpha_1} \vlan \st{y_1}{\diaP{\alpha_2}\dots \diaP{\alpha_n}B} )  }{y_1}}{
    			\vlin{\vlan}{}{\sq, \str{ \st{\cons{c}, y_1 }{\alpha} \vlan \st{y_1}{\diaP{\alpha_2}\dots \diaP{\alpha_n}B}}{y_1} }{
    				\vliq{\exists, \vlan}{}{ \sq,\str{ \st{\cons{c}, y_1 }{\alpha} , \st{y_1}{\diaP{\alpha_2}\dots \diaP{\alpha_n}B}}{y_1}     }{
    				\vlid{}{}{ \sq,\str{ \st{\cons{c}, y_1 }{\alpha_1}, \dots, \st{y_{n-1},y_n}{\alpha_n} , \st{y_n}{B}}{y_1, \dots, y_n} }{
    				\vlid{}{}{ \sq,\str{ \diat{\diaP{\alpha_1}\dots \diaP{\alpha_n}B}{y_n}}{y_1, \dots,y_n}}{
    					\vlhy{\sq,\ct{\diaP{\alpha_1}\dots \diaP{\alpha_n}B}{\cons{c}} }
    				}
    			}
    		}
    	}
    }
}
}
    $\\
    \end{tabular}
\end{adjustbox}

\end{proof}


%



\begin{definition}
    [$\HT$-translation]
    \label{dfn:ht-translation}
	Let $ \mathcal{D} $ be a $ \PDLtr $ preproof.
	We shall define a $ \htcl $ preproof $ \htr c \der $
	of the hypersequent $  \ct{A}{\cons{c}} $ by a local translation of inference steps.
	Formally speaking, the well-definedness of $\htr c \der$ is guaranteed by coinduction: each rule of $\der$ is translated into a (nonempty) derivation.
	
			\[
	\vlinf{\id}{}{\Gamma,p, \nf{p}}{} 
	\quad \rightsquigarrow \quad 
	\vlderivation{
		\vlid{}{}{\strg, \ct{p}{\cons{c}}, \ct{\nf{p}}{\cons{c}} }{
			\vlin{\wk}{}{\strg, \str{p(\cons{c})}{\varnothing}, \str{\nf{p}(\cons{c})}{\varnothing}}{
				\vlin{\id}{}{ \str{p(\cons{c})}{\varnothing}, \str{\nf{p}(\cons{c})}{\varnothing}}{
					\vlin{\init}{}{\str{\, }{\varnothing}}{\vlhy{}
					}
				}	
			}
		}
	}
	\]

	\todo{add weakening}
	
	
		\[
	\vlderivation{
		\vlin{\vlor_i}{i \in\{0, 1\}}{\Gamma, A_0 \vlor A_1}{
			\vlhy{\Gamma, A_i	
			}
		}
	} 
	\quad \rightsquigarrow \quad 
	\vlderivation{
		\vlid{}{}{\strg, \ct{A_0 \vlor A_1}{\cons{c}} }{
		\vlin{\vlor}{}{\strg, \str{ \st{\cons{c}}{A_0} \vlor \st{\cons{c}}{A_1} }{\varnothing}
			}{
		\vliq{\doubleline}{}{ \strg, \str{ \st{\cons{c}}{A_i}}{\varnothing} }{
		\vlhy{	 \strg, \ct{A_i}{\cons{c}}
		}
		}
				}
			}
		}
	\]

	
	\[
	\vlderivation{
		\vliin{\vlan}{}{\Gamma,  A \vlan B}{
			\vlhy{	\Gamma, A	}
		}{
			\vlhy{	\Gamma,B
			}
		}
	} 
	\quad \rightsquigarrow \quad 
	\vlderivation{
		\vlid{}{}{\strg,  \ct{A \vlan B}{\cons{c}} }{
			\vlin{\vlan}{}{\strg,  \str{ \st{\cons{c}}{A} \vlan \st{\cons{c}}{B} }{\varnothing} }{
				\vliin{\cup}{}{\strg,  \str{ \st{\cons{c}}{A} , \st{\cons{c}}{B} }{\varnothing} }{
					\vliq{\doubleline}{}{\strg ,\str{ \st{\cons{c}}{A} }{\varnothing}}{
					\vlhy{	\strg, \ct{A}{\cons{c}} 
					}
					}
				}{
					\vliq{\doubleline}{}{  \strg , \str{ \st{\cons{c}}{B} }{\varnothing}}{
					\vlhy{\strg, \ct{B}{\cons{c}} 
					}
					}
				}
			}
		}
	}
	\]


    $
    \vlinf{\krule a}{}{ \diaP{a} B_1, \dots, \diaP{a} B_k, \boxP a A}{B_1, \dots , B_k, A} \, \leadsto \, 
    $
    
    \vspace{0.2cm}

    \begin{adjustbox}{max width = \textwidth}
	$
	\vlderivation{ 
		\vlid{=}{}{ \ct{\diaP{a}B_1}{ \cons{c}},\dots, \ct{\diaP{a}B_k}{ \cons{c}}, \ct{\boxP{a}A}{ \cons{c}}  }{
		\vliq{\instrule}{}{ \str{\st{ \cons{c},y}{a} , \diat{B_1}{y}}{\bvvar{B_1}, y}, \dots ,\str{\st{ \cons{c},y}{a} , \diat{B_k}{y}}{\bvvar{B_k}, y}, \str{ \nf{ \st{ \cons{c}, \cons{d}}{a}}  }{\varnothing}, \ct{A}{\cons{d}}
		}
		{
		\vliq{\cup}{}{
		 \str{ \diat{B_1}{ \cons{d}} }{\bvvar{B_1}}, \dots ,\str{\st{ \cons{c}, \cons{d}}{a} , \diat{B_k}{ \cons{d}}}{\bvvar{B_k}},\str{ \nf{ \st{ \cons{c}, \cons{d}}{a}}  }{\varnothing}, 
		 \ct{A}{\cons{d}}
		}{
		\vlin{\wk}{}{  \str{ \diat{B_1}{ \cons{d}}}{\bvvar{B_1}}, \dots ,\str{\diat{B_k}{ \cons{d}}}{\bvvar{B_k}},\str{ \nf{ \st{ \cons{c}, \cons{d}}{a}}  }{\varnothing}, 	 \ct{A}{\cons{d}}
		}{
		\vliq{\vlor, \forall}{}{ 	\str{ \diat{B_1}{ \cons{d}}}{\bvvar{B_1}}, \dots ,\str{\diat{B_k}{ \cons{d}}}{\bvvar{B_k}},\ct{A}{\cons{d}}
		}{
		\vlin{[d/c]}{}{\ct{B_1}{\cons{d}}, \dots, \ct{B_k}{\cons{d}}, \ct{A}{\cons{d}}  
		}{
\vlhy{\ct{B_1}{\cons{c}}, \dots, \ct{B_k}{\cons{c}}, \ct{A}{\cons{c}} }}
 		}
		}
		}
		}
		}
}
	$
	\end{adjustbox}
	
	\ 
	
	where (omitted) left-premisses of $\cup$ steps are simply proved by $\wk,\id,\init$.

	\vspace{0.5cm}
	
	$
	\vlderivation{
		\vliin{\boxP{\union}}{}{ \Gamma,  \boxP{\alpha \cup \beta}A}{
			\vlhy{	\Gamma, \boxP{\alpha}A
			}
		}{
			\vlhy{	\Gamma, \boxP{\beta}A
			}
		}
	} 
	\quad \rightsquigarrow \quad 
	$
	\[
	\vlderivation{
		\vlid{}{}{\strg,  \ct{\boxP{\alpha \cup \beta}A}{\cons{c}} }{
		\vlin{\vlan}{}{
		\strg, \str{\nf{\st{\cons{c},\cons{d}}{\alpha}} \vlan \nf{\st{\cons{c},\cons{d}}{\beta}}}{\varnothing}, 
		\ct{A}{\cons{d}}
		}{
		\vliin{\cup}{}{
		\strg, \str{\nf{\st{\cons{c},\cons{d}}{\alpha}} , \nf{\st{\cons{c},\cons{d}}{\beta}}}{\varnothing},  
		\ct{A}{\cons{d}}
		}{
		\vlid{}{}{\strg, \str{\nf{\st{\cons{c},\cons{d}}{\alpha}} }{\varnothing},  
		\ct{A}{\cons{d}}
		}{
		\vlhy{	\strg,  \ct{\boxP{\alpha}A}{\cons{c}}
		}
		}
		}{
		\vlid{}{}{\strg, \str{ \nf{\st{\cons{c},\cons{d}}{\beta}}}{\varnothing},  
			\ct{A}{\cons{d}}
		}{
		\vlhy{	\strg, \ct{\boxP{\beta}A}{\cons{c}}
		}
		}
		}
		}
		}
	}
	\]


\[
\vlinf{\diaP{\cup}_i}{i \in \{0,1\}}{\Gamma, \diaP{\alpha_0 \cup \alpha_1}A }{\Gamma, \diaP{\alpha_i}A }
\, 
\leadsto 
\, 
\vlderivation{
	\vlid{=}{}{\strg, \ct{ \diaP{\alpha_0 \cup \alpha_1}A }{\cons{c}} }{
	\vlin{\vlor}{}{	\strg, \str{ \st{\cons{c}, y }{ \alpha_0} \vlor \st{\cons{c}, y}{\alpha_1} , \diat{A}{y}  }{\bvvar{A},y}
	}{
	\vlid{=}{}{\strg, \str{ \st{\cons{c}, y }{ \alpha_i},  \diat{A}{y}  }{\bvvar{B},y}
	}{
	\vlhy{	\strg, \ct{\diaP{\alpha_i}A}{\cons{c}}}
	}
	}
	}
}
\]

%
%
	
	
	\[
	\vlderivation{
		\vlin{\boxP{\comp}}{}{ \Gamma, \boxP{\alpha \comp \beta}A}{
		\vlhy{		\Gamma, \boxP{\alpha}\boxP{\beta}A
		}
		}
	} 
	\quad \rightsquigarrow \quad 
	\vlderivation{
		\vlid{}{}{\strg, \ct{\boxP{\alpha \comp \beta}A}{\cons{c}} }{
		\vlin{\forall}{}{ \strg, \str{ \forall z (\nf{\st{\cons{c},z}{\alpha}} \vlor \nf{\st{z,\cons{d}}{\beta}} ) }{\varnothing}, 
		\ct{A}{\cons{d}}
		}{
		\vlin{\vlor}{}{ \strg, \str{ \nf{\st{\cons{c},\cons{e}}{\alpha}} \vlor \nf{\st{\cons{e},\cons{d}}{\beta}}  }{\varnothing}, 
		\ct{A}{\cons{d}}
		}{
		\vlid{}{}{ \strg, \str{ \nf{\st{\cons{c},\cons{e}}{\alpha}} }{\varnothing},  \str{\nf{\st{\cons{e},\cons{d}}{\beta}}  }{\varnothing}, 
		\ct{A}{\cons{d}}
		}{
		\vlhy{	 \strg, \ct{\boxP{\alpha}\boxP{\beta}A}{\cons{c}}
		}
		}
		}
		}
		}
	}
	\]
	
	\[
	\vlinf{\diaP{;}}{}{\Gamma, \diaP{\alpha ; \beta}A}{\Gamma, \diaP{\alpha}\diaP{\beta}A}
	\, 
	\leadsto
	\,
	\vlderivation{
		\vlid{=}{}{\strg,\ct{\diaP{\alpha \comp \beta}A}{\cons{c}} }{
	 	\vlin{\exists}{}{ 	\strg, \str{ \exists z (\st{\cons{c}, z}{\alpha} \vlan \st{z, y}{\alpha} ) , \diat{A}{y}  }{\bvvar{A}, y} 
	 	 }{
 	 	\vlin{\vlan}{}{ 	\strg, \str{ \st{\cons{c}, z}{\alpha} \vlan \st{z, y}{\alpha}  , \diat{A}{y}  }{\bvvar{A}, y,z}  }{
 	 	\vlid{=}{}{	\strg, \str{ \st{\cons{c}, z}{\alpha} , \st{z, y}{\alpha}  , \diat{A}{y}  }{\bvvar{A}, y,z} }{
 	 		\vlhy{\strg, \ct{\diaP{\alpha}\diaP{\beta}A}{\cons{c}}}   
  		}
  		}
  		}
  		}
 		}
	\]

	\vspace{0.5cm}
	
    $
    	\vlderivation{
		\vlin{\diaP{+}_0}{}{\Gamma, \diaP{\plus{\alpha}}A  }{
		\vlhy{	\Gamma, \diaP{\alpha}A
		}
		}
	} 
	\quad \rightsquigarrow \quad
	$
	\[
	\vlderivation{
		\vlid{}{}{\strg, \ct{\diaP{\plus{\alpha}}A }{\cons{c}} }{
		\vlin{\mathsf{tc}}{}{	\strg, \str{  \tc{\mathsf{ST}(\alpha)}{\cons{c}}{y} , \diat{A}{y} }{\bvvar{A}, y} }{
		\vlin{\wk}{}{\strg, \str{ \st{\cons{c}, y}{\alpha}, \diat{A}{y} }{\bvvar{A}, y}, \str{ \st{\cons{c},z}{\alpha},  \tc{\mathsf{ST}(\alpha)}{z}{y} , \diat{A}{y} }{\bvvar{A}, y,z}
		}{
		\vlid{}{}{
		\strg, \str{ \st{\cons{c}, y}{\alpha}, \diat{A}{y} }{\bvvar{A}, y}
		}{
		\vlhy{	\strg, \ct{\diaP{\alpha}A}{\cons{c}} 
		}
}	
	}
		}
		}
		}
	\]

	\vspace{0.5cm}
	
	$
	\vlderivation{
		\vlin{\diaP{+}_1}{}{\Gamma, \diaP{\plus{\alpha}}A  }{
			\vlhy{	\Gamma,  \diaP{\alpha}\diaP{\plus{\alpha}}A 
			}
		}
	} 
	\quad \rightsquigarrow \quad
	$
	\[
	\vlderivation{
		\vlid{}{}{\strg, \ct{\diaP{\plus{\alpha}}A }{\cons{c}} }{
		\vlin{\mathsf{tc}}{}{	\strg, \str{  \tc{\mathsf{ST}(\alpha)}{\cons{c}}{y} , \diat{A}{y} }{\bvvar{A}, y} }{
		\vlin{\wk}{}{	\strg, \str{ \st{\cons{c}, y}{\alpha}, \diat{A}{y} }{\bvvar{A}, y}, \str{ \st{\cons{c},z}{\alpha},  \tc{\mathsf{ST}(\alpha)}{z}{y} , \diat{A}{y} }{\bvvar{A}, y,z}}{ 
		\vlid{}{}{
		\strg,  \str{ \st{\cons{c},z}{\alpha},  \tc{\mathsf{ST}(\alpha)}{z}{y} , \diat{A}{y} }{\bvvar{A}, y,z}
		}{
		\vlhy{	\strg,  \ct{\diaP{\alpha}\diaP{\plus{\alpha}}A }{\cons{c}}
		}
		}
		}
		}
		}
		}
	\]
	
	\vspace{0.5cm}
	
	$
	\vliinf{\boxP{+}}{}{\Gamma, \boxP{\alpha^+ }A}{\Gamma, \boxP{\alpha}A}{\Gamma, \boxP{\alpha}\boxP{\alpha^+}A} 
	\, \leadsto \, 
	$
			\[
	\vlderivation{
		\vlid{=}{}{ \strg, {\ct{\boxP{\plus{\alpha}}A }{\cons{c}}} }{
		\vlin{\coTC}{}{ \strg, \blue{\str{ \cotc{\nf{\mathsf{ST}(\alpha)}}{\cons{c}}{\cons{d}} }{\varnothing}}, 
		\ct{A}{\cons{d}}
		}{
		\vliin{\cup}{}{ \strg, \str{\nf{\st{\cons{c}, \cons{d}}{\alpha}}, \nf{\st{\cons{c},\cons{f}}{\alpha}} }{\varnothing}, \blue{\str{\nf{\st{\cons{c}, \cons{d}}{\alpha}}, \cotc{\nf{\mathsf{ST}(\alpha)}}{\cons{f}}{\cons{d}} }{\varnothing}}, 
		\ct{A}{\cons{d}}
		}{\vlhy{\mathcal E}	}{
		\vliin{\cup}{}{\strg,  \str{ \nf{\st{\cons{c}, \cons{f}}{\alpha}} }{\varnothing}, \blue{\str{\nf{\st{\cons{c}, \cons{d}}{\alpha}}, \cotc{\nf{\mathsf{ST}(\alpha)}}{\cons{f}}{\cons{d}} }{\varnothing}}, 
		\ct{A}{\cons{d}}
		}{
		\vlhy{\mathcal E'}
		}{
		\vlid{=}{}{\strg,\str{ \nf{\st{\cons{c}, \cons{f}}{\alpha}} }{\varnothing}, \blue{\str{ \cotc{\nf{\mathsf{ST}(\alpha)}}{\cons{f}}{\cons{d}}}{\varnothing}}, 
		\ct{A}{\cons{d}}
		}{
		\vlhy{ 
		\strg, {\ct{\boxP{\alpha}\boxP{\plus{\alpha}}A}{\cons{c}}} }}
		}
		}
		}
		}
		}
	\]
	where $\mathcal E,\mathcal E'$ derive 
	$\strg,  \ct{\boxP{\alpha}A}{\cons{c}} $ using $\wk$-steps.
	
\marianna{they're commented right below if we want to add them in}


\vspace{0.5cm}

By applying the above translation to each rule of a $ \PDLtr $ cyclic proof $ \mathcal{D} $ of $ A $, we obtain a preproof $ \tder $ of $ \ct{A}{\cons{c}} $. The last step, from $\ct{A}{\cons{c}}$ to $\st{\cons{c}}{A}$, follows from \Cref{prop:ht_st_tcl}.

\end{definition}

\begin{remark}
[Deeper inference]
Observe that
$\htcl$ can also simulate `deeper' program rules than are available in $\lpdltr$.
E.g.\ a rule $\scriptsize \vlinf{}{}{\Gamma, \diaP\alpha \diaP{\beta_0 \cup \beta_1}A }{\Gamma, \diaP\alpha \diaP{\beta_i}A }$ may be simulated too (similarly for $\boxP{\, }$).
Thus $\diaP{a^+}\diaP b p \vljm \diaP{a^+}\diaP{b\cup c }p$ admits a \emph{finite} proof in $\htcl$ (under $\ST$), rather than a necessarily infinite (but cyclic) one in $\lpdltr$.  
\end{remark}

\subsection{Justifying regularity and progress}
\label{ssec:progress_reg}
\begin{proposition}
\label{prop:ht-trans-pres-reg}
    If $\der $ is regular, then so is $\htr c \der$.
\end{proposition}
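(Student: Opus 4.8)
The plan is to exploit the fact that the $\HT$-translation of Definition~\ref{dfn:ht-translation} is entirely \emph{local}: each inference step of $\der$ is replaced by a finite $\htcl$ derivation --- a ``tile'' --- whose shape depends only on finitely much local data, namely the rule name, the principal formula, and (in the $\krule a$ and $\boxP{+}$ cases) the finitely many side diamond-formulas and the formula $A$. The context $\Gamma$ enters each tile only passively, as the parametric side-hypersequent $\strg$. I would therefore regard a regular $\der$, in its standard guise, as a finite directed graph (a finite tree together with back-edges carrying substitutions), and show that blowing up each node into its tile again yields a finite graph, hence a regular $\htcl$ preproof in the sense of Definition~\ref{def:pre-proof_tcl}.

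First I would bound the tiles. Since $\der$ is regular it has, up to substitution, only finitely many distinct subproofs, hence finitely many distinct conclusion sequents; in particular the sequents occurring in $\der$ have uniformly bounded size. Consequently the local data feeding the tiles ranges over a finite set (up to renaming): there are finitely many rule names, the number of side diamond-formulas in any $\krule a$ step is bounded, and each cedent translation $\diat{B_i}{\cdot}$ appearing in a tile has size bounded by the modal depth of the corresponding subformula. Thus every tile is a finite derivation of uniformly bounded size, and $\htr c \der$ is assembled from only finitely many distinct tiles up to renaming.

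Next I would account for the cyclic structure. Each back-edge of $\der$, from a bud with conclusion $\Gamma$ to a companion with conclusion $\Gamma'$ where $\Gamma = \sigma(\Gamma')$, is translated into a back-edge of $\htr c \der$ linking the top of the bud's tile to the bottom of the companion's tile. The only mismatch to reconcile is between the ``current world constant'' used at the bud and the one used at the companion; this, together with $\sigma$, is realised by a single substitution step (the rule $\sigma$ of Fig.~\ref{fig:rules:htcl}), which the translation is free to insert. Hence $\htr c \der$ carries exactly the graph structure of $\der$ with each node replaced by a finite tile, so it has finitely many distinct subproofs and is regular.

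The main obstacle is the bookkeeping of the freshly introduced constants and eigen-symbols ($\cons d, \cons e, \cons f, y, z, \dots$): a single subproof of $\der$ that recurs at several places is translated over \emph{different} fresh symbols, so the required identifications only hold \emph{up to substitution}. The crux is therefore to verify that the translation commutes with renaming --- that substitution-equivalent $\der$-subproofs have substitution-equivalent $\htcl$-translations --- and that the $\sigma$-steps demanded on the back-edges are genuine $\htcl$ inferences of the correct type. Formally, I would make this precise by defining a map sending each subproof of $\htr c \der$ to a pair consisting of a subproof of $\der$ (up to substitution) together with a position within the associated tile, and then checking that this map has finite range and that coincidence of values entails substitution-equivalence of the two $\htcl$-subproofs; regularity of $\htr c \der$ follows, since the number of its distinct subproofs is then at most the number of distinct subproofs of $\der$ times the maximal tile size.
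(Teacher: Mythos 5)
Your proposal is correct and takes essentially the same route as the paper: the paper's proof simply observes that each rule of $\der$ is translated to a finite $\htcl$ derivation, so finitely many distinct subproofs of $\der$ yield finitely many distinct subproofs of $\htr c \der$. Your tile-replacement argument, including the careful treatment of fresh symbols and substitution-equivalence at back-edges, is a faithful (and more detailed) elaboration of exactly this observation.
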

\begin{proof}
\anupam{changed $\branch^\derd$ to $\branch'$ here. I found former notation a little confusing because we are \emph{reflecting} into $\derd$, not \emph{mapping} to $\derd$.}
\marianna{Ok, I wanted to make clear to which proof $\branch$ belongs. I found not immediately clear that trace $\tau$ belongs to branch $\branch'$ (would have expected $\tau'$ or something like that). It's a very minor thing can leave as it is now. I can change the appendix accordingly.}
Notice that each rule in $\derd $ is translated to a finite derivation in $\htr c \derd$.
Thus, if $ \mathcal{D} $ has only finitely many distinct subproofs, then also $\tder$ has only finitely many distinct subproofs.
%
%
%
%
%
\end{proof}

\begin{proposition}
\label{prop:ht-trans-pres-prog}
    If $\der$ is progressing, then so is $\htr c \der$.
\end{proposition}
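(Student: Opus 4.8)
The plan is to reduce the claim, as usual, to infinite branches: by \Cref{def:progress_tcl} it suffices to show that every infinite branch of $\htr c \der$ carries a progressing hypertrace. Since each $\lpdltr$ rule is translated (\Cref{dfn:ht-translation}) into a \emph{finite} derivation whose side premisses (the $\mathcal E,\mathcal E'$ and the left premisses of the $\cup$-steps) are closed off finitely by $\wk,\id,\init$, any infinite branch $\branch'$ of $\htr c \der$ must eventually always follow the `spine' premiss of each block, and hence projects to a genuine infinite branch $\branch$ of $\der$. As $\der$ is progressing, $\branch$ carries a progressing thread $\trace$ (\Cref{dfn:pdl-threads-proofs}), i.e.\ one with a smallest infinitely-often-principal formula of the shape $\boxP{\plus\alpha}A$. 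I would then build the required progressing hypertrace $\fhy$ of $\htr c \der$ by tracking the translation of $\trace$.

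Next I would make precise the cedent that $\fhy$ follows. Inspecting $\HT$ and $\mathsf{CT}$, the formula $\boxP{\plus\alpha}A$ translates to the cedent $\str{\cotc{\nf{\ST(\alpha)}}{s}{t}}{\varnothing}$ together with the (disjoint) cedents of $\ct{A}{t}$; the hypertrace $\fhy$ follows the single $\coTC$-cedent. The crucial local fact, read off the translation of the $\boxP{+}$-rule, is that each time $\boxP{\plus\alpha}A$ is principal on $\trace$ the block contains exactly one $\coTC$-step, and it is principal on precisely this $\cotc{\nf{\ST(\alpha)}}{\cdot}{\cdot}$; between two such steps the $\coTC$-cedent is a singleton \emph{side} cedent that passes through the intervening blocks unchanged (up to renaming), since no rule other than $\wk$ alters a side cedent and $\fhy$ never enters a weakened cedent. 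Consequently the $\coTC$-formula is infinitely often principal along $\fhy$ exactly because $\boxP{\plus\alpha}A$ is infinitely often principal along $\trace$.

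It then remains to verify that $\fhy$ is progressing in the sense of \Cref{def:progress_tcl}, i.e.\ that \emph{every} infinite trace along $\fhy$ is progressing — this is where the alternating nature of our criterion bites, and it is the main obstacle. The `principal' $\coTC$-trace itself is unproblematic: in the tail it is always of the form $\cotc{\nf{\ST(\alpha)}}{\cdot}{\cdot}$ and principal infinitely often, hence progressing. The real work is to rule out any other infinite trace living on $\fhy$. Here I would use the analysis of the $\boxP{+}$-block: applying $\coTC$ to $\cotc{\nf{\ST(\alpha)}}{s}{t}$ transiently enlarges the blue cedent to $\str{\nf{\st{s,t}{\alpha}},\,\cotc{\nf{\ST(\alpha)}}{f}{t}}{\varnothing}$, so that $\nf{\st{s,t}{\alpha}}$ (the $A(s,t)$-premiss-formula of the rule) is for an instant an immediate ancestor lying on $\fhy$; but the following $\cup$-step splits this formula off into the side premiss $\mathcal E'$, i.e.\ into a cedent \emph{not} on $\fhy$. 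Hence that formula leaves the hypertrace after finitely many steps, and the only immediate ancestor of the $\coTC$-formula that remains on $\fhy$ is the $\coTC$-formula again. By induction on the tail this forces any infinite trace along $\fhy$ to coincide with the $\coTC$-trace, which is progressing.

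Assembling these observations gives that $\fhy$ is a progressing hypertrace of the arbitrary infinite branch $\branch'$, so $\htr c \der$ is progressing. The points I expect to require the most care are: (i) confirming, rule by rule in \Cref{dfn:ht-translation}, that when $\boxP{\plus\alpha}A$ is merely a side formula of $\trace$ its $\coTC$-cedent is carried along untouched (so that $\fhy$ is genuinely a hypertrace and the $\coTC$-trace is unbroken); (ii) the transient-impurity argument above, ensuring no spurious infinite trace survives on $\fhy$; and (iii) aligning the $\subseteq$-minimality of $\boxP{\plus\alpha}A$ among infinitely-often-principal formulas of $\trace$ with the fact that the payload translation $\ct{A}{t}$ — which may itself contain $\TC$/$\coTC$ formulas — sits on cedents off $\fhy$ and so cannot interfere with progress.
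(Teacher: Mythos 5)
Your proposal is correct and follows essentially the same route as the paper's proof: project each infinite branch of $\htr c \der$ onto a branch of $\der$, lift a progressing $\lpdltr$-thread to the hypertrace of (eventually singleton) $\coTC$-cedents, and observe that this hypertrace carries exactly one infinite trace — the $\coTC$-trace, which is infinitely often principal — because the transient formula $\nf{\st{s,t}{\alpha}}$ introduced by each $\coTC$ step is split off into a side premiss by the subsequent $\cup$ step. The paper states this last point more tersely (the cedent's residue $\Delta_i$ is empty infinitely often, hence only one infinite trace survives), but the mechanism is exactly your transient-impurity argument.
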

\begin{proof}[Proof sketch]
We need to show that
every infinite branch of $ \tder$  has a progressing hypertrace. 
Since the $\HT$ translation is defined stepwise on the individual steps of $\derd$, we can associate to each infinite branch $\branch$ of $\tder$ a unique infinite branch $\branch'$ of $\der$.
Since $\der$ is progressing, let $ \trace= (F_i)_{i < \omega} $ be a progressing thread along $\branch'$. 
By inspecting the rules of $\lpdltr$ (and by definition of progressing thread), for some $k\in \Nat$, each $F_i$ for $i>k$ has the form:
$
\boxP{\alpha_{i,1}}\cdots \boxP{\alpha_{i,n_i}} \boxP{\alpha^+}A, 
$
for some $n_i \geq 0$.
So, for $i>k$, $\htr {d_i} {F_i}$ has the form:

\ 

\begin{adjustbox}{max width = \textwidth}
$
\str{ \nf{\st{\cons{c} ,\cons{d}_{i,1 }}{\alpha_{i,1} } } }{\varnothing} , \dots , \str{\nf{ \st{\cons{d}_{i,{n_i-1}},\cons{d}_{i,n_i}}{\alpha_{i,n_i} }} }{\varnothing},
\blue{\str{ \cotc{\nf{\mathsf{ST}(\alpha)}}{ \cons{d}_{i,{n_i}} }{d_i}  }{\varnothing}},
\ct{A}{\cons{d_i}}
$
\end{adjustbox}

\ 

\noindent By inspection of the $\HT$-translation (Dfn.~\ref{dfn:ht-translation}) whenever $ F_{i+1} $ is an immediate ancestor of $ F_i $ in $\branch'$, 
there is a path from the cedent $\blue{\str{ \cotc{\nf{\mathsf{ST}(\alpha)}}{ \cons{d}_{i+1,{n_{i+1}}} }{d_{i+1}}  }{\varnothing}}$ to the cedent $\blue{\str{ \cotc{\nf{\mathsf{ST}(\alpha)}}{ \cons{d}_{i,{n_i}} }{d_i}  }{\varnothing}}$ in the graph of immediate ancestry along $\branch$.
Thus, since $\tau=(F_i)_{i<\omega}$ is a trace along $\branch'$, we have a (infinite) hypertrace of the form $\htrace_\trace \defsym ( \str{\Delta_i, \cotc{\nf{\mathsf{ST}(\alpha)}}{ \cons{d}_{i,{n_i}} }{d_i}  }{\varnothing} )_{i>k'}  $
along $\branch$.
By construction $\Delta_i = \emptyset $ for infinitely many $i>k'$, and so $\htrace_\trace$ has just one infinite trace.
Moreover, by inspection of the $\boxP+$ step in Dfn.~\ref{dfn:ht-translation}, 
this trace progresses in $\branch$ every time $\tau$ does in $\branch'$, and so progresses infinitely often.
Thus, $\htrace$ is a progressing hypertrace. 
Since the choice of the branch $\branch$ of $\derd $ was arbitrary, we are done.
\end{proof}

\subsection{Putting it all together}
\label{ssec:translation_together}
We can now finally conclude our main simulation theorem:

\begin{proof}
    [Proof of Thm.~\ref{thm:completeness-htcl-pdltr}, sketch]
    Let $A$ be a $\PDLtr$ formula s.t.\ $\models A$.
    By the completeness result for $\lpdltr$, Thm.~\ref{pdl-soundness-completeness}, we have that $\lpdltr \dercyc A$, say by a cyclic proof $\derd$.
    From here we construct the $\htcl$ preproof $\htr c \derd$ which, by Props.~\ref{prop:ht-trans-pres-reg} and \ref{prop:ht-trans-pres-prog}, is in fact a cyclic proof of $\htr c A$.
    Finally, we apply some basic $\vlor, \vlan , \exists, \forall$ steps to obtain a cyclic $\htcl$ proof of $\st c A$.
\end{proof}




 \section{Extension by equality and simulating full $\PDL$}
 \label{sec:pdl}
We now briefly explain how our main results are extended to the `reflexive' version of $\TCL$. 

\subsection{Hypersequential system with equality}
The language of $\hrtcl$ allows further atomic formulas of the form $s=t$ and $s\neq t$.
The calculus $\hrtcl$ extends $\htcl$ by the rules:
\begin{equation}
\label{eq:eq_rules_hrtcl}
\vlinf{\eqrule}{
				}{\violet \sq, \str{t=t, \magenta \Gamma}{\bv}}{\violet \sq, \str{\magenta \Gamma}{\bv}}
				\qquad 
				\vlinf{\coeqrule}{}{\violet \sq, \str{\magenta {\Gamma(s)}, \blue {s \neq t}}{\bv}, \str{\Delta(t)}{\bv} }{\violet \sq, \str{\magenta{\Gamma(s)}, \blue{\Delta(s)}}{\bv}}
\end{equation}

\anupam{I commented the ancestry definitions here for succinctness: we should make them more modular eventually}

The notion of immediate ancestry for formulas and cedents is colour-coded in \eqref{eq:eq_rules_hrtcl} just as we did for $\htcl$ in \Cref{sec:htcl_proofs}.
The resulting notions of (pre)proof, (hyper)trace and progress are as in Dfn.~\ref{def:progress_tcl}. 
Specifically, we have that in rule $ \eqrule $ as typeset in \eqref{eq:eq_rules_hrtcl}, no infinite trace can include formula $t = t$. Moreover, in rule $ \coeqrule $ no infinite trace can include formulas in $ \Delta(t) $, while \emph{all} formulas occurring in $ \Delta(s) $ belong to a trace where $ s \neq t $ belongs.

The simulation of Cohen and Rowe's system $\ltcl$ extends to their reflexive system, $\lrtcl$, by the definition of their operator $\rtc {\lambda x,y.A} st$ $ \defsym$ $ \tc{\lambda x,y.(x=y \vlor A)}st$.
Semantically, it is correct to set $\rtc Ast$ to be $s=t \vlor \tc Ast$, but this encoding does not lift to the Cohen-Rowe rules for $\RTC$.

\subsection{Extending the soundness argument}

Understanding that structures interpret $=$ as true equality, a modular adaptation of the soundness argument for $\htcl$, cf.~Sec.~\ref{sec:soundness_htcl}, yields:

\begin{theorem}[Soundness of $\hrtcl$]
	\label{thm:soundness_rtcl}
	If $\hrtcl \dernwf \sq $ then $  \models  \sq $.
\end{theorem}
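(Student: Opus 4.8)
The plan is to replay the argument of \Cref{sec:soundness_htcl} essentially verbatim, since the soundness proof for $\htcl$ is organised around three modular pillars — the countermodel-branch construction of \Cref{lem:countermodel-branch}, the falsity-propagation analysis of \Cref{lemma:failed_trace}, and the infinite-descent argument inside the proof of \Cref{thm:soundness_tcl} — each of which proceeds by case analysis on the last rule applied. To obtain \Cref{thm:soundness_rtcl} it therefore suffices to supply, for each pillar, the two missing cases corresponding to the rules $\eqrule$ and $\coeqrule$ of \eqref{eq:eq_rules_hrtcl}. The hypothesis that every structure interprets $=$ as true equality enters only through the \emph{local} soundness of these two rules, so the global machinery (hypertraces, canonical assignments $\deltah$, and the distance measure on $\coTC$-formulas) is reused unchanged.

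The rule $\eqrule$ is immediate: since $t=t$ is true in every structure under every interpretation, the conclusion and premiss of $\eqrule$ have logically equivalent formula interpretations, so in \Cref{lem:countermodel-branch} we set $\rho_{i+1}=\rho_i$ and the premiss remains falsified, while in \Cref{lemma:failed_trace} the principal atom $t=t$ can never be the false formula, so any false $F$ lies in the preserved context and we take $F'=F$ (consistently with the stipulation that no infinite trace passes through $t=t$). The rule $\coeqrule$, with conclusion $\sq,\str{\Gamma(s),s\neq t}{\bv},\str{\Delta(t)}{\bv}$ and premiss $\sq,\str{\Gamma(s),\Delta(s)}{\bv}$, is where the equality semantics does real work. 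For the countermodel branch, set $\rho_{i+1}=\rho_i$ and argue contrapositively: if the premiss cedent were satisfied by some assignment to $\bv$, then, as the conclusion cedent $\str{\Gamma(s),s\neq t}{\bv}$ is false, that assignment must force $s=t$; but then $\Delta(s)$ and $\Delta(t)$ have identical truth values, contradicting falsity of the companion cedent $\str{\Delta(t)}{\bv}$, so the premiss is falsified as required. For \Cref{lemma:failed_trace}, note first that $\str{\Delta(t)}{\bv}$ has no immediate ancestor in the premiss and hence cannot lie on an infinite hypertrace, so the only sub-case to treat is when $\fhy$ passes through $\str{\Gamma(s),s\neq t}{\bv}$: if the false $F$ lies in $\Gamma(s)$ we take $F'=F$, and if $F$ is the atom $s\neq t$ then its falsity gives $s=t$, whence (the whole hypersequent being false) $\str{\Delta(t)}{\bv}$ is false and, by $s=t$, some formula of $\Delta(s)$ is false and serves as the required immediate ancestor $F'$.

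Finally, the infinite-descent argument is unaffected: neither $\eqrule$ nor $\coeqrule$ is a $\coTC$-step, so they do not contribute to progress, and along the eventually-$\coTC$ tail of the false trace the tracked formula $\cotc A{s_i}{t_i}$ is always carried in the preserved context $\Gamma$ (the substituted cedent $\Delta$ never lies on the infinite hypertrace), so its associated $\nf A$-distance $n_i$ is left unchanged — exactly as in the $\instrule$ case, using that equal terms have equal denotations. The same additions carry over to the simulation of $\lrtcl$. I expect the main obstacle to be the bookkeeping for $\coeqrule$ in \Cref{lemma:failed_trace}: one must verify that falsity genuinely transfers from $\Delta(t)$ to $\Delta(s)$ under the forced identity $s=t$, and that the splitting and merging of cedents is compatible with the immediate-ancestry graph so that $\Delta(t)$ indeed never supports an infinite hypertrace.
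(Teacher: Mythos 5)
Your proposal is correct and takes essentially the same route as the paper: the paper also proves \Cref{thm:soundness_rtcl} by a modular extension of the $\htcl$ soundness argument, adding cases for $\eqrule$ and $\coeqrule$ to \Cref{lem:countermodel-branch} (with $\rho_{i+1}=\rho_i$ in both cases) and to \Cref{lemma:failed_trace} (the hypertrace cannot enter $\str{\Delta(t)}{\bv}$ since it is infinite; a false $F\in\Gamma(s)$ yields $F'=F$; and a false $s\neq t$ forces $s=t$, transferring falsity from $\Delta(t)$ to some $D(s)\in\Delta(s)$, which is an immediate ancestor of $s\neq t$). The only differences are cosmetic: for $\coeqrule$ in the countermodel-branch lemma you argue contrapositively per assignment where the paper argues directly from the falsity of the two conclusion cedents, and you spell out the (unchanged) infinite-descent step, which the paper leaves implicit.
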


\begin{proof}[Proof sketch]
In the soundness argument for $\htcl$, in Lem.~\ref{lem:countermodel-branch}, we must further consider cases for equality as follows:
    
    \ 
    
\noindent \textbf{$\triangleright$  Case $ (\eqrule ) $}
$$
\vlinf{\eqrule}{}{\sq_{i} \, = \, \qq, \str{t=t, \Gamma}{\bv}}{\sq_{i+1} \, = \, \qq, \str{\Gamma}{\bv}}
$$
By assumption, $ \rho_{i} \not\models \qq $ and $ \rho_{i} \not \models \str{\Gamma, t = t}{\bv} $, i.e., $ \rho_{i} \models \forall \bv (\bigvee \nf{\Gamma} \vlor t \neq t)$. Since $ \freev{t} \cap \bv = \varnothing $, this is equivalent to  $ \rho_{i} \models \forall \bv (\bigvee \nf{\Gamma})$. We set $ \rho_{i+1} = \rho_{i} $ and conclude that $ \rho_{i+1} \not \models \str{\Gamma}{\bv} $. 

\ 

\noindent \textbf{$\triangleright$  Case $ (\neq)$}
$$
\vlinf{\neq}{}{\sq_{i} \,  = \, \qq, \str{\Gamma(s), s \neq t}{\bv}, \str{\Delta(t)}{\bv} }{\sq \, = \, \qq, \str{\Gamma(s), \Delta(s)}{\bv}}
$$
By assumption, $ \rho_{i} \not\models \qq $ and $ \rho_{i} \not\models \str{\Gamma(s), s \neq t}{\bv} $ and $ \rho_{i} \not \models \str{\Delta(t)}{\bv}$. Thus, 
$$ \rho_{i} \models \forall \bv (\bigvee \nf{\Gamma}(s) \vlor s = t) 
\quad 
\text{ and }
\quad 
\rho_{i} \models \forall \bv (\bigvee\nf{\Delta}(t)) $$ 
Set $ \rho_{i+1} = \rho_{i} $. If $ \rho_{i} \models \forall \bv (\bigvee \nf{\Gamma}(s)) $, then $ \rho_{i+1} \models  \forall \bv (\bigvee \nf{\Gamma}(s) \vlor \bigvee \nf{\Delta}(s))   $, and thus $ \rho_{i+1} \not \models \str{\Gamma(s) \vlor \Delta(s)}{\bv} $. 
Otherwise, if $ \rho_{i} \models \forall \bv (s = t) $, we can safely substitute term $ t $ with term $ s $ in the second conjunct, obtaining $ \rho_{i+1} \models \forall \bv (\bigvee \nf{\Delta} (s)) $. Thus, $ \rho_{i+1} \models  \forall \bv (\bigvee \nf{\Gamma}(s) \vlor \bigvee \nf{\Delta}(s))   $, and $ \rho_{i+1} \not \models \str{\Gamma(s) \vlor \Delta(s)}{\bv} $.

For the construction of the `false trace' in Lem.~\ref{lemma:failed_trace}
we add the following cases for equality:

\ 

	\noindent \textbf{$\triangleright$  Case $ (\eqrule) $}.
	Suppose $ \str{\Gamma_i}{\bv_i} = \str{t = t, \Gamma}{\bv}$. By assumption, $ \fint \models \forall \bv (t \neq t \vlor \bigvee \nf{\Gamma}) $ and thus  $ \fint, \delta_\hyp \models t \neq t$ or $\fint, \delta_\hyp \models  \bigvee \nf{\Gamma} $. 
	Since the first disjunct cannot hold, the trace cannot follow the formula $ t =t $. Thus, if $ F = C $ for some $ C $ occurring in $ \Gamma $ and  $ \fint, \delta_\hyp \models \nf{C} $, set $ F' = F $ and conclude that  $ \fint, \delta_\hyp \models \nf{C} $. 
	
	\ 
	
	\noindent \textbf{$\triangleright$  Case $ (\coeqrule) $}.
	The hypertrace $ \hyp $ cannot go through the structure $ \str{\Delta(t)}{\bv} $, because by hypothesis $ \hyp $ is infinite. 
	Thus, $ \str{\Gamma_{i}}{\bv_{i}} = \str{\Gamma(s), s \neq t}{\bv} $. 
	By assumption, we have that either:
		$$
		\fint, \delta_{\hyp} \models \bigvee \nf{\Gamma(s)} 
		\quad 
		\text{ or }
		\quad 
		\fint, \delta_{\hyp} \models s = t
		$$
	If $ F = C(s) $ for some $ C(s) $ in $ \Gamma(s) $ and  $ \fint, \delta_{\hyp} \models \nf{F} $, set $ F' = F $. Otherwise, if $\fint, \delta_{\hyp} \not \models \bigvee \nf{\Gamma(s)} $, then  $\fint, \delta_{\hyp} \models s = t$ and $ F  $ is $ s\neq t $. 
	At every occurrence of rule $ \coeqrule $ all the formulas occurring in $\Delta(s) $ are immediate ancestors of  $s\neq t $. 
	Moreover, by assumption $ \fint \models \forall \bv ( \bigvee \nf{\Delta}(t)) $. By the truth condition associated to $ \forall $ and since $ \bv $ is contained in the domain of $ \delta_{\hyp} $, we have that  $ \fint, \delta_\hyp \models \bigvee \Delta(t) $. Since  $\fint, \delta_{\hyp} \models s = t$, we conclude that $ \fint, \delta_\hyp \models \bigvee \Delta(s) $. Thus, there exists a $ D(s) \in \Delta(s) $ such that $ \fint, \delta_\hyp \models \bigvee D(s)$. Set $ F' = D(s) $. 
\end{proof}


\subsection{Completeness for $\PDL$ (with tests)}

Turning to the modal setting, $\PDL$ may be defined as the extension of $\PDLtr$ by including a program $A?$ for each formula $A$.
Semantically, we have $\interp{\mathcal M}{A?} = \{ (v,v) : \mathcal M,v\models A \}$.
From here we may define $\epsilon \defsym \top?$ and $\alpha^* \defsym (\epsilon \cup \alpha)^+$.
Again, while it is semantically correct to set $\alpha^* = \epsilon \cup \alpha^+$, this encoding does not lift to the standard sequent rules for $*$.

The system $\lpdl$ is obtained from $\lpdltr$ by including the rules:
\[
\vliinf{\diaP ?}{}{\magenta \Gamma, \blue{\diaP{A?}B}}{\magenta\Gamma, \blue A}{\magenta \Gamma, \blue B}
\qquad
\vlinf{\boxP ?}{}{\magenta \Gamma , \blue{\boxP{A?}B}}{\magenta \Gamma, \blue{\bar A}, \blue B}
\]
The notion of ancestry for formulas is defined as for $\lpdltr$ (Dfn.~\ref{def:ancestry_fml_pdltr}) and colour-coded in the rules. 
The resulting notions of (pre)proof, thread and progress are as in Dfn.~\ref{dfn:pdl-threads-proofs}. 
We write $\lpdl \dercyc A$ meaning that there is a cyclic proof of $A$ in $\lpdl$. 
Just like for $\lpdltr$, a standard encoding of $\lpdl$ into the $\mu$-calculus yields its soundness and completeness, thanks to known sequent systems for the latter, cf.~\cite{studer2008proof,NIWINSKI1996,lange2003games}. 
\marianna{added ref to Lange}

\begin{theorem}[Soundness and completeness, \cite{lange2003games}]
    Let $A$ be a $\PDL$ formula. $\models A$ iff $\lpdl \dercyc A $.
\end{theorem}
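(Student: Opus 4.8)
The plan is to mirror the treatment of the identity-free system $\lpdltr$ (cf.~Thm.~\ref{pdl-soundness-completeness}), accounting for the two new ingredients: the test rules $\diaP?,\boxP?$ and the reflexive star $\alpha^* \defsym (\epsilon \union \alpha)^+$ with $\epsilon \defsym \top?$.

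For soundness, namely $\models A$ whenever $\lpdl \dercyc A$, I would give a standard infinite-descent argument. First I would check that every rule of $\lpdl$ is locally sound read top-down; in particular the new rules $\diaP?$ and $\boxP?$ are sound since $\interp{\M}{A?} = \{(v,v) : \M,v\models A\}$ reduces $\diaP{A?}B$ and $\boxP{A?}B$ to the propositional combinations $A \vlan B$ and $\bar A \vlor B$. Then, assuming a putative countermodel to the endsequent, I would descend along a branch always choosing a false premiss, and use a progressing thread on some $\boxP{\alpha^+}A$ to extract a strictly decreasing ordinal measure (the `$\alpha$-distance' witnessing failure of the greatest-fixed-point unfolding), contradicting well-foundedness. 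Alternatively, and more economically, soundness follows by composing the simulation of $\lpdl$ into $\hrtcl$ over the standard translation with Thm.~\ref{thm:soundness_rtcl}; this is the route I would actually take, as it reuses the substantial soundness argument already developed for the hypersequent calculus.

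For completeness, $\lpdl \dercyc A$ whenever $\models A$, I would pass through the guarded modal $\mu$-calculus $L_\mu$. The key step is the well-known embedding $(\cdot)^\sharp$ of $\PDL$ into $L_\mu$, translating $\diaP{\alpha^+}A$ to a least fixed point and $\boxP{\alpha^+}A$ to a greatest fixed point of the induced program-unfolding, and interpreting tests propositionally; one checks $\models A$ iff $\models A^\sharp$ by a routine induction together with the semantics of Dfn.~\ref{def:semantics_pdl}. Using a sound and complete cyclic (or non-wellfounded) proof system for $L_\mu$, e.g.\ \cite{NIWINSKI1996,studer2008proof}, one obtains a cyclic $\mu$-proof of $A^\sharp$. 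The final step is \emph{proof reflection}: because $(\cdot)^\sharp$ is defined step-wise and each $\lpdl$ rule corresponds to a fixed block of $L_\mu$ inferences (with the $\nu$-unfolding rule realising $\boxP{\alpha^+}$), any $\mu$-proof of $A^\sharp$ can be read back as an $\lpdl$ preproof of $A$, and the $\nu$-thread progress condition of $L_\mu$ transfers exactly to the $\boxP{\alpha^+}$-thread progress condition of Dfn.~\ref{dfn:pdl-threads-proofs}.

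The main obstacle I anticipate is making the reflection faithful in the presence of tests and of the reflexive star: since $\alpha^* = (\epsilon \union \alpha)^+$ and $\epsilon = \top?$, unfolding $\boxP{\alpha^*}$ mixes a genuine $\boxP{\alpha^+}$-thread with an `$\epsilon$-component' that must not be allowed to carry spurious progress. I would handle this by working with guarded $\mu$-formulas throughout, and by verifying that the block of $L_\mu$ inferences assigned to each $\lpdl$ rule preserves both guardedness and the correspondence between $\nu$-variables and $\boxP{\alpha^+}$-formulas, so that progressing $\mu$-threads are in progress-preserving bijective correspondence with progressing $\lpdl$-threads. As the paper indicates, once the embedding and reflection are set up all of this is essentially standard, so no proof-theoretic machinery beyond that already used for $\lpdltr$ should be required.
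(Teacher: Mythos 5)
Your proposal is correct and takes essentially the same route as the paper, which likewise does not prove this theorem from scratch but appeals to the standard encoding of $\lpdl$ into the (guarded) modal $\mu$-calculus, the known cyclic completeness results for the latter \cite{NIWINSKI1996,studer2008proof,lange2003games}, and a step-wise proof-reflection argument transferring $\nu$-thread progress to $\boxP{\alpha^+}$-thread progress (including your care about tests and the defined star $\alpha^* \defsym (\epsilon \union \alpha)^+$). Your preferred soundness route, composing the proof-to-proof simulation of $\lpdl$ into $\hrtcl$ with Thm.~\ref{thm:soundness_rtcl}, is exactly the ``overkill'' route the paper itself notes for $\lpdltr$, and it is non-circular since it uses only the simulation of proofs (together with the test-extended version of Thm.~\ref{thm:standard_t_pdl}), not the completeness statement being proved.
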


Again, a modular adaptation of the simulation of $\lpdltr$ by $\htcl$, cf.~Sec.~\ref{sec:htcl_simulates_pdltr}, yields:
\begin{theorem}
    [Completeness for $\PDL$]
    \label{thm:completeness-hrtcl-pdl}
    Let $A$ be a $\PDL$ formula. If $\models A$ then $\hrtcl \dercyc \st c A$.
\end{theorem}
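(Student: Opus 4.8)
The plan is to mirror the proof of Thm~\ref{thm:completeness-htcl-pdltr} line for line, replacing the completeness input for $\lpdltr$ by that for $\lpdl$ (the soundness and completeness theorem above, \cite{lange2003games}) and extending the $\HT$-translation machinery of Sec.~\ref{sec:htcl_simulates_pdltr} to the two test rules. Concretely, given $\models A$, completeness of $\lpdl$ yields a cyclic proof $\derd$ with $\lpdl \dercyc A$; I would then form the preproof $\htr c \derd$, show it regular and progressing, and finally convert $\ct A c$ into $\st c A$. For this I must: (i) extend the standard translation (Dfn~\ref{dfn:stand-trans-pdl-rtcl}) by $\st{x,y}{A?} \defsym (x=y) \vlan \st x A$ — so that, via $\epsilon \defsym \top?$ and $\alpha^* \defsym (\epsilon \cup \alpha)^+$, the Kleene star receives its reflexive-transitive-closure reading $\st{x,y}{\alpha^*} = \rtc{\ST(\alpha)}{x}{y}$; (ii) extend the hypersequent and cedent translations (Dfn~\ref{def:hyp_transl}); (iii) add cases for $\diaP{A?}$ and $\boxP{A?}$ to the $\HT$-translation (Dfn~\ref{dfn:ht-translation}); and (iv) check that Props~\ref{prop:ht_st_tcl}, \ref{prop:ht-trans-pres-reg} and~\ref{prop:ht-trans-pres-prog} survive these extensions. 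All other rules of $\lpdl$ are inherited from $\lpdltr$, so their treatment is unchanged.

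The diamond-test rule is the easy case, and it dictates the translation. Since $\st{x,y}{A?} = (x=y) \vlan \st x A$, we have $\ct{\diaP{A?}B}{c} = \str{(c=y \vlan \st c A), \diat B y}{\bvvar B, y}$. I would simulate $\diaP{A?}$ bottom-up by a $\vlan$ step splitting $c=y \vlan \st c A$, an $\instrule$ step instantiating $y := c$, an $\eqrule$ step discharging the resulting $c=c$, and finally a $\cup$ step splitting the cedent $\str{\st c A, \diat B c}{\bvvar B}$ into the two branches $\str{\st c A}{\varnothing}$ and $\str{\diat B c}{\bvvar B}$, which by Prop~\ref{prop:ht_st_tcl} connect to the translated premisses $\ct A c$ and $\ct B c$. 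This reproduces exactly the two premisses $\Gamma, A$ and $\Gamma, B$ of $\diaP{A?}$.

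The box-test rule is the main obstacle. Following the generic box pattern and Dfn~\ref{dfn:rtcl-duality}, the program-negation is a disjunction, $\nf{\st{c,d}{A?}} = (c \neq d) \vlor \st c{\bar A}$, and $\ct{\boxP{A?}B}{c}$ pairs it with $\ct B d$ for a fresh $d$. The difficulty is that a test does not advance the point of evaluation, whereas the premiss $\Gamma, \bar A, B$ of $\boxP{A?}$ keeps $B$ at the current point $c$; I must therefore identify the fresh successor $d$ with $c$. The key observation making this tractable is that exactly one cedent of $\ct B d$ contains $d$ (all subsequent cedents of the translation use freshly chosen symbols), so that a single application of $\coeqrule$ — with $s = c$, $t = d$, and that unique cedent playing the role of $\Delta(d)$ — substitutes $d \mapsto c$ and yields the cedents of $\ct B c$, at the cost of a discharged inequality $c \neq d$, while $\str{\st c{\bar A}}{\varnothing}$ is retained and then matched to $\ct{\bar A}c$ via Prop~\ref{prop:ht_st_tcl}. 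The result is precisely the translated premiss $\ct{\bar A}c, \ct B c$. The delicate points — reconciling the disjunctive program-negation $(c \neq d) \vlor \st c{\bar A}$ with the bare inequality that $\coeqrule$ consumes, and verifying that the chosen translation still \emph{bundles} to $\st c{\boxP{A?}B}$ in the sense of Prop~\ref{prop:ht_st_tcl} — are exactly where the equality rules $\eqrule, \coeqrule$ of $\hrtcl$ and the uniqueness of the $d$-cedent do the work, and where I expect the argument to demand the most care.

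Finally, regularity and progress should transfer essentially verbatim from Props~\ref{prop:ht-trans-pres-reg} and~\ref{prop:ht-trans-pres-prog}: each test rule is translated to a \emph{finite} derivation that introduces no new $\coTC$-formula and no back-edge, so it neither creates nor destroys infinite (hyper)traces, and the sole source of progress remains the $\boxP{\alpha^+}$ step, handled exactly as in Sec.~\ref{sec:htcl_simulates_pdltr}. Hence $\htr c \derd$ is a cyclic $\hrtcl$ proof of $\ct A c$, and a final block of $\vlor, \vlan, \exists, \forall, \eqrule$ steps (Prop~\ref{prop:ht_st_tcl}) converts it into a cyclic $\hrtcl$ proof of $\st c A$, as required.
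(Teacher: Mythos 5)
Your overall architecture coincides with the paper's: invoke completeness of $\lpdl$, extend the standard translation by $\st{x,y}{A?} \defsym (x=y)\vlan \st x A$, add $\HT$-translation cases for the test rules, and let regularity and progress transfer unchanged since the new cases are finite and introduce no $\coTC$-formula. Your $\diaP{A?}$ case is essentially the paper's own (the same $\vlan$, $\instrule$, $\eqrule$, $\cup$ steps, merely in a different order, closed off by Prop.~\ref{prop:ht_st_tcl}). The gap is in the $\boxP{A?}$ case. You aim to simulate the combined rule with premiss $\Gamma, \bar A, B$, retaining $\str{\st{c}{\bar A}}{\varnothing}$ while relocating $\ct{B}{\cons{d}}$ to $c$ via $\coeqrule$. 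But for $\coeqrule$ to fire, the inequality $c \neq d$ must occur as a \emph{formula of a cedent}, whereas in $\ct{\boxP{A?}B}{\cons{c}}$ it occurs only under the disjunction $(c\neq d)\vlor \st{c}{\bar A}$. The only rule that decomposes a disjunction inside a cedent is the $\vlor$ rule of Fig.~\ref{fig:rules:htcl}, and that is a \emph{choice} rule ($i\in\{0,1\}$): it discards the other disjunct. Hence you can expose the bare inequality only at the price of losing $\st{c}{\bar A}$, and vice versa; there is no contraction, no rule splitting a disjunction into two cedents, and $\reset$ requires a complementary pair that is not available here. So the step you describe --- ``discharge $c\neq d$ while $\str{\st{c}{\bar A}}{\varnothing}$ is retained'' --- cannot be executed in $\hrtcl$ as defined; this is exactly the point you flagged as demanding the most care, and it is not a matter of care but of a missing rule.

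The paper sidesteps this by simulating the box-test as two \emph{unary projection} steps, $\boxP{?}_0$ with premiss $\Gamma,\bar A$ and $\boxP{?}_1$ with premiss $\Gamma, B$. Each projection needs only one disjunct of $(c\neq d)\vlor\st{c}{\bar A}$, so a single $\vlor$ choice suffices: the $\bar A$-projection then weakens away $\ct{B}{\cons{d}}$, while the $B$-projection extracts the bare inequality and applies $\coeqrule$ exactly as you intended (using, as you correctly observe, that precisely one cedent of $\ct{B}{\cons{d}}$ mentions $d$). If one insists on the combined premiss $\Gamma,\bar A, B$ --- which is indeed how $\boxP{?}$ is typeset in the definition of $\lpdl$ --- then your simulation would require the invertible, cedent-splitting reading of $\vlor$ (conclusion $\str{\Gamma, A\vlor B}{\bv}$, premiss $\str{\Gamma,A}{\bv}, \str{\Gamma,B}{\bv}$, the form that appears in the paper's soundness argument, Lem.~\ref{lem:countermodel-branch}), not the choice rule of Fig.~\ref{fig:rules:htcl}; with the rules as officially given, your $\boxP{A?}$ case does not go through.
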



\begin{proof}[Proof sketch]
For the Simulation Theorem, Thm.~\ref{thm:completeness-hrtcl-pdl}, we must add the following cases for the test rules:

	\ 
	
	\begin{adjustbox}{max width = \textwidth}
	$
	\vlderivation{
		\vliin{\diaP{?}}{}{\Gamma, \Delta, \diaP{\test{A}}B  }{
		\vlhy{\Gamma, A  }
		}{
		\vlhy{\Delta, B}  
		}
	} 
	\quad \rightsquigarrow \quad 
	\vlderivation{
	\vlid{}{}{\strg, \strd, \ct{\diaP{\test{A}}B }{\cons{c}} }{
	\vlin{\vlan}{}{\strg,\strd, \str{ \cons{c}= y \vlan \st{\cons{c}}{A} ,\diat{B}{y} }{\bvvar{B}, y} }{
	\vliin{\cup}{}{ \strg, \strd, \str{ \cons{c}= y , \st{\cons{c}}{A} ,\diat{B}{y} }{\bvvar{B}, y} }{
	\vliq{\doubleline}{}{ \strg, \str{  \st{\cons{c}}{A}  }{\varnothing} }{
	\vlhy{\strg, \ct{A}{\cons{c}}} 
	}
	}{
	\vlin{\instrule}{}{ \strd, \str{ \cons{c}= y ,\diat{y}{B} }{\bvvar{B}, y} }{
	\vlin{=}{}{\strd, \str{ \cons{c}= \cons{c} ,\diat{B}{x}}{\bvvar{B}} }{
	\vliq{\doubleline}{}{\strd, \str{ \diat{B}{\cons{c}}}{\bvvar{B}} }{
	\vlhy{\strd,\ct{B}{\cons{c}}}
	}
	}
	}
	}
	}
	}
	}
	$
	\end{adjustbox}

\[
\vlderivation{
	\vlin{\boxP{?}_0}{}{\Gamma, \boxP{\test{A}}B  }{
		\vlhy{\Gamma, \nf{A} }
	}
} 
\quad \rightsquigarrow \quad 
\vlderivation{
	\vlid{}{}{\strg, \ct{\boxP{\test{A}}B}{\cons{c}} }{
	\vlin{\vlor}{}{\strg, \str{ \cons{c} \neq \cons{d} \vlor \nf{\st{\cons{c}}{A}} }{\varnothing},
	\ct{B}{\cons{d}}
	}{
	\vlin{\wk}{}{\strg, \str{ \nf{\st{\cons{c}}{A}} }{\varnothing},
	\ct{B}{\cons{d}}
	}{
					\vliq{\doubleline}{}{\strg, \str{ \nf{\st{\cons{c}}{A}} }{\varnothing}}{
						\vlhy{\strg, \ct{\nf A}{\cons{c}} }
					}
				}
			}
		}
	}
\]

\vspace{0.5cm}

\[
\vlderivation{
	\vlin{\boxP{?}_1}{}{\Gamma, \boxP{\test{A}}B  }{
		\vlhy{\Gamma,  B }
	}
} 
\, \rightsquigarrow \, 
\vlderivation{
	\vlid{}{}{\strg, \ct{\boxP{\test{A}}B}{\cons{c}} }{
	\vlin{\vlor}{}{\strg, \str{ \cons{c} \neq \cons{d} \vlor \nf{\st{\cons{c}}{A}} }{\varnothing}, \ct{B}{\cons{d}} }{
	\vlin{\neq}{}{\strg, \str{ \cons{c} \neq \cons{d}}{\varnothing}, \ct{B}{\cons{d}}  }{
	\vlhy{\strg, \ct{B}{\cons{c}} }
	}
	}
	}
	}
\]

\end{proof}

\section{Conclusions}
\label{sec:conclusions}
\anupam{vspace hack here}
In this work we proposed a novel cyclic system $\htcl$ for Transitive Closure Logic ($\TCL$) based on a form of hypersequents.
We showed a soundness theorem for standard semantics, requiring an argument bespoke to our hypersequents.
Our system is cut-free, rendering it suitable for automated reasoning via proof search.
We showcased its expressivity by demonstrating completeness for $\PDL$, over the standard translation.
In particular, we demonstrated formally that such expressivity is not available in the previously proposed system $\ltcl$ of Cohen and Rowe (Thm.~\ref{thm:incompleteness}).
Our system $\htcl$ locally simulates $\ltcl$ too (Thm.~\ref{simulation-of-cohen-rowe}).

As far as we know, $\htcl$ is the first cyclic system employing a form of \emph{deep inference} resembling \emph{alternation} in automata theory, e.g.\ wrt.\ proof checking, cf.~Prop.~\ref{prop:checking-progress-htcl-automata}.
It would be interesting to investigate the structural proof theory that emerges from our notion of hypersequent.
As hinted at in Exs.~\ref{fixed-point-identity} and \ref{ex:transitivity}, $\htcl$ exhibits more liberal rule permutations than usual sequents, so we expect their \emph{focussing} and \emph{cut-elimination} behaviours to similarly be richer, cf.~\cite{miller2015focused,MarinMillerVolpe16:emulating-modal-proof-systems}.

Finally, our work bridges the cyclic proof theories of (identity-free) $\PDL$ and (reflexive) $\TCL$.
With increasing interest in both modal and predicate cyclic proof theory, it would be interesting to further develop such correspondences.


%
%

\bibliographystyle{plain}
\bibliography{das_girlando_cyclic2022.bib}
 
 
\end{document}